\newcommand{\Ebb}{\mathbb{E}}
\newcommand{\Pbb}{\mathbb{P}}
\newcommand{\Acal}{\mathcal{A}}
\newcommand{\Xcal}{\mathcal{X}}
\newcommand{\Wcal}{\mathcal{W}}
\def\ie{\textit{i.e.}\@\xspace}
\newtheorem{theorem}{Theorem}[section]
\newtheorem{lemma}[theorem]{Lemma}
\newtheorem{corollary}[theorem]{Corollary}
\theoremstyle{definition}
\newtheorem{definition}[theorem]{Definition}
\pgfplotsset{compat=1.18}
\begin{document}

\title{Universal Discrete Filtering with Lookahead or Delay }

\author{Pumiao Yan, Jiwon Jeong, Naomi Sagan and Tsachy Weissman

\thanks{Pumiao Yan and Jiwon Jeong are co-first authors.}
\thanks{Pumiao Yan, Jiwon Jeong, Naomi Sagan and Tsachy Weissman are with the Department of Electrical Engineering, Stanford University, CA 94305 USA (email: pumiaoy@stanford.edu; jeongjw@stanford.edu; nsagan@stanford.edu; tsachy@stanford.edu)}

}



\markboth{IEEE Transactions on Information Theory, Submitted for Review}%
{Shell \MakeLowercase{\textit{et al.}}: A Sample Article Using IEEEtran.cls for IEEE Journals}


\maketitle

\begin{abstract}
We consider the universal discrete filtering problem, where an input sequence generated by an unknown source passes through a discrete memoryless channel, and the goal is to estimate its components based on the output sequence, with limited lookahead or delay. 
We propose and establish the universality of a family of schemes for this setting.
These schemes are induced by universal Sequential Probability Assignments (SPAs), and inherit their computational properties. We show that the schemes induced by LZ78 are practically implementable and well-suited for scenarios with limited computational resources and latency constraints. 
In passing, we use some of the intermediate results to obtain upper and lower bounds that appear to be new, in the purely Bayesian setting, on the optimal filtering performance in terms, respectively, of the mutual information between the noise-free and noisy sequence, and the entropy of the noise-free sequence causally conditioned on the noisy one. 
\end{abstract}

\begin{IEEEkeywords}
Discrete filtering, denoising, estimation, universality, sequential probability assignment, LZ78
\end{IEEEkeywords}

\section{Introduction}

In the discrete filtering\footnote{Denoising generally refers to an estimation problem in a fully non-causal setting, where the entire data sequence is available for estimating the noise-free data at each time step. Filtering, on the other hand, involves causal estimation, performed ``on the fly'' with potential delay or lookahead. In this paper, we use the term ``filtering'' to encompass all problems that estimate noise-free from noisy data.} problem, the objective is to estimate the underlying noise-free data from its noisy counterpart, which has been corrupted by a discrete memoryless channel, both having a finite alphabet. When the distribution of the noise-free data and the statistics of the channel are known, it is possible to derive the optimum Bayesian scheme. However, our focus is on universal schemes that function effectively without prior knowledge of the noise-free data distribution, which is generally unavailable in practice.

Universal estimation of discrete data has been extensively explored. From prediction, where the objective is to estimate the next symbol based on the past sequence \cite{feder1992universal, merhav1998universal, weinberger2002delayed, weissman2003competitive, cesabianchi1999prediction} to noisy prediction problems corresponding to filtering with a delay of one \cite{weissman2001universal, weissman2004universal}. Other related settings include fully non-causal schemes that estimate the noise-free data using the entire noisy data \cite{weissman2005universal,1715526,1405281}, and causal filters that rely only on the noise-free data available up to the current time step \cite{weissman2007Filtering, moon2009universal, 4439859}. These and related problems have been considered in many parts of the literature beyond information theory as well. We refer to   \cite{merhav1998universal, cesabianchi1999prediction, weissman2005universal} and references therein for a more extensive account.  

Our main contribution lies in extending the universal discrete filtering framework to accommodate arbitrary fixed lookahead and delay. This extension is applicable to a wide range of real-world scenarios. Given any universal sequential probability assignment (SPA) and any discrete memoryless channel with an invertible channel matrix, our scheme can filter sequences from stochastic sources and be guaranteed to asymptotically (in the limit of large amounts of data) attain optimum performance. Notably, with computationally efficient SPAs, such as those based on LZ78, our filter can be deployed in low-computation environments.

We define the key elements used throughout the paper in Section~\ref{problem_setting}.  We derive the expressions for the Bayes optimal filters given the true distributions and establish bounds on the excess estimation loss when the filter is optimized for a distribution other than the true one in Section~\ref{bounding_excess_loss}. In passing, in Subsection \ref{subsec: mutual information bounds}, we use these bounds to establish what appear to be new upper bounds in the purely Bayesian setting on the optimal filtering performance in terms of the mutual information between the noise-free and noisy sequence.  We also derive lower bounds in terms of the entropy of the noise-free process causally conditioned \cite{6555871} on the noisy one. 
In Section~\ref{universal_filtering_schemes}, we discuss the tightness of these bounds and use them to derive universal filters for three settings: causal, delayed, and non-causal with lookahead. Section~\ref{implementation} is dedicated to implementation aspects pertaining to these universal filters, followed by experimental results using universal filters derived specifically from LZ78-based SPAs in Section~\ref{experiment}. For readability, the main body of the paper presents the key results along with proof sketches, with full proofs deferred to the Appendix. We conclude in Section \ref{sec: conclusions} with a summary of our main contributions and future directions. 

\section{Problem Setting} \label{problem_setting}
\subsection{Notation and Definitions}
\begin{itemize}

    \item{For random variable $X$ taking values in a finite alphabet $\Xcal$, $P_X$ denotes the $|\Xcal|$-vector of probabilities $\Pbb_X(\cdot)$. Similarly, for random variable $Y$, $P_{X|Y=y}$ denotes the $|\Xcal|$-vector of probabilities $\Pbb_{X|Y}(\cdot|Y=y)$. $P_{X|Y}$ is a random column vector for PMF of $X$, where the randomness comes from $Y$.}

    \item{We denote the alphabet of the noise-free signal as $\Acal_X$, that of the noisy signal as $\Acal_Z$, and that of the filtered signal as $\Acal_{\hat{X}}$. }
\end{itemize}

We define the core concepts underlying this paper: 
\tikzstyle{block} = [draw, fill=white, rectangle, 
    minimum height=3em, minimum width=6em]

\tikzstyle{input} = [coordinate]
\tikzstyle{output} = [coordinate]
\tikzstyle{pinstyle} = [pin edge={to-,thin,black}]
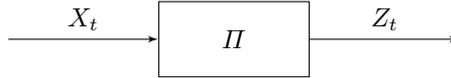
\begin{figure}[h]
    \centering
    \begin{tikzpicture}[auto, node distance=2cm,>=latex']
        \node[draw, rectangle, minimum width=2cm, minimum height=1cm] (block) {$\Pi$};
        \draw[->] (-3,0) -- (block.west) node[midway, above] {$X_t$};  
        \draw[->] (block.east) -- (3,0) node[midway, above] {$Z_t$};
    \end{tikzpicture}
    \caption{A memoryless noisy channel}
    \label{fig:noisy_channel}
\end{figure}

For this work, we model the noise process as a discrete memoryless channel, as shown in Figure \ref{fig:noisy_channel}. We aim to optimally estimate $X^n = (X_1, \ldots, X_n)$, given the noisy signal $Z^n$. The noisy channel is characterized by a channel matrix $\Pi(x, z) \coloneqq \mathbb{P}\left(Z_t=z|X_t=x\right)$ where $x$ indexes the rows and $z$ indexes the columns, and we assume that $\Pi$ is invertible\footnote{This is a standard and benign assumption aligning with previous literature \cite{weissman2005universal}}. We define a loss function $\mathrm{\Lambda}:\Acal_X \times \Acal_{\hat{X}} \rightarrow [0,\infty)$ and its matrix representation $\mathrm{\Lambda}=\{\mathrm{\Lambda}(i,j)\}_{i\in\Acal_X, j\in\Acal_{\hat{X}}}$, where $\mathrm{\Lambda}(i,j)$ represents the loss if the symbol $i$ is estimated as the symbol $j$. $\mathrm{\Lambda}_\text{max}$ denotes $\max_{i\in\Acal_X, j\in\Acal_{\hat{X}}} \mathrm{\Lambda}(i,j)$, and $\lambda_k$ denotes the $k$-th column of $\mathrm{\Lambda}$. 

    

\subsection{Properties}
We can obtain the expression for marginal probabilities of $X_t$ using the following steps. By the law of total probability,
\begin{equation}
    \Pbb\left(Z_t=z\right) = \sum_{x \in \Acal_X} \Pbb\left(Z_t=z|X_t=x\right) \Pbb\left(X_t=x\right).
\end{equation}
This can be formulated as the matrix-vector product
\begin{equation}
    P_{Z_t} = \Pi^\top P_{X_t} \iff P_{X_t} = \Pi^{-\top} P_{Z_t}.
\end{equation}

Using this expression for $P_{X_t}$, we can derive the conditional probabilities of $X_t$ given $Z_t$. By Bayes' theorem,
\begin{equation}
    \Pbb\left(X_t=x|Z_t=z\right) = \frac{\Pbb\left(Z_t=z|X_t=x\right) \Pbb\left(X_t=x\right)}{\Pbb\left(Z_t=z\right)} = \frac{\Pi(x, z) \Pbb\left(X_t=x\right)}{\Pbb\left(Z_t=z\right)}.
\end{equation}
In matrix-vector form,
\begin{equation} \label{conditional_probability}
    P_{X_t|Z_t=z} = \frac{\Pi(\cdot, z) \odot \left(\Pi^{-\top} P_{Z_t}\right)}{\Pbb\left(Z_t=z\right)} \triangleq F\left(P_{Z_t}, \Pi, z\right),
\end{equation}
where $\odot$ represents element-wise multiplication (\ie, Hadamard product or Schur product).

\begin{lemma}[Conditional Probabilities of $X_t$ given the sequence $Z^t$]
\label{conditionalcausal}
For all $1 \leq t \leq n$,
    \[P_{X_t|Z^t=z^t} = F\left(P_{Z_t|z^{t-1}}, \Pi, z_t\right),\]
and
\[P_{X_t|Z^n=z^n} = F\left(P_{Z_t|z^{t-1}, z_{t+1}^n}, \Pi, z_t\right).\]
\end{lemma}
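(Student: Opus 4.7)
The plan is to derive both identities by running the scalar Bayes derivation that leads to equation~(4) inside a conditional probability measure. Let $W$ denote the side information: $W = \{Z^{t-1}=z^{t-1}\}$ for the first identity and $W = \{Z^{t-1}=z^{t-1},\,Z_{t+1}^n=z_{t+1}^n\}$ for the second. The goal will be to show that, in either case,
\[
P_{X_t\mid Z_t=z_t,\,W} \;=\; F\!\left(P_{Z_t\mid W},\,\Pi,\,z_t\right),
\]
which is exactly the content of the lemma.

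The first step is a channel-preservation lemma: under the discrete memoryless channel,
\[
\Pbb(Z_t=z \mid X_t=x,\, W) \;=\; \Pi(x,z).
\]
To justify this I would start from the defining property of the DMC, namely that the noise at time $t$ is independent of $X^n\setminus\{X_t\}$ and of the channel noises at other times, so that $\Pbb(Z_t=z\mid X^n=x^n,\,Z^n\setminus\{Z_t\}) = \Pi(x_t,z)$. Since the right-hand side depends on $X^n$ only through $x_t$, the tower property lets us marginalize the other coordinates of $X^n$ against any conditional distribution, yielding $\Pbb(Z_t=z\mid X_t=x_t,\,Z^n\setminus\{Z_t\}) = \Pi(x_t,z)$; specializing the conditioning event to $W$ (which is a sub-$\sigma$-algebra of $Z^n\setminus\{Z_t\}$) gives the claim.

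With channel preservation in hand, I would simply rerun the derivation of equations (1)–(4) inside the conditional measure $\Pbb(\cdot\mid W)$. The law of total probability gives
\[
\Pbb(Z_t=z\mid W) \;=\; \sum_{x\in\Acal_X} \Pi(x,z)\,\Pbb(X_t=x\mid W),
\]
i.e.\ $P_{Z_t\mid W} = \Pi^\top P_{X_t\mid W}$, so by invertibility of $\Pi$ we get $P_{X_t\mid W} = \Pi^{-\top}P_{Z_t\mid W}$. Applying Bayes' rule inside the same conditional measure yields
\[
\Pbb(X_t=x\mid Z_t=z_t,\,W) \;=\; \frac{\Pi(x,z_t)\,\Pbb(X_t=x\mid W)}{\Pbb(Z_t=z_t\mid W)},
\]
which in vector form is precisely $F(P_{Z_t\mid W},\Pi,z_t)$. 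Substituting the two choices of $W$ recovers both claimed identities.

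The main (and only) obstacle is the channel-preservation step: it is tempting to invoke ``memorylessness'' as if it automatically gave $Z_t\perp Z^{t-1}\mid X_t$, but this actually requires marginalizing out the non-$t$ coordinates of $X^n$ using the tower property as above. Everything else is bookkeeping that re-uses the derivation already carried out in the unconditional case.
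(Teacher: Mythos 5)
Your proposal is correct and follows essentially the same route as the paper's proof: Bayes' rule inside the conditional measure, the memorylessness of the channel to replace $\Pbb(Z_t=z_t\mid X_t=x,W)$ by $\Pi(x,z_t)$, and the law of total probability plus invertibility of $\Pi$ to express $P_{X_t\mid W}$ as $\Pi^{-\top}P_{Z_t\mid W}$. Your explicit tower-property justification of the channel-preservation step is a slightly more careful rendering of what the paper simply asserts as conditional independence of $Z_t$ from the other $Z$'s given $X_t$.
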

We can extend \eqref{conditional_probability} to $P_{X_t|Z^t=z^t}$ or $P_{X_t|Z^n=z^n}$ by plugging in $P_{Z_t|Z^{t-1}=z^{t-1}}$ or $P_{Z_t|Z^{t-1}=z^{t-1}, Z_{t+1}^n=z_{t+1}^n}$, respectively, for $P_{Z_t}$ in the equation $P_{X_t|Z_t=z} = F\left(P_{Z_t}, \Pi, z\right)$. This is then followed by applying Bayes theorem and the conditional independence between $Z_t$ and other $Z$'s given $X_t$, which is due to the memorylessness of the channel. 

\section{Bounding Excess Loss} \label{bounding_excess_loss}
In the context of estimation, the optimal decision minimizing the expected loss with respect to a given loss function $\mathrm{\Lambda}(\cdot,\cdot)$ and a distribution $P_X$ is referred to as the \textbf{Bayes response}. The action space comprises estimates of the random variable $X\sim P_X$, and the optimal action/estimate is the \textbf{Bayes estimate} of $X$ or the \textbf{Bayes response} to $P_X$.

\begin{definition}[Bayes estimator/response]
The Bayes estimator of $X$ under the distribution $P_X$ and loss function $\mathrm{\Lambda}(\cdot,\cdot)$ is given by:
\[\hat{X}_B\left(P_X\right) = \arg \min_{\hat{x}} \Ebb[\mathrm{\Lambda}(X,\hat{x})] = \arg \min_{\hat{x}} \sum_{x\in\Acal_X} \mathrm{\Lambda}(x, \hat{x}) P_X(x) = \arg \min_{\hat{x}} \lambda_{\hat{x}}^T P_X.\]
\end{definition}
Note that $\hat{X}_B\left(P_X\right)$ can be generalized to $\hat{X}_B(\mathbf{v})$ with arbitrary vector $\mathbf{v}$. This $\mathbf{v}$ can also be a random vector such as $P_{X|Z}$.

\begin{lemma}[Optimal Estimator]\label{lem:optimal_estimator}
Let $W \in \Wcal$ be a random variable. Then, the best estimator of $X$ based on $W$ in the sense of minimizing the expected loss can be represented as follows:
    \[\arg\min_{\hat{X}(\cdot):\Wcal \rightarrow \Acal_{\hat{X}}} \Ebb \left[ \mathrm{\Lambda} \left(X, \hat{X}(W) \right) \right] = \hat{X}_B \left(P_{X|W}\right).\]
   \end{lemma}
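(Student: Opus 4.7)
The plan is to use the tower property of conditional expectation to reduce the problem to a pointwise minimization over the conditioning variable $W$. Specifically, I would start by writing
\[
\Ebb\left[\mathrm{\Lambda}\!\left(X, \hat{X}(W)\right)\right] = \Ebb\!\left[\Ebb\!\left[\mathrm{\Lambda}\!\left(X, \hat{X}(W)\right)\,\middle|\,W\right]\right].
\]
Conditioned on $W=w$, the value $\hat{X}(W)=\hat{X}(w)$ is deterministic, so the inner conditional expectation equals
\[
\Ebb\!\left[\mathrm{\Lambda}\!\left(X, \hat{X}(w)\right)\,\middle|\,W=w\right] = \sum_{x\in\Acal_X} \mathrm{\Lambda}(x,\hat{X}(w))\,P_{X\mid W=w}(x) = \lambda_{\hat{X}(w)}^\top P_{X\mid W=w}.
\]

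Next I would observe that because the outer expectation over $W$ is an average of nonnegative quantities and the estimator $\hat{X}(\cdot)$ may be chosen independently for each value $w\in\Wcal$, one may minimize inside the expectation pointwise in $w$. That is, any choice that minimizes $\lambda_{\hat{X}(w)}^\top P_{X\mid W=w}$ for every $w$ simultaneously minimizes the outer expectation; conversely, any strictly suboptimal choice at some $w$ with $\Pbb(W=w)>0$ could be improved, contradicting optimality. By the definition of the Bayes response, the pointwise minimizer is precisely $\hat{X}_B\!\left(P_{X\mid W=w}\right)$, which, viewed as a (random) function of $W$, is exactly $\hat{X}_B\!\left(P_{X\mid W}\right)$.

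There is essentially no hard obstacle here: the result is the standard ``Bayes-optimal estimator conditions on all available information'' fact, and the argument is just the tower property plus the interchange of expectation and pointwise minimization. The only small care point I would flag is handling the case of ties (the $\arg\min$ need not be unique) and ensuring measurability of the chosen selector $w\mapsto \hat{X}_B(P_{X\mid W=w})$; since $\Wcal$ and $\Acal_{\hat{X}}$ are finite (or at most countable in our setting), measurability is automatic and any deterministic tie-breaking rule yields a valid minimizer. For a continuous $\Wcal$ one would invoke a measurable selection theorem, but that is not needed in the discrete setting at hand.
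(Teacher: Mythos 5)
Your proposal is correct and follows essentially the same route as the paper's proof: the tower property to condition on $W$, followed by interchanging the outer expectation with a pointwise minimization over each value $w$, whose minimizer is by definition $\hat{X}_B\left(P_{X|W=w}\right)$. Your additional remarks on tie-breaking and measurability are sound but not needed in the finite-alphabet setting, and the paper omits them.
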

   The proof is merely an application of the law of total expectation and the definition of the Bayes estimator. 

\begin{definition}[Sequential Probability Assignment, SPA]
    A Sequential Probability Assignment (SPA) assigns probabilities to the next symbol in a sequence based on all previous symbols. For a sequence $X$, the SPA $q$ is a collection of conditional probability distributions:
    \[q \triangleq \left\{q_t(x_t|x^{t-1})\right\}_{t \geq 1},\]
    where $q_t(\cdot|x^{t-1})$ is a probability distribution over the possible values of $x_t$.
\end{definition}

Here, we establish theoretical bounds on excess expected estimation loss beyond the Bayes optimum due to optimizing the filters to an SPA other than the true one. These will subsequently enable the design of universal filtering schemes with robust performance across diverse settings.
\subsection{Causal Estimation}
In the causal setting, we estimate $X_{t}$ given the past and current observations $Z^t$, as in \cite{weissman2007Filtering}. 
By Lemma~\ref{lem:optimal_estimator}, we know that the optimal causal filter given $Z^t$ is $\hat{X}_B (P_{X_t|Z^t})$, where $P_{X_t|Z^t}$ can be calculated from $P_{Z_t|Z^{t-1}}$ using Lemma~\ref{conditionalcausal}. Thus, we can derive the expression for the optimal causal filter $\hat{X}_{t}^\text{opt}\left(Z^t\right)$.
\begin{equation}\label{eq:X_opt}
    \hat{X}_{t}^\text{opt}\left(Z^t\right) = \hat{X}_{B}\left(P_{X_{t}|Z^{t}}\right) =  \hat{X}_{B}\left(F\left(P_{Z_t|Z^{t-1}}, \Pi, Z_t\right)\right).
\end{equation}
However, the ground truth sequential probability $P_{Z_t|Z^{t-1}}$ is almost always inaccessible in real world problems. Therefore, instead of obtaining $P_{Z_t|Z^{t-1}}$, if we have a good SPA $Q_{Z_t|Z^{t-1}}$, then we can estimate the distribution of $X_t$ as
\begin{equation}
    \hat{P}_{X_t|Z^t} = F\left(Q_{Z_t|Z^{t-1}}, \Pi, Z_t\right).\label{eqn:causal-dist-estimate}
\end{equation}
We denote the estimator corresponding to $\hat{P}_{X_t|Z^t}$ as follows:
\begin{equation} \label{eq:causal_estimator}
    \hat{X}_B^{Q_{Z_t|Z^{t-1}}}\left(Z_t\right) \triangleq \hat{X}_B\left(F\left(Q_{Z_t|Z^{t-1}}, \Pi, Z_t\right)\right).
\end{equation}
Note that $\hat{X}_B^{P_{Z_t|Z^{t-1}}}\left(Z_t\right)$ is equal to $\hat{X}_t^\text{opt}\left(Z^t\right)$ in (\ref{eq:X_opt}), the optimal estimator that minimizes the expected loss.

\subsection{Mismatched Estimation} \label{subsec: mismatched bounding_excess_loss}
The expected excess loss due to mismatch can be bounded as follows:

\begin{theorem}[Excess Loss Bound for Causal Estimation]
\label{thm:normalizedloss} For a sequence $Z^{n}$ and SPA $Q$, the expected difference in normalized loss is bounded by: 
    \[\Ebb\left[ \frac{1}{n} \sum_{t=1}^{n}\mathrm{\Lambda}\left(X_t, \hat{X}_B^{Q_{Z_t|Z^{t-1}}}\left(Z_t\right)\right) - \frac{1}{n} \sum_{t=1}^{n} \mathrm{\Lambda}\left(X_t, \hat{X}_t^\text{opt}\left(Z^t\right)\right)\right] \leq \sqrt{2} C_1(\Pi) \mathrm{\Lambda}_\text{max} \sqrt{\frac{1}{n}D\left(P_{Z^n}\Vert Q_{Z^n}\right)}, \]
where $C_1(\Pi)=\left|\Pi^{-T}\right|_\text{max} \left| \Acal_X \right|$ 
and $\left|\Pi^{-T}\right|_\text{max}$ denotes the maximum component of $\left|\Pi^{-T}\right|$. $D(\cdot\Vert\cdot)$ denotes the relative entropy (KL divergence).
    \end{theorem}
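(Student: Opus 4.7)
The plan is to control the per-time-step conditional excess loss in terms of an $\ell_1$ distance between the true and estimated conditional distributions on $Z_t$, then aggregate via Pinsker's inequality, the chain rule for KL divergence, and Cauchy--Schwarz. Fix $t$ and condition on $Z^{t-1}$. Write the unnormalized joint vectors $\widetilde{P}_t(x,z)=\Pi(x,z)\,(\Pi^{-\top}P_{Z_t|Z^{t-1}})(x)$ and $\widetilde{Q}_t(x,z)=\Pi(x,z)\,(\Pi^{-\top}Q_{Z_t|Z^{t-1}})(x)$. By Lemma~\ref{conditionalcausal} and the definition of $F$, both $\hat{X}_t^{\text{opt}}(Z^t)$ and $\hat{X}_B^{Q_{Z_t|Z^{t-1}}}(Z_t)$ are argmins of $\lambda_{\hat{x}}^\top \widetilde{P}_t(\cdot,Z_t)$ and $\lambda_{\hat{x}}^\top \widetilde{Q}_t(\cdot,Z_t)$, respectively (the normalizing scalars do not affect the argmin), and the conditional expected loss given $Z^{t-1}$ of any estimator $\hat{X}(z_t)$ is exactly $\sum_{z_t,x}\lambda_{\hat{X}(z_t)}(x)\,\widetilde{P}_t(x,z_t)$, since $P(X_t=x,Z_t=z_t\mid Z^{t-1})=\widetilde{P}_t(x,z_t)$ by the definition of $F$.

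The first key step is the standard ``$A+B\ge 0$'' trick for Bayes estimators: for each $z_t$, summing the optimality inequality of $\hat{X}_B^P$ against $\widetilde{P}_t$ with that of $\hat{X}_B^Q$ against $\widetilde{Q}_t$ gives
\[
0 \le \bigl[\lambda_{\hat X_B^Q(z_t)} - \lambda_{\hat X_B^P(z_t)}\bigr]^\top \widetilde{P}_t(\cdot,z_t) \le \bigl[\lambda_{\hat X_B^Q(z_t)} - \lambda_{\hat X_B^P(z_t)}\bigr]^\top \bigl(\widetilde{P}_t(\cdot,z_t)-\widetilde{Q}_t(\cdot,z_t)\bigr),
\]
which by H\"older is at most $\Lambda_{\max}\,\|\widetilde{P}_t(\cdot,z_t)-\widetilde{Q}_t(\cdot,z_t)\|_1$. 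Summing over $z_t$, pulling $\Pi(x,z_t)$ out of the absolute value (it is nonnegative) and using $\sum_{z_t}\Pi(x,z_t)=1$ collapses the double sum to $\|\Pi^{-\top}(P_{Z_t|Z^{t-1}}-Q_{Z_t|Z^{t-1}})\|_1$. A crude entrywise bound then gives $\|\Pi^{-\top}v\|_1 \le |\Pi^{-\top}|_{\max}|\Acal_X|\,\|v\|_1 = C_1(\Pi)\,\|v\|_1$, so the conditional excess loss given $Z^{t-1}$ is at most $\Lambda_{\max}\,C_1(\Pi)\,\|P_{Z_t|Z^{t-1}} - Q_{Z_t|Z^{t-1}}\|_1$.

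Finally, I would take expectation over $Z^{t-1}$, apply Pinsker's inequality symbolwise to get $\|P_{Z_t|Z^{t-1}} - Q_{Z_t|Z^{t-1}}\|_1 \le \sqrt{2\,D(P_{Z_t|Z^{t-1}}\|Q_{Z_t|Z^{t-1}})}$, then use Jensen (concavity of $\sqrt{\cdot}$) to move the expectation inside the square root, producing the conditional KL divergence $D(P_{Z_t|Z^{t-1}}\|Q_{Z_t|Z^{t-1}})$ (averaged over $Z^{t-1}$). Summing over $t=1,\ldots,n$ and applying Cauchy--Schwarz to $\sum_t \sqrt{a_t}\le \sqrt{n\sum_t a_t}$ converts the sum of square roots into a single square root of $2\sum_t\mathbb{E}[D(P_{Z_t|Z^{t-1}}\|Q_{Z_t|Z^{t-1}})]$, which by the chain rule equals $2\,D(P_{Z^n}\|Q_{Z^n})$. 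Dividing by $n$ yields the claimed bound.

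The main obstacle I expect is bookkeeping the normalization in $F$: working with $F(Q,\Pi,z_t)$ directly is awkward because the KL on $Z$ would appear through ratios, so the trick is to pass to the unnormalized joint vectors $\widetilde{P}_t,\widetilde{Q}_t$ where the Bayes argmin is unchanged, the expected loss has a clean linear form, and the dependence on $\Pi^{-\top}(P-Q)$ factors out. The other subtlety is the order in which Pinsker, Jensen, and Cauchy--Schwarz are applied; applying Cauchy--Schwarz at the very end (after the chain rule) is what yields the clean $\sqrt{\frac{1}{n}D(P_{Z^n}\|Q_{Z^n})}$ rate.
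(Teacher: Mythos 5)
Your proposal is correct and follows essentially the same route as the paper: the ``$A+B\ge 0$'' optimality trick plus H\"older is exactly the paper's Lemma~\ref{lem:Bayes_bound}, your per-letter reduction to $\Lambda_\text{max}\,C_1(\Pi)\,\lVert P_{Z_t|Z^{t-1}}-Q_{Z_t|Z^{t-1}}\rVert_1$ via the unnormalized joint vectors is the paper's Lemma~\ref{lem:loss_bound} (which likewise exploits scale-invariance of the Bayes response and cancels the $1/P_Z(z)$ normalizers when averaging over $Z_t$, arriving at the same constant), and the final aggregation via Pinsker, Jensen, $\sum_t\sqrt{a_t}\le\sqrt{n\sum_t a_t}$, and the chain rule for relative entropy matches the paper's concluding steps. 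No gaps.
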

The proof is based on noting that, for each $t$, the difference in losses between two cases -- one using the estimator based on the ground truth sequential probability $P_{Z_t|Z^{t-1}}$ and the other using its estimate $Q_{Z_t|Z^{t-1}}$ -- can be bounded using a ``single-letter'' bound on their $\ell_1$ distance (Lemma~\ref{lem:loss_bound}). We then use Pinsker's inequality to bound that distance with square root of the relative entropy, then use Jensen's inequality to put the summation into the square root. Finally, we use the chain rule for relative entropy. This result provides a distribution-agnostic bound for the excess loss in any mismatched estimation scenario. The tightness of the bound is discussed in Section~\ref{tightness}.

\subsubsection{Estimation with Delay}

In many real world scenarios, data cannot be processed instantly or signal transmission introduces lag. This introduces a delay $d \geq 1$, and we may have to estimate $X_t$ based on the delayed noisy sequence $Z^{t-d}$. Again, by Lemma~\ref{lem:optimal_estimator}, we can find the Bayes optimal filter given $Z^{t-d}$, denoted as $\hat{X}_{t}^\text{opt} \left(Z^{t-d}\right)$. 

\begin{equation}
    \hat{X}_{t}^\text{opt} \left(Z^{t-d}\right) = \hat{X}_{B}\left(P_{X_{t}|Z^{t-d}}\right)
\end{equation}

If we assume that we have $P_{Z_t|Z^{t-d}}$, we can derive $P_{X_t|Z^{t-d}}$ as 

\begin{lemma} \label{lem:conditionaldelay}
For $d \geq 1$,
    \[P_{X_t|Z^{t-d}}=\Pi^{-T} P_{Z_t|Z^{t-d}}\]
\end{lemma}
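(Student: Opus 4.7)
The plan is to mirror the derivation of $P_{X_t} = \Pi^{-\top} P_{Z_t}$ from the beginning of Section~\ref{problem_setting}, but performed entirely under the conditional distribution given $Z^{t-d}$. The key observation is that, because $d \geq 1$, the time index $t$ lies strictly after every index in $\{1,\dots,t-d\}$, so the channel's memorylessness lets us treat the relationship between $X_t$ and $Z_t$ as if $Z^{t-d}$ were absent.

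First, I would fix an arbitrary realization $z^{t-d}$ of $Z^{t-d}$ and use the law of total probability conditioned on $Z^{t-d}=z^{t-d}$:
\[
\Pbb\!\left(Z_t=z\mid Z^{t-d}=z^{t-d}\right) \;=\; \sum_{x\in\Acal_X}\Pbb\!\left(Z_t=z\mid X_t=x,\,Z^{t-d}=z^{t-d}\right)\Pbb\!\left(X_t=x\mid Z^{t-d}=z^{t-d}\right).
\]
Next I would invoke the memorylessness of the channel to simplify the first factor. Since $Z_t$ depends on $(X^n,Z^{t-1})$ only through $X_t$, we have $Z_t \perp Z^{t-d}\mid X_t$ for any $d\geq 1$, so $\Pbb(Z_t=z\mid X_t=x,Z^{t-d}=z^{t-d})=\Pi(x,z)$. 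Substituting gives
\[
\Pbb\!\left(Z_t=z\mid Z^{t-d}=z^{t-d}\right) \;=\; \sum_{x\in\Acal_X}\Pi(x,z)\,\Pbb\!\left(X_t=x\mid Z^{t-d}=z^{t-d}\right),
\]
which in matrix-vector form reads $P_{Z_t\mid Z^{t-d}=z^{t-d}} = \Pi^{\top} P_{X_t\mid Z^{t-d}=z^{t-d}}$.

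Finally, I would appeal to the standing assumption that $\Pi$ is invertible and left-multiply both sides by $\Pi^{-\top}$ to obtain $P_{X_t\mid Z^{t-d}=z^{t-d}} = \Pi^{-\top} P_{Z_t\mid Z^{t-d}=z^{t-d}}$, and since this holds for every realization $z^{t-d}$ it holds as an identity between the random vectors $P_{X_t\mid Z^{t-d}}$ and $\Pi^{-\top}P_{Z_t\mid Z^{t-d}}$.

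The only nontrivial step is the conditional independence $Z_t \perp Z^{t-d}\mid X_t$; this is where the hypothesis $d\geq 1$ is used, and it is the reason the lemma does not extend (in this simple form) to the causal case $d=0$, which instead needs the more elaborate $F(\cdot,\Pi,\cdot)$ construction of Lemma~\ref{conditionalcausal}. Apart from this, the argument is a straightforward conditional version of the opening calculation of Section~\ref{problem_setting}, so I do not anticipate any other real obstacles.
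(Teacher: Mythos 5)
Your proposal is correct and takes essentially the same route as the paper's proof: marginalize over $X_t$ under the conditioning on $Z^{t-d}$, use the Markov chain $Z^{t-d} - X_t - Z_t$ (channel memorylessness) to replace $\Pbb(Z_t=z\mid X_t=x, Z^{t-d})$ by $\Pi(x,z)$, and then invert $\Pi^{\top}$. Your added remark on why $d\geq 1$ is needed (and why $d=0$ requires the $F(\cdot,\Pi,\cdot)$ construction instead) is accurate but not part of the paper's argument.
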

The proof merely uses marginalization and the fact that $Z_t - X_t - Z^{t-d}$.

With $Q_{Z_t|Z^{t-d}}$ induced by SPA $Q_{Z_t|Z^{t-1}}$, the corresponding estimator can be represented as follows. 
\begin{equation} \label{eq:causal_estimator_delay}
    \hat{X}_B^{Q_{Z_t|Z^{t-d}}} \left(Z_t \right) \triangleq \hat{X}_B \left( \Pi^{-T}Q_{Z_t|Z^{t-d}} \right)
\end{equation}

The excess estimation loss bound for this setting is:
\begin{theorem}[Excess Loss Bound for Delay]\label{thm:normalizedloss_delay}
    For $d \geq 1$,
    \[\Ebb  \left[ \frac{1}{n} \sum_{t=1}^{n}\mathrm{\Lambda}\left(X_t, \hat{X}_B^{Q_{Z_t|Z^{t-d}}}\left(Z_t\right) \right)  - \frac{1}{n} \sum_{t=1}^{n} \mathrm{\Lambda}\left(X_t, \hat{X}_t^\text{opt}\left(Z^{t-d}\right)\right) 
    \right] \leq \sqrt{2} \mathrm{\Lambda}_\text{max} C_1(\Pi) \sqrt{\frac{d}{n}D\left(P_{Z^n}\Vert Q_{Z^n}\right)}. \]
\end{theorem}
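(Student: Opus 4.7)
My plan is to mirror the argument used for Theorem~\ref{thm:normalizedloss}, adapting it to the delayed setting where the one genuinely new ingredient is a chain-rule bookkeeping step that produces the factor $d$ inside the square root.

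For each fixed $t$, I would first pass from the per-symbol loss difference to an $\ell_1$-distance between posteriors using the same ``single-letter'' bound that underlies the causal proof (Lemma~\ref{lem:loss_bound}): the expected extra loss incurred at time $t$ by using $\hat{X}_B^{Q_{Z_t|Z^{t-d}}}(Z_t)$ in place of $\hat{X}_t^\text{opt}(Z^{t-d})$ is at most $\mathrm{\Lambda}_\text{max}\,\Ebb\|P_{X_t|Z^{t-d}} - \hat{P}_{X_t|Z^{t-d}}\|_1$. Applying Lemma~\ref{lem:conditionaldelay} to both the true and the SPA-induced distributions writes these posteriors as $\Pi^{-T}P_{Z_t|Z^{t-d}}$ and $\Pi^{-T}Q_{Z_t|Z^{t-d}}$; using the elementary bound $\|\Pi^{-T}v\|_1 \leq |\Pi^{-T}|_\text{max}\,|\Acal_X|\,\|v\|_1$ transfers the $\ell_1$ gap from the $X$-side to the $Z$-side, yielding the constant $C_1(\Pi)$. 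Pinsker's inequality then gives
\[\Ebb\!\left\|P_{Z_t|Z^{t-d}} - Q_{Z_t|Z^{t-d}}\right\|_1 \leq \sqrt{2\,\Ebb\!\left[D\!\left(P_{Z_t|Z^{t-d}}\,\Vert\,Q_{Z_t|Z^{t-d}}\right)\right]},\]
and after averaging over $t$ and invoking Jensen's inequality (concavity of $\sqrt{\cdot}$) the task reduces to bounding $\frac{1}{n}\sum_{t=1}^n \Ebb[D(P_{Z_t|Z^{t-d}}\,\Vert\,Q_{Z_t|Z^{t-d}})]$ by $\frac{d}{n} D(P_{Z^n}\,\Vert\,Q_{Z^n})$.

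The delicate step, and the sole essential difference from the causal proof, is exactly this last reduction, because the chain rule naturally decomposes $D(P_{Z^n}\,\Vert\,Q_{Z^n})$ using conditioning on $Z^{t-1}$, not $Z^{t-d}$. I would close the gap in two moves. First, by the data-processing inequality for KL (marginalizing $Z_{t-d+1}^{t-1}$ out of the joint conditional cannot increase divergence), combined with the chain rule,
\[D\!\left(P_{Z_t|Z^{t-d}}\,\Vert\,Q_{Z_t|Z^{t-d}}\right) \leq D\!\left(P_{Z_{t-d+1}^{t}|Z^{t-d}}\,\Vert\,Q_{Z_{t-d+1}^{t}|Z^{t-d}}\right) = \sum_{s=t-d+1}^{t} D\!\left(P_{Z_s|Z^{s-1}}\,\Vert\,Q_{Z_s|Z^{s-1}}\right).\]
Second, summing over $t\in[n]$ and swapping the order of summation, each term $D(P_{Z_s|Z^{s-1}}\,\Vert\,Q_{Z_s|Z^{s-1}})$ appears for at most $d$ values of $t$ (namely $t\in\{s,s+1,\ldots,s+d-1\}\cap[n]$), which gives $\sum_{t=1}^n D(P_{Z_t|Z^{t-d}}\,\Vert\,Q_{Z_t|Z^{t-d}}) \leq d\cdot D(P_{Z^n}\,\Vert\,Q_{Z^n})$ after one more invocation of the chain rule. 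Substituting this back into the square root delivers the claimed $\sqrt{(d/n)\,D(P_{Z^n}\,\Vert\,Q_{Z^n})}$ scaling. I expect the overlap-counting step to be the main conceptual hurdle; once the factor-of-$d$ bookkeeping is settled, the remaining ingredients (the single-letter $\ell_1$ bound, Pinsker, and Jensen) are reused essentially verbatim from the proof of Theorem~\ref{thm:normalizedloss}.
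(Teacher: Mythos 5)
Your proposal is correct and follows essentially the same route as the paper: the in-line argument you give for the per-time-step bound (Bayes-response $\ell_1$ bound via Lemma~\ref{conditionalcausal}/Lemma~\ref{lem:conditionaldelay} plus the $\|\Pi^{-T}v\|_1 \leq |\Pi^{-T}|_{\max}|\Acal_X|\|v\|_1$ transfer) is precisely the paper's Lemma~\ref{lem:loss_bound_2}, and your data-processing/chain-rule decomposition with the overlap count of at most $d$ occurrences per term $D(P_{Z_s|Z^{s-1}}\Vert Q_{Z_s|Z^{s-1}})$ is exactly the paper's reindexing $s = t-d+1+k$ yielding $d\,D(P_{Z^n}\Vert Q_{Z^n})$. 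The only cosmetic difference is that you cite Lemma~\ref{lem:loss_bound} where the delayed-setting analogue Lemma~\ref{lem:loss_bound_2} is the one actually being used.
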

The proof is similar to that of Theorem~\ref{thm:normalizedloss}, but we first use Lemma~\ref{lem:loss_bound_2} to bound the difference in losses between the two cases -- one using the estimator based on the ground truth sequential probability $P_{Z_t|Z^{t-d}}$ and the other using its estimate $Q_{Z_t|Z^{t-d}}$.
Note that the delay manifests in the bound via an additional factor $d$  under the square root. This does not necessarily imply an intrinsic penalty in the excess loss itself. 
\subsubsection{Estimation with Lookahead}
In applications that prioritize accuracy over real-time requirements, some filtering problems allow for some lookahead $l \geq 0$. Beyond the use of a limited lookahead $l$, bidirectional modeling approaches, such as those discussed in \cite{1715526,1405281}, extend to consider the entire context, leveraging both past and future observations for enhanced filtering performance. Under the setting of limited lookahead $l$, the filtering problem can be framed as estimating $X_{t}$ based on the past sequence $Z^{t-1}$, current symbol $Z_t$, and a lookahead sequence $Z_{t+1}^{t+l}$. In such non-causal scenarios, we can also derive the optimal filter $\hat{X}_t^\text{opt}\left(Z^{t+l}\right)$ as follows:
\begin{equation}
    \hat{X}_t^\text{opt}\left(Z^{t+l}\right)=\hat{X}_B^{P_{Z_t|Z^{t-1}, Z_{t+1}^{t+l}}}\left(Z_t\right).
\end{equation}

This can be shown using the result from Lemma~\ref{lem:optimal_estimator}. In this case we are conditioning $Z_t$ on $Z^{t-1}$ and $Z_{t+1}^{t+l}$ instead of merely $Z^{t-1}$.

Given $Q_{Z_t|Z^{t-1},Z_{t+1}^{t+l}}$ induced by SPA $Q$, the estimator is:
\begin{equation} \label{eq:non_causal_estimator}
    \hat{X}_B^{Q_{Z_t|Z^{t-1},Z_{t-1}^{t+l}}}\left(Z_t\right) \triangleq \hat{X}_B\left(F\left(Q_{Z_t|Z^{t-1},Z_{t-1}^{t+l}}, \Pi, Z_t\right)\right).
\end{equation}

The excess estimation loss is bounded as follows:

\begin{theorem}[Excess Loss Bound for Lookahead]
\label{thm:normalizedloss_lookahead}
For $l \geq 0$,
\begin{align*} 
&\Ebb\left[ \frac{1}{n} \sum_{t=1}^{n}\mathrm{\Lambda}\left(X_t, \hat{X}_B^{Q_{Z_t|Z^{t-1},Z^{t+l}_{t+1}}}\left(Z_t\right)\right) - \frac{1}{n} \sum_{t=1}^{n} \mathrm{\Lambda}\left(X_t, \hat{X}_t^\text{opt} \left(Z^{t+l}\right)\right)\right] \\
&\leq \sqrt{2} C_1(\Pi) \mathrm{\Lambda}_\text{max} \sqrt{\frac{l+1}{n}D\left(P_{Z^{n+l}}\Vert Q_{Z^{n+l}}\right)}
\end{align*}
\end{theorem}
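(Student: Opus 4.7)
The plan is to follow the same three-step template used in the proofs of Theorems~\ref{thm:normalizedloss} and~\ref{thm:normalizedloss_delay}: (i) bound the per-symbol excess loss by a scaled $\ell_1$ distance between the true and estimated conditional distributions of $Z_t$; (ii) convert $\ell_1$ to KL via Pinsker and move the sum inside the square root via Jensen; (iii) relate the resulting sum of conditional KL divergences to $D(P_{Z^{n+l}}\Vert Q_{Z^{n+l}})$. The only genuinely new ingredient is step (iii), because the conditioning variables $Z^{t-1}, Z_{t+1}^{t+l}$ now straddle $Z_t$ and do not yield the forward chain-rule telescoping available in the causal setting.

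For step (i), since the non-causal estimator in \eqref{eq:non_causal_estimator} still has the form $\hat{X}_B(F(\cdot,\Pi,Z_t))$, the single-letter argument underlying Lemma~\ref{lem:loss_bound} applies verbatim once the conditioning sigma-algebra is enlarged to include $Z_{t+1}^{t+l}$: only the distribution on $Z_t$ fed into $F$ changes, not the machinery that converts it into an estimate of $X_t$. This yields
\[
\Ebb\!\left[\mathrm{\Lambda}\!\left(X_t, \hat{X}_B^{Q_{Z_t|Z^{t-1},Z_{t+1}^{t+l}}}(Z_t)\right) - \mathrm{\Lambda}\!\left(X_t, \hat{X}_t^\text{opt}(Z^{t+l})\right)\right] \le C_1(\Pi)\,\mathrm{\Lambda}_\text{max}\,\Ebb\!\left[\left\|P_{Z_t|Z^{t-1},Z_{t+1}^{t+l}} - Q_{Z_t|Z^{t-1},Z_{t+1}^{t+l}}\right\|_1\right].
\]
Applying Pinsker pointwise in the realized conditioning event and then Jensen twice (once to pull the square root outside the expectation over $(Z^{t-1},Z_{t+1}^{t+l})$, once to pull it outside the average over $t$) reduces the theorem to proving
\[
\sum_{t=1}^n D\!\left(P_{Z_t|Z^{t-1},Z_{t+1}^{t+l}}\,\big\Vert\,Q_{Z_t|Z^{t-1},Z_{t+1}^{t+l}}\right) \le (l+1)\,D(P_{Z^{n+l}}\Vert Q_{Z^{n+l}}).
\]

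The hard part is this last inequality, which is the lookahead analogue of the identity $\sum_{t=1}^n(a_t-a_{t-d}) \le d\,a_n$ driving Theorem~\ref{thm:normalizedloss_delay}. I would establish it by applying the chain rule to the reshuffled tuple $(Z^{t-1},Z_{t+1}^{t+l},Z_t)$,
\[
D(P_{Z^{t+l}}\Vert Q_{Z^{t+l}}) = D(P_{Z^{t-1},Z_{t+1}^{t+l}}\Vert Q_{Z^{t-1},Z_{t+1}^{t+l}}) + D\!\left(P_{Z_t|Z^{t-1},Z_{t+1}^{t+l}}\,\big\Vert\,Q_{Z_t|Z^{t-1},Z_{t+1}^{t+l}}\right),
\]
and invoking the data-processing inequality $D(P_{Z^{t-1},Z_{t+1}^{t+l}}\Vert Q_{Z^{t-1},Z_{t+1}^{t+l}}) \ge D(P_{Z^{t-1}}\Vert Q_{Z^{t-1}})$ obtained by marginalizing out $Z_{t+1}^{t+l}$. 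Writing $a_k \coloneqq D(P_{Z^k}\Vert Q_{Z^k})$ with $a_0 = 0$, each conditional KL term is at most $a_{t+l}-a_{t-1}$, and a two-sided telescoping gives
\[
\sum_{t=1}^n (a_{t+l}-a_{t-1}) \;=\; \sum_{s=n}^{n+l} a_s \;-\; \sum_{s=0}^{l} a_s \;\le\; (l+1)\,a_{n+l},
\]
where the last step drops the nonnegative subtracted sum and uses the monotonicity $a_k \le a_{n+l}$ for $k \le n+l$ (itself a DPI consequence). Substituting this bound back into the square root and absorbing $\sqrt{2}$ then produces the stated inequality; I expect step (i) to require only cosmetic changes relative to the causal case, with the shuffled chain rule plus the two-sided telescoping above being the only genuinely new piece of work.
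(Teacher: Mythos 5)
Your proposal is correct and follows essentially the same route as the paper: the single-letter $\ell_1$ bound, Pinsker plus Jensen, and then bounding each term $D\bigl(P_{Z_t|Z^{t-1},Z_{t+1}^{t+l}}\Vert Q_{Z_t|Z^{t-1},Z_{t+1}^{t+l}}\,\big|\,P_{Z^{t-1},Z_{t+1}^{t+l}}\bigr)$ by $D\bigl(P_{Z^{t+l}}\Vert Q_{Z^{t+l}}\bigr)-D\bigl(P_{Z^{t-1}}\Vert Q_{Z^{t-1}}\bigr)$ via the chain rule and data processing, which is exactly the paper's $\sum_{k=0}^{l}\alpha_{t+k}$. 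Your two-sided telescoping plus monotonicity of $k\mapsto D(P_{Z^k}\Vert Q_{Z^k})$ is just a repackaging of the paper's multiplicity count $\sum_{t=1}^{n}\sum_{k=0}^{l}\alpha_{t+k}\le(l+1)\sum_{t=1}^{n+l}\alpha_t$, so the two arguments coincide up to presentation.
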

The proof is similar to that of Theorem~\ref{thm:normalizedloss}, using chain rules and data processing properties of relative entropy. The derived excess loss bound scales with the lookahead $l$ in a manner similar to how it scales with the delay in the delayed case. 
By extending the lookahead from a predefined $l$ to the entire dataset, a different bound was obtained in \cite{4655473}, Equation (287) in terms of the ``erasure divergence''. 
The bound in Theorem \ref{thm:normalizedloss_lookahead} is tight enough to deduce that the excess loss converges to zero whenever $\lim_{n\to\infty} \frac{1}{n} D\left(P_{Z^n} \Vert Q_{Z^n}\right) = 0$ for fixed $\ell$, but not for $\ell$ growing with $n$. The existence of bounds that would allow $\ell$ to grow with $n$ at some rate while still vanishing when $\lim_{n\to\infty} \frac{1}{n} D\left(P_{Z^n} \Vert Q_{Z^n}\right) = 0$, and the relationship to the erasure divergence bound of \cite{4655473},  are left for future work.

\subsection{Information Theoretic Bounds on the Bayes Optimal Performance} \label{subsec: mutual information bounds}
\begin{corollary}
\label{cor:normalizedloss} Assume $\mathrm{\Lambda}$ is such that $\forall x \in \mathcal{A}_X$ $\exists \hat{x} \in \mathcal{A}_{\hat{X}}$ with $\mathrm{\Lambda} (x, \hat{x})=0$. 
For any individual sequence $x^{n}$ and SPA $Q$ we have  
    \[\Ebb\left[ \frac{1}{n} \sum_{t=1}^{n}\mathrm{\Lambda}\left(x_t, \hat{X}_B^{Q_{Z_t|Z^{t-1}}}\left(Z_t\right)\right) \right] \leq \sqrt{2} C_1(\Pi) \mathrm{\Lambda}_\text{max} \sqrt{\frac{1}{n}D\left(P_{Z^n|x^n}\Vert Q_{Z^n}\right)}, \]
where the expectation on the left side is with respect to the channel noise.
    \end{corollary}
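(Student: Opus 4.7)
The plan is to derive this corollary as a direct specialization of Theorem~\ref{thm:normalizedloss}, by taking the source distribution to concentrate all its mass on the individual sequence $x^n$. Concretely, I would treat $x^n$ as a deterministic ``source'', so the joint law $P_{X^n,Z^n}$ factors as a point mass at $x^n$ for $X^n$ together with the DMC outputs $Z^n$. Under this degenerate law the $Z$-marginal $P_{Z^n}$ equals $P_{Z^n|x^n}$, which is exactly what appears inside the KL divergence in the target bound. Since the SPA $Q$ does not depend on the source distribution, the right-hand side of Theorem~\ref{thm:normalizedloss} instantiated at this source is precisely the right-hand side of the corollary.

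Next I would argue that under the chosen degenerate law, the Bayes-optimal causal filter incurs zero expected loss at every $t$. Because $X^n \equiv x^n$ almost surely, the posterior $P_{X_t|Z^t}$ is the point mass at $x_t$, regardless of $Z^t$. By the assumption on $\mathrm{\Lambda}$, for each $x_t$ there exists $\hat{x} \in \Acal_{\hat{X}}$ with $\mathrm{\Lambda}(x_t,\hat{x}) = 0$, so the Bayes response to this point mass has loss $0$. Consequently,
\[
\Ebb\left[\frac{1}{n}\sum_{t=1}^{n} \mathrm{\Lambda}\bigl(X_t, \hat{X}_t^{\mathrm{opt}}(Z^t)\bigr)\right] = 0,
\]
where the expectation is taken over the channel noise only.

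Substituting this vanishing optimal term into Theorem~\ref{thm:normalizedloss}, and using that $X_t = x_t$ almost surely under the chosen source law (so that $\mathrm{\Lambda}(X_t,\cdot)$ on the left-hand side becomes $\mathrm{\Lambda}(x_t,\cdot)$), yields exactly the inequality claimed in the corollary. I do not anticipate any real obstacle: the excess-loss bound of Theorem~\ref{thm:normalizedloss} is ultimately a statement about the two $Z$-sequence distributions $P_{Z^n}$ and $Q_{Z^n}$, so it specializes cleanly to a point-mass source, and the loss-function assumption is used exactly once, to collapse the Bayes-optimal cumulative expected loss to zero.
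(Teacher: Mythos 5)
Your proposal is correct and follows essentially the same route as the paper: specialize Theorem~\ref{thm:normalizedloss} to the point-mass source $P = \delta_{x^n}$ (so that $P_{Z^n} = P_{Z^n|x^n}$), and observe that the Bayes-optimal causal filter then incurs zero expected loss because the posterior on $X_t$ is a point mass at $x_t$ and the stipulation on $\mathrm{\Lambda}$ guarantees a zero-loss estimate. No gaps.
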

    This corollary follows directly by specializing Theorem \ref{thm:normalizedloss} to $P = \delta_{x^n}$ and noting that in this case $\Ebb\left[  \frac{1}{n} \sum_{t=1}^{n} \mathrm{\Lambda}\left(x_t, \hat{X}_t^\text{opt}\left(Z^t\right)\right)\right] = 0$ (thanks to the additional stipulation on $\mathrm{\Lambda}$). This corollary is key in establishing the following: 
\begin{theorem}[Mutual information upper bound on the optimal performance]
\label{thm:mutinfo normalizedloss} Assume $\mathrm{\Lambda}$ is such that $\forall x \in \mathcal{A}_X$ $\exists \hat{x} \in \mathcal{A}_{\hat{X}}$ with $\mathrm{\Lambda} (x, \hat{x})=0$. The expected normalized loss of the optimal filter satisfies:  
    \[\Ebb\left[ \frac{1}{n} \sum_{t=1}^{n} \mathrm{\Lambda}\left(X_t, \hat{X}_t^\text{opt}\left(Z^t\right)\right)\right] \leq \sqrt{2} C_1(\Pi) \mathrm{\Lambda}_\text{max} \sqrt{\frac{1}{n}I (X^n ; Z^n)}.  \]
    \end{theorem}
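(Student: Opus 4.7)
The plan is to deduce this bound directly from Corollary~\ref{cor:normalizedloss} by specializing the SPA $Q$ to match the true marginal distribution of $Z^n$ under $P_{X^n,Z^n}$, and then averaging over the stochastic source $X^n$ via Jensen's inequality and the standard KL--mutual-information identity.

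Concretely, I would first take $Q$ to be the SPA whose sequential conditionals coincide with the truth, namely $Q_{Z_t|Z^{t-1}} \triangleq P_{Z_t|Z^{t-1}}$ induced by the joint $P_{X^n,Z^n}$; by the chain rule this makes the induced joint $Q_{Z^n}$ equal to $P_{Z^n}$. With this choice the mismatched estimator reduces to the Bayes-optimal causal filter, $\hat{X}_B^{Q_{Z_t|Z^{t-1}}}(Z_t) = \hat{X}_t^\text{opt}(Z^t)$, as in~\eqref{eq:X_opt}. Applying Corollary~\ref{cor:normalizedloss} to each realization $x^n$ of $X^n$ then gives
\[
\Ebb\left[\frac{1}{n}\sum_{t=1}^{n}\mathrm{\Lambda}\left(x_t,\hat{X}_t^{\text{opt}}(Z^t)\right)\,\Big|\, X^n=x^n\right] \leq \sqrt{2}\,C_1(\Pi)\,\mathrm{\Lambda}_\text{max}\sqrt{\frac{1}{n}D\left(P_{Z^n|x^n}\Vert P_{Z^n}\right)},
\]
where the conditional expectation on the left is over the channel noise.

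Next, I would take the $P_{X^n}$-expectation of both sides. By the tower property the left-hand side becomes $\Ebb\left[\frac{1}{n}\sum_{t=1}^{n}\mathrm{\Lambda}(X_t,\hat{X}_t^\text{opt}(Z^t))\right]$, the quantity appearing in the theorem. For the right-hand side, Jensen's inequality applied to the concave function $\sqrt{\cdot}$ yields
\[
\Ebb_{X^n}\sqrt{\frac{1}{n}D\left(P_{Z^n|X^n}\Vert P_{Z^n}\right)} \leq \sqrt{\frac{1}{n}\,\Ebb_{X^n}\left[D\left(P_{Z^n|X^n}\Vert P_{Z^n}\right)\right]} = \sqrt{\frac{1}{n}\,I(X^n;Z^n)},
\]
where the final equality is the standard identity $\Ebb_X[D(P_{Y|X}\Vert P_Y)] = I(X;Y)$. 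Chaining these two displays produces the claimed bound.

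I do not anticipate a substantive obstacle in this argument: it is a clean three-step combination of (i)~matching the SPA to the truth so that the Bayes-optimal filter itself appears on the left, (ii)~Jensen's inequality to commute expectation and square root, and (iii)~the KL--mutual-information identity. The only point worth double-checking is that the SPA with $Q_{Z_t|Z^{t-1}} = P_{Z_t|Z^{t-1}}$ indeed induces the joint $Q_{Z^n}=P_{Z^n}$ sitting inside the KL term of Corollary~\ref{cor:normalizedloss}, which follows immediately from the chain rule of probability.
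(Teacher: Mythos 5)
Your proposal is correct and is essentially the paper's own argument: both specialize Corollary~\ref{cor:normalizedloss} with $Q_{Z^n}=P_{Z^n}$ so that the mismatched filter coincides with $\hat{X}_t^{\text{opt}}(Z^t)$, average the per-realization bound over $P_{X^n}$, apply Jensen's inequality to the concave square root, and invoke the identity $\Ebb_{X^n}\left[D\left(P_{Z^n|X^n}\Vert P_{Z^n}\right)\right]=I(X^n;Z^n)$. No gaps.
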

The bound in Theorem \ref{thm:mutinfo normalizedloss} may seem counter-intuitive at first glance since the less informative $Z^n$ is about $X^n$ (smaller mutual information) the worse (larger expected loss) one might expect the filtering performance to be. Note, however, that for a fixed distribution on the noise-free process $X^n$, the more noisy the channel (smaller mutual information) the larger the constant $C_1(\Pi)$ will tend to be.      For a fixed channel, lower mutual information may be due to lower entropy of $X^n$, and it would make intuitive sense that the smaller the entropy of $X^n$ the smaller expected filtering loss can be attained.          
\begin{proof}
    \begin{align*}
    \Ebb\left[ \frac{1}{n} \sum_{t=1}^{n} \mathrm{\Lambda}\left(X_t, \hat{X}_t^\text{opt}\left(Z^t\right)\right)\right] &= \Ebb \left[ \frac{1}{n} \sum_{t=1}^{n} \mathrm{\Lambda}\left(X_t, \hat{X}_B^{P_{Z_t|Z^{t-1}}}\left(Z^t\right)\right)\right] \\
    &= \sum_{x^n} \Ebb \left[ \frac{1}{n} \sum_{t=1}^{n} \mathrm{\Lambda}\left(X_t, \hat{X}_B^{P_{Z_t|Z^{t-1}}}\left(Z^t\right)\right) \Big{|} X^n = x^n \right] P_{X^n}(x^n) \\
    &\leq \sum_{x^n} \sqrt{2} C_1(\Pi) \mathrm{\Lambda}_\text{max} \sqrt{ \frac{1}{n} D\left(P_{Z^n|x^n} \Vert P_{Z^n} \right)} P_{X^n} (x^n) \qquad (\because \text{ Corollary \ref{cor:normalizedloss}}) \\
    &\leq \sqrt{2} C_1(\Pi) \mathrm{\Lambda}_\text{max} \sqrt{\frac{1}{n} \sum_{x^n} D\left(P_{Z^n|x^n} \Vert P_{Z^n} \right) P_{X^n} (x^n)} \qquad (\because \text{ Jensen's inequality}) \\
    &= \sqrt{2} C_1(\Pi) \mathrm{\Lambda}_\text{max} \sqrt{\frac{1}{n} I(X^n ; Z^n)}
    \end{align*}
\end{proof}

Using analogous arguments, the following result can be derived by applying Theorem \ref{thm:normalizedloss_delay} and Theorem \ref{thm:normalizedloss_lookahead} in place of Theorem \ref{thm:normalizedloss} for the cases of filtering with delay and lookahead, respectively. Additionally, Theorem \ref{thm:mutinfo normalizedloss} is invoked to complete the derivation.
\begin{theorem}
\label{thm:mutinfo normalizedloss with delay or lookahead} Assume $\mathrm{\Lambda}$ is such that $\forall x \in \mathcal{A}_X$ $\exists \hat{x} \in \mathcal{A}_{\hat{X}}$ with $\mathrm{\Lambda} (x, \hat{x})=0$. Then   for $d \geq 1$,
    \[\Ebb\left[ \frac{1}{n} \sum_{t=1}^{n} \mathrm{\Lambda}\left(X_t, \hat{X}_t^\text{opt}\left(Z^{t-d} \right)\right)\right] \leq \sqrt{2} C_1(\Pi) \mathrm{\Lambda}_\text{max} \sqrt{\frac{d}{n}I (X^n ; Z^n)}  \]
and, for $l \geq 0$,
\[\Ebb\left[ \frac{1}{n} \sum_{t=1}^{n} \mathrm{\Lambda}\left(X_t, \hat{X}_t^\text{opt}\left(Z^{t + l} \right)\right)\right] \leq \sqrt{2} C_1(\Pi) \mathrm{\Lambda}_\text{max} \sqrt{\frac{l+1}{n}I (X^n ; Z^{n+l})}  . \]
    \end{theorem}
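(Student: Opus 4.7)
The plan is to mirror the proof of Theorem~\ref{thm:mutinfo normalizedloss} just carried out, substituting the delayed and lookahead analogs at each step. The key preparatory move is to derive delay and lookahead versions of Corollary~\ref{cor:normalizedloss}. For the delay case I would specialize Theorem~\ref{thm:normalizedloss_delay} to $P = \delta_{x^n}$: under this deterministic source $P_{X_t|Z^{t-d}}$ is itself the point mass at $x_t$, so by the hypothesis on $\mathrm{\Lambda}$ the optimal filter attains zero loss term-by-term, and the theorem collapses to
$$\Ebb\!\left[\tfrac{1}{n}\sum_{t=1}^n \mathrm{\Lambda}\bigl(x_t,\hat{X}_B^{Q_{Z_t|Z^{t-d}}}(Z_t)\bigr)\right] \leq \sqrt{2}\,C_1(\Pi)\,\mathrm{\Lambda}_\text{max}\sqrt{\tfrac{d}{n}D\bigl(P_{Z^n|x^n}\Vert Q_{Z^n}\bigr)}.$$

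From here the delay portion of the theorem follows by repeating, \emph{verbatim}, the chain of manipulations given in the proof of Theorem~\ref{thm:mutinfo normalizedloss}: decompose the outer expectation as $\sum_{x^n}\Ebb[\,\cdot\,|X^n=x^n]\,P_{X^n}(x^n)$; apply this new corollary with $Q_{Z^n}=P_{Z^n}$ so that the mismatched filter $\hat{X}_B^{Q_{Z_t|Z^{t-d}}}$ collapses to $\hat{X}_t^{\text{opt}}(Z^{t-d})$; push the outer average over $x^n$ inside the square root using Jensen's inequality; and recognize $\Ebb[D(P_{Z^n|X^n}\Vert P_{Z^n})]$ as $I(X^n;Z^n)$.

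The lookahead case is analogous but has one subtlety that deserves care. Because the filter consumes $Z^{t+l}$, the underlying source effectively has $n+l$ symbols, and a naive specialization to $P = \delta_{x^{n+l}}$ would yield only the weaker bound $I(X^{n+l};Z^{n+l})$. To recover the tighter $I(X^n;Z^{n+l})$ claimed in the statement, I would instead specialize to the distribution $P^{(x^n)}$ whose $X^n$-marginal is $\delta_{x^n}$ but whose conditional law of $X_{n+1}^{n+l}$ given $X^n=x^n$ matches that of the true source. Under $P^{(x^n)}$ one still has $P^{(x^n)}_{X_t|Z^{t+l}}=\delta_{x_t}$ for every $t\leq n$, so the optimal filter's loss again vanishes, while the $Z^{n+l}$-marginal under $P^{(x^n)}$ is exactly the true conditional $P_{Z^{n+l}|X^n=x^n}$. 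Theorem~\ref{thm:normalizedloss_lookahead} then delivers the lookahead analog of Corollary~\ref{cor:normalizedloss} with $D\bigl(P_{Z^{n+l}|X^n=x^n}\Vert Q_{Z^{n+l}}\bigr)$ on the right.

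Plugging this corollary into the same decompose/Jensen/identify-as-mutual-information pipeline --- averaging conditional on $X^n$ (not $X^{n+l}$) and choosing $Q_{Z^{n+l}}=P_{Z^{n+l}}$ --- yields the claimed bound, since $\Ebb[D(P_{Z^{n+l}|X^n}\Vert P_{Z^{n+l}})] = I(X^n;Z^{n+l})$. The main (really only) obstacle is exactly this choice of how to ``pin down'' $X^n$ via a point mass \emph{without} pinning down the lookahead symbols $X_{n+1}^{n+l}$; once that is set up correctly, the rest of the argument is a transparent reenactment of the proof just given for Theorem~\ref{thm:mutinfo normalizedloss}.
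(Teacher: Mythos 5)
Your proposal is correct and follows essentially the same route the paper intends: specialize Theorem~\ref{thm:normalizedloss_delay} and Theorem~\ref{thm:normalizedloss_lookahead} to a point mass on the noise-free sequence so that the optimal-filter term vanishes under the stipulation on $\mathrm{\Lambda}$, take $Q_{Z^n}=P_{Z^n}$ (respectively $Q_{Z^{n+l}}=P_{Z^{n+l}}$) so the mismatched filter coincides with the optimal one, and rerun the decompose/Jensen/identify-as-mutual-information pipeline from the proof of Theorem~\ref{thm:mutinfo normalizedloss}. Your device of pinning only $X^n$ while leaving $X_{n+1}^{n+l}$ distributed according to the true conditional is exactly what is needed to obtain $I(X^n;Z^{n+l})$ rather than the weaker $I(X^{n+l};Z^{n+l})$, a detail the paper's one-sentence sketch leaves implicit.
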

We also have the following information theoretic lower bounds on these quantities.  \begin{theorem}[Information-theoretic lower bounds on the Bayes optimal performances]\label{thm:mutual-information-lower-bounds}
    Suppose that $\mathcal{A}_X = \mathcal{A}_{\hat{X}} = \{0, 1, \dots, m-1\}$.
    Also, suppose the loss function is subtractive, i.e., $\mathrm{\Lambda}(x, x') = \rho(x - x')$ for some function $\rho$ with $\rho(0) = 0$, where subtraction is performed modulo $m$.

    Then, for any fixed integer $l$,
    \[\mathbb{E}\left[ \frac{1}{n} \sum_{t=1}^n \mathrm{\Lambda}\left(X_t,  \hat{X}^{opt}\left(Z^{t+l}\right)\right)\right] \geq \phi^{-1}\left( \tfrac{1}{n} H(X^n\lVert Z^{n+l}) \right),\]
     where $\phi(\cdot)$ is defined as
    \[\phi(D) \triangleq \max \left\{ H(U) : U \text{ random variable over } \mathcal{A}_X \text{ s.t. } \mathbb{E}[\rho(U)] \leq D \right\},\]
    and $H(X^n\lVert Z^{n+l})$ is the entropy of $X^n$ causally conditioned on $Z^{n+l}$,
    \[H(X^n\lVert Z^{n+l}) = \sum_{t=1}^n H(X_t|X^{t-1}, Z^{t+l}).\]
\end{theorem}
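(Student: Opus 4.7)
The plan is to adapt the classical Shannon-style lower bound from rate–distortion theory to the filtering setting, exploiting that the loss depends on the input and estimate only through their modular difference $E_t \triangleq X_t - \hat{X}_t \pmod{m}$, where $\hat{X}_t = \hat{X}^{opt}(Z^{t+l})$. The bound will emerge from three successive relaxations of the causally-conditioned entropy terms, followed by one application of Jensen's inequality.

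First, I would unroll $H(X^n \| Z^{n+l}) = \sum_{t=1}^n H(X_t \mid X^{t-1}, Z^{t+l})$ term by term. Since $\hat{X}_t$ is a deterministic function of $Z^{t+l}$, it is fixed once we condition on $Z^{t+l}$, and the bijectivity of modular subtraction on $\mathcal{A}_X$ yields $H(X_t \mid X^{t-1}, Z^{t+l}) = H(E_t \mid X^{t-1}, Z^{t+l})$. Dropping the conditioning gives $H(E_t \mid X^{t-1}, Z^{t+l}) \le H(E_t)$. Next, by the defining property of $\phi$, since $E_t$ takes values in $\mathcal{A}_X$ with expected distortion $D_t \triangleq \Ebb[\rho(E_t)] = \Ebb[\mathrm{\Lambda}(X_t, \hat{X}_t)]$, we have $H(E_t) \le \phi(D_t)$. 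Summing and dividing by $n$:
\[\frac{1}{n} H(X^n \| Z^{n+l}) \le \frac{1}{n} \sum_{t=1}^n \phi(D_t).\]

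I then invoke concavity of $\phi$ (a standard convex-analytic fact: the maximum entropy over distributions satisfying a linear expectation constraint is concave in the constraint level) together with Jensen's inequality:
\[\frac{1}{n} \sum_{t=1}^n \phi(D_t) \le \phi\!\left(\frac{1}{n} \sum_{t=1}^n D_t\right) = \phi\!\left(\Ebb\!\left[\frac{1}{n} \sum_{t=1}^n \mathrm{\Lambda}(X_t, \hat{X}_t)\right]\right).\]
Finally, because $\phi$ is nondecreasing (enlarging the feasible set cannot decrease the max entropy), applying its generalized inverse $\phi^{-1}(y) \triangleq \inf\{D : \phi(D) \ge y\}$ to both sides of the chain yields the desired lower bound.

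The main technical points to handle carefully are: (i) the bijection step requires $\mathcal{A}_X$ to have a group structure, which is precisely the hypothesis $\mathcal{A}_X = \{0, 1, \dots, m-1\}$ with subtraction modulo $m$; (ii) the concavity of $\phi$ should be justified briefly via the standard mixing argument on feasible distributions before invoking Jensen; and (iii) $\phi^{-1}$ must be interpreted as a generalized inverse so that the bound remains well-defined at plateaus or jumps of $\phi$. None of these is a genuine obstacle, but the concavity and monotonicity of $\phi$ are the two facts I would make sure to state cleanly before assembling the inequality chain.
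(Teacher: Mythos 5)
Your proposal is correct and is essentially the paper's own argument: the Shannon-type lower bound obtained from the definition of $\phi$, its concavity via Jensen, and the fact that conditioning reduces entropy. The only (immaterial) difference is ordering — you pass to the unconditional error variable $E_t$ and apply Jensen once over time, while the paper keeps the conditioning on realizations of $Z^{t+l}$ and applies Jensen a second time over those realizations before invoking conditioning-reduces-entropy; both chains yield the identical bound.
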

$\phi(D)$ is strictly increasing for $0 \leq D \leq \frac{1}{m}\sum_{i=1}^{m-1}\rho(i)$, so it suffices to show that $\phi$, evaluated at the average expected loss of $\hat{X}^{opt}$, is lower-bounded by $H(X^n\lVert Z^{n+l})$.
As $\phi(D)$ is concave, this lower bound can be obtained via Jensen's inequality, the definition of $\phi(D)$, and the fact that conditioning reduces entropy.
The full proof is in the Appendix.

\begin{corollary} \label{cor:mutual-information-lower-bounds}
    Suppose the conditions on the alphabet and loss function from Theorem \ref{thm:mutual-information-lower-bounds} hold.
    Then, for the optimal denoiser that has access to the full sequence $Z^n$,
    \[\mathbb{E}\left[ \frac{1}{n} \sum_{t=1}^n \mathrm{\Lambda}\left(X_t,  \hat{X}^{opt}\left(Z^{n}\right)\right)\right] \geq \phi^{-1}\left( \tfrac{1}{n} H(X^n| Z^{n}) \right).\]
\end{corollary}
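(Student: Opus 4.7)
The plan is to reprise the three-step template used in the proof of Theorem \ref{thm:mutual-information-lower-bounds}---bounding each per-letter conditional entropy by $\phi$ evaluated at the corresponding per-letter expected loss, summing via the chain rule, and then applying Jensen and $\phi^{-1}$---with the only modification being that the estimator now depends on the \emph{entire} sequence $Z^n$ rather than on $Z^{t+l}$. The consequence of that modification is that the per-letter conditional entropies $H(X_t \mid X^{t-1}, Z^n)$ telescope, by the standard chain rule for conditional entropy, into the ordinary conditional entropy $H(X^n \mid Z^n)$, rather than into the causally conditioned quantity $H(X^n \lVert Z^{n+l})$ that appears in the theorem.

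Concretely, I set $\hat{X}_t = \hat{X}^{opt}(Z^n)$ and let $U_t = X_t - \hat{X}_t$ denote the per-letter error modulo $m$, so that $D_t \triangleq \mathbb{E}[\rho(U_t)] = \mathbb{E}[\mathrm{\Lambda}(X_t, \hat{X}_t)]$. Because $\hat{X}_t$ is a deterministic function of $Z^n$, conditioned on $(X^{t-1}, Z^n)$ the map $X_t \mapsto U_t$ is a bijection, giving
\[H(X_t \mid X^{t-1}, Z^n) \;=\; H(U_t \mid X^{t-1}, Z^n) \;\leq\; H(U_t) \;\leq\; \phi(D_t),\]
where the first inequality is ``conditioning reduces entropy'' and the second is the definition of $\phi$. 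Summing over $t$, applying the chain rule to the left-hand side, dividing by $n$, and invoking the concavity of $\phi$ via Jensen's inequality yields
\[\tfrac{1}{n} H(X^n \mid Z^n) \;\leq\; \tfrac{1}{n}\sum_{t=1}^{n}\phi(D_t) \;\leq\; \phi\!\left(\tfrac{1}{n}\sum_{t=1}^{n} D_t\right),\]
after which the strict monotonicity of $\phi$ on the relevant range lets me apply $\phi^{-1}$ to both sides and conclude.

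I do not anticipate any substantive obstacle, since every step is a direct analogue of a step in the theorem's proof. The only point that merits a brief check is the entropy identity $H(X_t \mid X^{t-1}, Z^n) = H(U_t \mid X^{t-1}, Z^n)$, which holds because, conditional on the full observed sequence $Z^n$ (and a fortiori on $(X^{t-1}, Z^n)$), the estimate $\hat{X}_t$ is a known constant, making $U_t = X_t - \hat{X}_t \bmod m$ an invertible affine function of $X_t$. The substitution of $Z^n$ for $Z^{t+l}$ at this step is precisely what converts the sum of per-letter entropies into $H(X^n \mid Z^n)$ under the chain rule, giving the tighter-looking (non-causally-conditioned) form of the bound.
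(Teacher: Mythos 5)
Your proof is correct, but it takes a different (and more self-contained) route than the paper. The paper's own proof of this corollary is a two-line specialization: it invokes Theorem \ref{thm:mutual-information-lower-bounds} with $l=n$ (so that $Z^{t+l}$ contains all of $Z^n$ for every $t$) and then observes that the causally conditioned entropy $\sum_{t=1}^n H(X_t\mid X^{t-1},Z^n)$ collapses to $H(X^n\mid Z^n)$ by the chain rule. You instead re-run the entire argument of the theorem with the estimator depending on the full sequence $Z^n$, which reproduces the same chain of inequalities but does not rely on the theorem as a black box. Your version has two small advantages: it sidesteps the mild indexing awkwardness of a ``lookahead'' $l=n$ that formally references symbols $Z_{n+1},\dots,Z_{2n}$ beyond the observed sequence, and your step $H(X_t\mid X^{t-1},Z^n)=H(U_t\mid X^{t-1},Z^n)\le H(U_t)\le\phi(D_t)$ applies the definition of $\phi$ to the unconditional error variable $U_t$, which saves the second (per-realization) application of Jensen's inequality that the theorem's proof uses when conditioning on $Z^{t+l}=z^{t+l}$. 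The only point you leave implicit, as does the paper's corollary proof, is the range check that the optimal denoiser's average loss lies in $[0,\tfrac1m\sum_{i=1}^{m-1}\rho(i)]$ where $\phi$ is strictly increasing and $\phi^{-1}$ is well defined; this is justified exactly as in the theorem (the optimal denoiser does no worse than the trivial uniform one), so it is a cosmetic omission rather than a gap.
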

\begin{proof}
    Plugging in $l = n$ in Theorem \ref{thm:mutual-information-lower-bounds},
    \[\mathbb{E}\left[ \frac{1}{n} \sum_{t=1}^n \mathrm{\Lambda}\left(X_t,  \hat{X}^{opt}\left(Z^{n}\right)\right)\right] \geq \phi^{-1}\left( \frac{1}{n} \sum_{t=1}^n H(X_t|X^{t-1},Z^n)\right) = \phi^{-1}\left(\tfrac{1}{n}H(X^n|Z^n)\right).\]
\end{proof}
In Section \ref{experiment} we evaluate the optimal expected filtering losses along with these lower bounds for a specific noise-free source and channel.

\section{Universal Filtering Schemes} \label{universal_filtering_schemes}
In the previous section, we obtained bounds on the excess loss due to the mismatch between the true and the SPA assumed by the filter. The bounds are in terms of the relative entropy. If the SPA assumed by the filter satisfies $\lim_{n\to\infty} \frac{1}{n} D\left(P_{Z^n} \Vert Q_{Z^n}\right) = 0$, the excess loss converges to zero, ensuring the universality of the induced estimators under universal SPAs.

\subsection{Universality with respect to limited classes} \label{tightness}

The bounds derived in Section~\ref{bounding_excess_loss} are order-tight in various senses we now briefly discuss.  For concreteness, we focus on the bound and setting of Theorem~\ref{thm:normalizedloss}, with the implication that analogous points can be made for those of Theorem~\ref{thm:normalizedloss_delay} and Theorem \ref{thm:normalizedloss_lookahead}.

\subsubsection{Finite Uncertainty Set}
We first consider a finite uncertainty set of $N$ different sources. Let $\mathbf{X}$ be governed by a law from the finite  uncertainty set $\left\{ P_\mathbf{X}^{(i)} \right\}_{i=1}^N$, and let $\left\{ P_\mathbf{Z}^{(i)}\right\}$ be the respective laws of the noisy observation process.  We can define $Q_\mathbf{Z}$ to be the uniform mixture:
\[Q_\mathbf{Z} = \frac{1}{N} \sum_{i=1}^N P_\mathbf{Z}^{(i)}, \]
for which 
\[D \left(P_{Z^n}^{(i)} \Vert Q_{Z^n} \right) \leq \log N, \quad \forall 1 \leq i \leq N.\]
Applying this to the bound from Theorem~\ref{thm:normalizedloss} yields
\begin{align*}
    \Ebb \left[ \frac{1}{n} \sum_{t=1}^{n}\mathrm{\Lambda}\left(X_t, \hat{X}_B^{Q_{Z_t|Z^{t-1}}}\left(Z_t\right)\right) - \frac{1}{n} \sum_{t=1}^{n} \mathrm{\Lambda}\left(X_t, \hat{X}_t^\text{opt}\left(Z^t\right)\right)\right] &\leq \sqrt{2} C_1(\Pi) \mathrm{\Lambda}_\text{max} \sqrt{\frac{1}{n}D\left(P_{Z^n}^{(i)}\Vert Q_{Z^n}\right)} \\
    &\leq \sqrt{2} C_1(\Pi)\mathrm{\Lambda}_\text{max} \sqrt{\frac{\log N}{n}} \qquad \forall 1 \leq i \leq N , 
\end{align*}
where the expectation on the left side is under any of the sources in the uncertainty set. 
The $\sqrt{1/n}$ behavior of the bound is order tight, as shown similarly in, e.g., Proposition 3 of \cite{6555871}. The necessity of a small $\frac{\log N}{n}$ for universality follows from channel coding arguments analogous to those in \cite{weissman2003competitive}.   


\subsubsection{Parametric Uncertainty Set}
Consider now an uncertainty set $\left\{ P_\mathbf{X}^\theta \right\}_{\theta \in \Gamma}$
with $\Gamma \subset \mathbb{R}^k$ compact and the parametrization satisfying the standard smoothness conditions of, e.g., \cite{rissanen1984universal}. Let  $\left\{ P_\mathbf{Z}^\theta \right\}_{\theta \in \Gamma}$ be the set induced from $\left\{ P_\mathbf{X}^\theta \right\}_{\theta \in \Gamma}$ using $\Pi$. The minimum description length (MDL) framework 
\cite{rissanen1984universal, merhav1995strong, merhav1998universal} implies existence of  $Q_\mathbf{Z}$ satisfying
\[\frac{1}{n} D \left( P_{Z^n}^\theta \Vert Q_{Z^n} \right) \leq \frac{k}{2} \frac{\log n}{n} (1+\epsilon), \qquad \forall \theta \in \Gamma\]
for arbitrarily small $\epsilon$ and all sufficiently large $n$. Similar to the case for a finite uncertainty set, we can apply this inequality to the bound from Theorem~\ref{thm:normalizedloss}: 
\begin{align*}
    \Ebb\left[ \frac{1}{n} \sum_{t=1}^{n}\mathrm{\Lambda}\left(X_t, \hat{X}_B^{Q_{Z_t|Z^{t-1}}}\left(Z_t\right)\right) - \frac{1}{n} \sum_{t=1}^{n} \mathrm{\Lambda}\left(X_t, \hat{X}_t^\text{opt}\left(Z^t\right)\right)\right] &\leq \sqrt{2} C_1(\Pi) \mathrm{\Lambda}_\text{max} \sqrt{\frac{1}{n}D\left(P_{Z^n}^\theta \Vert Q_{Z^n}\right)} \\
    &\leq \sqrt{2} C_1(\Pi)\mathrm{\Lambda}_\text{max} \sqrt{\frac{k}{2} \frac{\log n}{n} (1+\epsilon)} \qquad \forall \theta \in \Gamma
\end{align*}
where the expectation on the left side is under any of the sources in the uncertainty set. 


\subsection{Universality with respect to Stationary Sources}
Building on the bounds in the previous section, we now explore universal filtering schemes for the causal, delayed, and non-causal settings. These schemes leverage universal SPAs. First, we highlight the existence of universal SPAs $Q$ satisfying
\begin{equation} \label{eq:universal_SPA}
    \lim_{n\to\infty} \frac{1}{n} D\left(P_{Z^n} \Vert Q_{Z^n}\right) = 0 \quad \forall P \text{ stationary}
\end{equation}
In principle, any universal compressor would induce a universal SPA, including Context Tree Weighting (CTW)\cite{382012} and Prediction by Partial Matching (PPM)\cite{1096090}. Some are more computationally efficient, such as the LZ78-based SPA, as elaborated in Section~\ref{experiment}.



 Given $Q$ satisfying (\ref{eq:universal_SPA}) , we define the asymptotically optimal universal filter $\hat{X}^{ao}_t\left(Z^t\right)$ as:
\begin{equation}\label{eq:Asymptotic_filter}
\hat{X}^{ao}_t\left(Z^t\right) \triangleq \hat{X}_B\left(F\left(Q_{Z_t|Z^{t-1}}, \Pi, Z_t\right)\right) =  \arg \min_{\hat{X}(\cdot):\Acal_Z^t \rightarrow \Acal_{\hat{X}}} \sum_{x\in\Acal_X} \mathrm{\Lambda}\left(x, \hat{X}\left(Z^t\right)\right) \frac{\Pi\left(x, Z_t\right) \left(\Pi^{-T}(x,\cdot) Q_{Z_{t}|Z^{t-1}}\right)}{ Q_{Z_t|Z^{t-1}}\left(Z_t|Z^{t-1}\right)}.
\end{equation}

E.g., when the loss function $\mathrm{\Lambda}$ is squared error, it reduces to:

\begin{equation}
    \hat{X}^{ao}_t\left(Z^t\right) = \Ebb_{X \sim F\left(Q_{Z_t|Z^{t-1}}, \Pi, Z_t\right)} \left[ X \right] = \sum_{x\in\Acal_X} x \cdot \frac{\Pi\left(x, Z_t\right) \left(\Pi^{-T}(x,\cdot) Q_{Z_{t}|Z^{t-1}}\right)}{ Q_{Z_t|Z^{t-1}}\left(Z_t|Z^{t-1}\right)}.
\end{equation}

Similar to \eqref{eq:Asymptotic_filter}, we can derive the asymptotically optimal universal filter in the case of delay $d \geq 1$:
\begin{align}
\hat{X}^{ao}_{t}\left(Z^{t-d}\right) &\triangleq \hat{X}_B \left( Q_{X_t|Z^{t-d}} \right) \nonumber\\
    &= \hat{X}_B \left( \Pi^{-T} Q_{Z_t|Z^{t-d}} \right)  \nonumber\\
    &= \arg \min_{\hat{X}(\cdot):\Acal_Z^{t-d} \rightarrow \Acal_{\hat{X}}} \sum_{x\in\Acal_X} \mathrm{\Lambda}\left(x, \hat{X}\left(Z^{t-d}\right)\right) \Pi^{-T} Q_{Z_t|Z^{t-d}} (x). \label{eq: universal filter with delay}
\end{align}

From Theorem~\ref{thm:normalizedloss_lookahead}, we can similarly derive the asymptotically optimal universal filter with a lookahead of $l \geq 0$ as: 

\begin{align}
\hat{X}^{ao}_{t}\left(Z^{t+l}\right) &\triangleq \hat{X}_B\left(F\left(Q_{Z_t|Z^{t-1},Z^{t+l}_{t+1}}, \Pi, Z_t\right)\right) \nonumber \\
&=  \arg \min_{\hat{X}(\cdot):\Acal_Z^{t+l} \rightarrow \Acal_{\hat{X}}} \sum_{x\in\Acal_X} \mathrm{\Lambda}\left(x, \hat{X}\left(Z^{t+l}\right)\right) \frac{\Pi\left(x, Z_t\right) \left(\Pi^{-T}(x,\cdot) Q_{Z_{t}|Z^{t-1},Z^{t+l}_{t+1}}\right)}{ Q_{Z_t|Z^{t-1},Z^{t+l}_{t+1}}\left(Z_t|Z^{t-1},Z^{t+l}_{t+1}\right)}. \label{eq: universal filter with lookahead}
\end{align}

These filters are universal in the sense that, for stationary sources, their normalized expected loss converges to the corresponding loss of the optimal filter.


\begin{definition}[Denoisability]
Let $\mathbf{X}$ and $\mathbf{Z}$ be jointly stationary. For fixed integer $k$, the denoisability is defined as:
\[ \mathbb{D}(\mathbf{X}, \mathbf{Z}, k) \triangleq \lim_{n \to \infty} \Ebb \left[ \frac{1}{n} \sum_{t=1}^n \mathbf{\Lambda} \left( X_t, \hat{X}_t^\text{opt}\left(Z^{t+k}\right) \right) \right].\]
\end{definition}
The denoisability represents the optimum distribution dependent performance of an estimator that estimates the symbols from $\mathbf{X}$ given the symbols from $\mathbf{Z}$, in the limit of large $n$, for fixed lookahead/delay $k$. That the limit in the right-hand side exists is a consequence of the monotonicity of the expression in $n$. $\mathbb{D}(\mathbf{X}, \mathbf{Z}, k)$ can be given a ``single-letter'' characterization analogous to that for denoising in Claim 2 of \cite{weissman2005universal}. The following is a direct consequence of the theorems in Section~\ref{bounding_excess_loss}.
\begin{theorem}
Let SPA $Q$ be universal in the sense of \eqref{eq:universal_SPA} and $\hat{X}^{ao}$ 
be the induced filters, as defined in \eqref{eq: universal filter with delay} and \eqref{eq: universal filter with lookahead}. 
Then, for all stationary $(\mathbf{X}, \mathbf{Z})$ connected via the channel $\Pi$, and every integer $k$,  

    \[\lim_{n \to \infty} \Ebb\left[ \frac{1}{n} \sum_{t=1}^{n}\mathrm{\Lambda}\left(X_t, \hat{X}^{ao}_{t}\left(Z^{t+k}\right)\right) \right] = \mathbb{D} (\mathbf{X}, \mathbf{Z}, k) .\]
\end{theorem}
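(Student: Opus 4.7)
The plan is to split the argument into the three regimes $k=0$, $k<0$ (delay $d=-k$), and $k>0$ (lookahead $l=k$), and in each case to decompose the expected normalized loss of $\hat{X}^{ao}$ into the expected normalized loss of $\hat{X}^{\text{opt}}$ plus a mismatch term that we control via the corresponding excess loss theorem from Section~\ref{bounding_excess_loss}. Concretely, I would write
\[
\Ebb\!\left[\tfrac{1}{n}\sum_{t=1}^{n}\mathrm{\Lambda}\!\left(X_t,\hat{X}^{ao}_{t}(Z^{t+k})\right)\right]
= \Ebb\!\left[\tfrac{1}{n}\sum_{t=1}^{n}\mathrm{\Lambda}\!\left(X_t,\hat{X}^{\text{opt}}_{t}(Z^{t+k})\right)\right]
+ \Delta_n(k),
\]
where $\Delta_n(k)$ is the excess loss term.

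First I would handle $k=0$. By Theorem~\ref{thm:normalizedloss},
\[
0 \leq \Delta_n(0) \leq \sqrt{2}\,C_1(\Pi)\,\mathrm{\Lambda}_{\text{max}}\,\sqrt{\tfrac{1}{n}D(P_{Z^n}\Vert Q_{Z^n})}.
\]
Since $(\mathbf{X},\mathbf{Z})$ are jointly stationary and $Q$ is universal in the sense of \eqref{eq:universal_SPA}, the right-hand side vanishes as $n\to\infty$, so $\Delta_n(0)\to 0$. For $k=-d$ with $d\geq 1$, Theorem~\ref{thm:normalizedloss_delay} gives the same bound up to a harmless factor of $\sqrt{d}$, which is constant in $n$ and therefore still vanishes. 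For $k=l\geq 0$, Theorem~\ref{thm:normalizedloss_lookahead} yields
\[
0 \leq \Delta_n(l) \leq \sqrt{2}\,C_1(\Pi)\,\mathrm{\Lambda}_{\text{max}}\,\sqrt{\tfrac{l+1}{n}D(P_{Z^{n+l}}\Vert Q_{Z^{n+l}})},
\]
and since $l$ is fixed, $\tfrac{1}{n}D(P_{Z^{n+l}}\Vert Q_{Z^{n+l}}) = \tfrac{n+l}{n}\cdot\tfrac{1}{n+l}D(P_{Z^{n+l}}\Vert Q_{Z^{n+l}})\to 0$ by universality applied at horizon $n+l$.

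Having shown that $\Delta_n(k)\to 0$ in each regime, it remains to identify the limit of the optimal-filter term. That limit is exactly $\mathbb{D}(\mathbf{X},\mathbf{Z},k)$ by the definition of denoisability given just above the theorem; the existence of the limit is noted there as a consequence of monotonicity in $n$. Adding a vanishing sequence to a convergent one preserves the limit, so the claim follows.

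The only subtle step is verifying that the universal-SPA hypothesis, which is stated for horizon $n$, actually controls the relative entropy at the slightly shifted horizon $n+l$ appearing in the lookahead bound; I expect this to be the main (though still routine) obstacle, and it is handled by the simple horizon-rescaling above, using that $l$ is a fixed integer independent of $n$. The delay and causal cases are immediate from the respective theorems, and no single-letterization of $\mathbb{D}(\mathbf{X},\mathbf{Z},k)$ is needed for this convergence statement.
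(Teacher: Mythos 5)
Your proposal is correct and follows essentially the same route the paper intends: the paper itself gives no separate proof, stating only that the theorem is "a direct consequence of the theorems in Section~\ref{bounding_excess_loss}," which is precisely your decomposition into the optimal-filter loss (converging to $\mathbb{D}(\mathbf{X},\mathbf{Z},k)$ by definition) plus a nonnegative excess term killed by Theorems~\ref{thm:normalizedloss}, \ref{thm:normalizedloss_delay}, and \ref{thm:normalizedloss_lookahead} together with the universality condition \eqref{eq:universal_SPA}. Your handling of the $n+l$ horizon shift in the lookahead case is a correct and worthwhile detail that the paper leaves implicit.
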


\section{Implementation and Practical Considerations} \label{implementation}
The universal filters of the previous section are applicable in real world scenarios where exact models are unavailable, especially when based on SPAs with computationally efficient implementations. 
\subsection{Universal Causal Filtering Algorithm}
This subsection outlines the steps required for estimating the channel matrix and filtering a noisy signal. The process is as detailed below:
\begin{itemize}
    \item Input:
    \begin{itemize}
        \item Noisy signal $z^n$
        \item Training data for channel matrix estimation: Historical pairs of $(x_{t},z_{t})$
        \item SPA $Q_{Z_t|z^{t-1}}$
        \item Symbol alphabets $\Acal_{X}$ and $\Acal_{Z}$ of noise-free and noisy signals, respectively
    \end{itemize}
    \item Output:
    \begin{itemize}
         \item Estimate of channel matrix $\Pi$ 
        \item Filtered signal $\hat{X}^{ao}$
    \end{itemize}
    \item The filtering process is as follows:
    \begin{enumerate}
        \item The estimation of $\Pi$
            \begin{itemize}
                \item Initialize a prior distribution model: For $|\Acal_{x}|$ is the size of the alphabet $\Acal_{X}$, start with an initial matrix for $\Pi(X, Z)$ of dimension $|\Acal_{X}| \times |\Acal_{Z}|$ 
                \item Empirical estimation of transition probabilities: Count the occurrences of each pair of $(X,Z)$ in the training data and take\footnote{For notational simplicity, we use $\Pi$ also to denote the estimated rather than the actual channel matrix when the latter is unavailable.}
                \begin{equation}
                    \Pi(x,z) = \hat{P}(Z_t=z|X_t=x) = \frac{\text{Count}(Z=z,X=x)}{\text{Count}(X=x)}.
                \end{equation}
                
            \end{itemize}

        \item Compute filtered estimate, iterating through the signal $Z^n$. For every $t$ between $1$ and $n$, produce the corresponding value of $\hat{X}^{ao}_t$ as follows:: 
            \begin{itemize}
                \item With $Q_{Z_t|z^{t-1}}$ and $\Pi$ calculate the term $ \hat{P}_{X_t|z^t} = F\left(Q_{Z_t|z^{t-1}}, \Pi, z_t\right)$
                \item Loop: For each possible value $x \in \Acal_{X}$, compute the weighted loss using 
                \[\mathrm{\Lambda}\left(x, \hat{X}\left(z^t\right)\right) \times \hat{P}_{X_t|z^t}(x). \]
                \item Select the value of $\hat{X}(z^t)$ that minimizes the summed weighted loss.
            \end{itemize}
    \end{enumerate}
\end{itemize}
\subsection{Causal Filter with Delay}
Although the SPA is specified in terms of  $Q_{Z_t|Z^{t-1}}, \forall t \leq n$, we are required to calculate $Q_{Z_{t}|Z^{t-d}}$ where $d \geq 1$. This calculation boils down to
marginalizing out $Z_{t-d+1}^{t-1}$ from  $Q_{Z_{t-d+1}^t|Z^{t-d}}$ which has complexity growing exponentially in $d$.  
Specifically, if the original SPA had complexity $O(1)$ per time step, adjusting for a delay increases this to approximately $O(|\Acal_Z|^d)$. This complexity can make the computation infeasible for large values of $d$ or $\Acal_Z$.  

Using Monte Carlo simulations to approximate $Q_{Z_t|Z^{t-d}}$ 
can be an effective way to manage this complexity. This approach involves generating a large number of sample sequences $Z_{t-d+1}^t$ from $Q_{Z_{t-d+1}^t|Z^{t-d}}$, 
observing the outcomes of $Z_t$,  
\cite{7339702,1550195}. 
The computational complexity is $O(M d)$ (instead of exponential in $d$), where $M$ is the number of simulations. By controlling $M$, one can balance accuracy and computational complexity.
A higher $M$ will improve the precision, but even modest values can yield reasonable approximations, as is quantified in the following.    
\begin{theorem} \label{thm:montecarlo}
    For $d \geq 1$, let $\hat{Q}_{Z_t|Z^{t-d}}^M$ be an empirical distribution calculated by sampling from $Q_{Z_t|Z^{t-d}}$ $M$ times. 
\begin{align*}
    &\Ebb\left[ \frac{1}{n} \sum_{t=1}^{n}\mathrm{\Lambda}\left(X_t, \hat{X}_B\left(\Pi^{-T}\hat{Q}_{Z_t|Z^{t-d}}^M \right)\right) - \frac{1}{n} \sum_{t=1}^{n} \mathrm{\Lambda}\left(X_t, \hat{X}_t^{\text{opt}}\left(Z^{t-d}\right)\right)\right] \\
    &\leq \mathrm{\Lambda}_\text{max} \left( C_{2}(\Pi) \sqrt{\frac{d}{n}D\left(P_{Z^n}\Vert Q_{Z^n}\right)} + C_{3}(\Pi) \frac{1}{\sqrt{M}} \right),
\end{align*}
where 
\[C_2(\Pi) = 3 \sqrt{2} \left|\Pi^{-T}\right|_\text{max} \left| \Acal_X \right| \]
\[C_3(\Pi) = \left|\Pi^{-T}\right|_\text{max} \left| \Acal_X \right| 2^{|\Acal_Z|} \sqrt{\frac{\pi}{2}} \]
\end{theorem}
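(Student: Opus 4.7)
The plan is to decompose the excess loss at each time $t$ via the triangle inequality into a \emph{mismatch} part (the SPA $Q$ vs.\ the true law $P$) and a \emph{Monte Carlo} part ($\hat Q^M$ vs.\ the exact $Q$). Introducing the non-empirical delayed SPA estimator $\hat X^Q_t \triangleq \hat X_B(\Pi^{-T} Q_{Z_t|Z^{t-d}})$ and the actually-implemented estimator $\hat X^{\hat Q^M}_t \triangleq \hat X_B(\Pi^{-T}\hat Q^M_{Z_t|Z^{t-d}})$,
\begin{equation*}
\mathrm{\Lambda}(X_t, \hat X^{\hat Q^M}_t) - \mathrm{\Lambda}(X_t, \hat X^{\text{opt}}_t(Z^{t-d})) = \bigl[\mathrm{\Lambda}(X_t, \hat X^{\hat Q^M}_t) - \mathrm{\Lambda}(X_t, \hat X^Q_t)\bigr] + \bigl[\mathrm{\Lambda}(X_t, \hat X^Q_t) - \mathrm{\Lambda}(X_t, \hat X^{\text{opt}}_t(Z^{t-d}))\bigr].
\end{equation*}
Averaged over $t$ and in expectation, the second bracket is exactly what Theorem~\ref{thm:normalizedloss_delay} controls, contributing the $\sqrt{(d/n)D(P_{Z^n}\|Q_{Z^n})}$ piece with constant of order $\sqrt{2}C_1(\Pi)\mathrm{\Lambda}_{\max}$.

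For the first bracket I would apply Lemma~\ref{lem:loss_bound_2}, which bounds the per-letter loss gap between two Bayes responses by the $\ell_1$ distance of their argument vectors, and then use $\|\Pi^{-T}v\|_1 \leq |\Pi^{-T}|_{\max}|\Acal_X|\|v\|_1$ to reduce the task to controlling $\Ebb[\|\hat Q^M_{Z_t|Z^{t-d}} - Q_{Z_t|Z^{t-d}}\|_1]$. Conditional on $Z^{t-d}$, $\hat Q^M$ is the empirical PMF of the $Z_t$-coordinate of $M$ i.i.d.\ draws of $Z_{t-d+1}^t$ from $Q_{Z_{t-d+1}^t|Z^{t-d}}$, so I would bound this via the Scheff\'e-set identity $\|p-q\|_1 = 2\max_{A\subseteq\Acal_Z}|p(A)-q(A)|$, combined with the Bernoulli mean-absolute-deviation bound $\Ebb[|\hat p(A)-Q(A)|] \leq \sqrt{\pi/(2M)}$ (traceable to $\Ebb|N(0,1)| = \sqrt{2/\pi}$), summed over the $2^{|\Acal_Z|}$ subsets of $\Acal_Z$. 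This delivers the Monte Carlo contribution $C_3(\Pi)/\sqrt{M}$. The additional factor inside $C_2(\Pi)=3\sqrt{2}C_1(\Pi)$ beyond the bare coefficient of Theorem~\ref{thm:normalizedloss_delay} is absorbed from running the Lemma~\ref{lem:loss_bound_2}/Pinsker/Jensen/chain-rule pipeline through the intermediate estimator $\hat X^Q_t$: the triangle inequality effectively introduces an extra copy of the mismatch distance that needs to be routed through the same Pinsker conversion, and the constants are tracked through each step.

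The main technical obstacle is the Monte Carlo $\ell_1$ bound. A naive coordinate-wise Jensen argument ($\Ebb|\hat Q^M_i - Q_i| \leq \sqrt{Q_i(1-Q_i)/M}$ summed via Cauchy--Schwarz) would yield only an $O(\sqrt{|\Acal_Z|/M})$ decay, which has the wrong constant structure for $C_3(\Pi)$; reproducing the $2^{|\Acal_Z|}\sqrt{\pi/(2M)}$ form requires committing to the Scheff\'e-subset decomposition together with the sharp half-normal constant. A secondary subtlety is verifying that the Monte Carlo draws are conditionally independent of the true $(X^n, Z^n)$ process given $Z^{t-d}$, so that the empirical-distribution concentration applies cleanly after taking the outer expectation, and that the per-$t$ bound, which is uniform in the conditioning, can be pulled out of the sum over $t$.
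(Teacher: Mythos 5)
Your decomposition through the intermediate estimator $\hat X_B(\Pi^{-T}Q_{Z_t|Z^{t-d}})$, the use of Theorem~\ref{thm:normalizedloss_delay} for the mismatch bracket, and an empirical-distribution $\ell_1$ bound for the Monte Carlo bracket is exactly the architecture of the paper's proof. The one step that is wrong as literally written is the claim that Lemma~\ref{lem:loss_bound_2} ``bounds the per-letter loss gap between two Bayes responses by the $\ell_1$ distance of their argument vectors,'' so that the first bracket reduces to controlling $\Ebb\left[\|\hat Q^M_{Z_t|Z^{t-d}}-Q_{Z_t|Z^{t-d}}\|_1\right]$ alone. That lemma requires one of the two PMFs to be the \emph{true} conditional law: the underlying comparison (Lemma~\ref{lem:Bayes_bound}) controls $\lambda_{\hat X_B(\mathbf{v}_2)}^\top \mathbf{v}_1 - \lambda_{\hat X_B(\mathbf{v}_1)}^\top \mathbf{v}_1$, where the vector defining the expectation is one of the two being compared. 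For two mismatched vectors evaluated under a third (true) distribution the analogous statement is false: for Hamming loss on a binary alphabet, $\mathbf{v}_1=(\tfrac12+\epsilon,\tfrac12-\epsilon)$ and $\mathbf{v}_2=(\tfrac12-\epsilon,\tfrac12+\epsilon)$ produce opposite Bayes responses whose expected losses under $P_X=(1,0)$ differ by $1$ while $\|\mathbf{v}_1-\mathbf{v}_2\|_1=4\epsilon$. The correct single-letter bound (the paper's Lemma~\ref{lem:loss_bound_arbitrary}) routes both estimators through the true law and yields $C_1(\Pi)\mathrm{\Lambda}_\text{max}\left(\|\hat Q^M-P\|_1+\|Q-P\|_1\right)\le C_1(\Pi)\mathrm{\Lambda}_\text{max}\left(\|\hat Q^M-Q\|_1+2\|Q-P\|_1\right)$; the \emph{two} extra copies of $\|Q-P\|_1$ (not one) are what combine with the second bracket to give the coefficient $3\sqrt2$ in $C_2(\Pi)$. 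You gesture at this at the end (``an extra copy of the mismatch distance''), so the repair is small, but the first-bracket reduction as you state it does not go through.

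For the Monte Carlo term your route differs from the paper's and is valid: the paper invokes the deviation inequality $\Pr\left(\|\hat Q^M-Q\|_1\ge\epsilon\right)\le 2^{|\Acal_Z|}e^{-M\epsilon^2/2}$ and integrates the tail, which is where its $\sqrt{\pi/2}$ actually originates, whereas your Scheff\'e-plus-per-event argument with the variance bound $\Ebb\left[|\hat p(A)-Q(A)|\right]\le \tfrac1{2\sqrt M}$ gives $\Ebb\left[\|\hat Q^M-Q\|_1\right]\le 2^{|\Acal_Z|}/\sqrt M$, slightly sharper than $C_3(\Pi)$ requires (though pairing the factor $2$ from Scheff\'e with the looser per-event constant $\sqrt{\pi/(2M)}$ you quote would overshoot $C_3$ by a factor of $2$). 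The conditional-independence point you raise is handled in the paper by nesting the expectations: first over the $M$ realizations given $Z^{t-d}$, then over $Z^{t-d}$.
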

 First, on LHS, add and subtract the normalized loss term for the case using $\hat{X}_B\left(\Pi^{-T}Q_{Z_t|Z^{t-d}} \right)$. Now we have four terms, and then we use Theorem~\ref{thm:normalizedloss_delay} to bound the difference between two of those terms. For the difference between the other two, we use Lemma~\ref{lem:loss_bound_arbitrary} to first bound it by the sum of two L1 distance terms. Then use Theorem~\ref{thm:normalizedloss_delay} and  Lemma~\ref{lem:empirical_distr_bound}.

\subsection{Non-causal Filter with Lookahead}

In the case of non-causal filtering with lookahead $l \geq 0$, given $Q_{Z_t|Z^{t-1}}$ from SPA, using Bayesian rule we can calculate $Q_{Z_t|Z^{t-1},Z^{t+l}_{t+1}}$ as follows:
\begin{equation}
Q\left(Z_t|Z^{t-1},Z^{t+l}_{t+1}\right) = \frac{Q\left(Z_t|Z^{t-1}\right)\cdot Q\left(Z^{t+l}_{t+1}|Z^{t-1},Z_t\right)}{\sum_{a\in \Acal_Z}Q\left(a|Z^{t-1}\right)\cdot Q\left(Z^{t+l}_{t+1}|\left(Z^{t-1},a\right)\right)}, 
\end{equation}
where
\begin{equation}
Q\left(Z^{t+l}_{t+1}|Z^{t-1},Z_t\right) = \prod_{k=t+1}^{t+l} Q\left(Z_k|Z^{k-1}\right).    
\end{equation}
This approach efficiently incorporates future observations $Z_{t+1}^{t+l}$ into the estimation process by leveraging the chain rule of probability. Notably, the complexity of this computation per time step is linear in $l$, the length of the lookahead window, which is practical for moderate $l$.

\section{Experiments} \label{experiment}
In this section, we present our experiments on implementing a universal filter induced by a universal SPA. Although universal SPAs can be accessed through a variety of established methods—including Context Tree Weighting (CTW)\cite{382012} and Prediction by Partial Matching (PPM)\cite{1096090}—each with its strengths and limitations in capturing dependencies, estimating probabilities, redundancy rates and complexity, we focus here specifically on the celebrated LZ78-based SPA. This choice enables efficient and adaptable handling of diverse data structures by leveraging the LZ78 algorithm's ability to capture increasingly elaborate context information with complexity essentially linear in the amount of data processed. We leave experimentation with filters induced by other universal SPAs to future work. 



To assess the performance of the universal filter, we apply it to a Markov process setting,  allowing us to benchmark against established linear filtering techniques, as well as the optimum Bayes performance.
 In particular, we compare our method’s performance with the Wiener filter, a traditional linear filtering technique, as well as with the theoretical optimum of non-linear filtering for this setting. 

\subsection{LZ78 Compression Algorithm}
In this work, we use LZ78 to create an SPA $Q_{Z_t|z^{t-1}}$, which is an estimate of $\Pbb\left(Z_t=z|Z^{t-1}=z^{t-1}\right)$. LZ78 encoding, as described in \cite{originalLZ78paper}, is based on constructing a prefix tree. This tree is created by dividing a given sequence into a series of consecutive subsequences, referred to as phrases.

\begin{definition}
[LZ78 Tree]
    The LZ78 tree constructs phrases from a sequence through the following procedure: 
    \begin{enumerate} 
        \item The LZ78 tree begins with a single root node. 
        \item Repeat the following until reaching the end of the sequence: 
        \begin{enumerate} 
            \item Begin at the root and traverse down the tree according to the upcoming symbols in the sequence until reaching a leaf node.
            \item Add a new node extending from this leaf, representing the next symbol in the sequence.
            \item Each LZ78 phrase is formed from the sequence portion used in the traversal, combined with the symbol of the new branch just created. 
    
    \end{enumerate} 
\end{enumerate}
\end{definition}
The number of nodes in the prefix tree, excluding the root, is equivalent to the number of phrases that have been parsed.


\subsection{Defining the LZ78 Tree of Sequential Probability Assignments}

To address the limitations of naive empirical frequency-based SPAs, we leverage the LZ78 tree structure to create an adaptive SPA.  The naive SPA, as defined in \cite{sagan2024familylz78baseduniversalsequential}: 
\[q^\text{naive}\left(a|Z^{t-1}\right) = \frac{N_{Z_t}^{LZ}\left(a|Z^{t-1}\right)}{t-1}.\]
$N^{LZ}_{Z_t}\left(a|Z^{t-1}\right)$, $a \in \Acal_Z$, is the number of times of this symbol being a prefix of phrases up to $Z^{t-1}$ in $\mathcal{L}\left(Z^{t-1}\right)$, where $a$ is a potential ``next symbol'' in alphabet $\Acal_Z$ of sequence at position $Z_t$ . However, this naive SPA suffers from infinite log loss \footnote{The quality of an SPA is usually evaluated via log loss.} in cases where a symbol has not appeared before. 

To mitigate this, we use the Dirichlet SPA defined in \cite{sagan2024familylz78baseduniversalsequential}: 
\[q^{LZ78}\left(a|Z^{t-1}\right) = \frac{N^{LZ}_{Z_t}\left(a|Z^{t-1}\right) + \gamma}{\sum_{b \in \Acal_Z} N^{LZ}_{Zt}\left(b | Z^{t-1}\right) + \gamma|\Acal_Z|}.\]
 The choice of $\gamma = \frac{1}{2}$ is essentially (to the leading term) minimax optimal with respect to the log loss incurred by the SPA on any individual sequence \cite{sagan2024familylz78baseduniversalsequential}.

\subsection{Optimality of the LZ78 Family of SPAs}
The LZ78 family of SPAs are optimal in the sense that, for any individual sequence, they achieve a limit supremum on log loss that matches or surpasses the performance of any finite-state SPA \cite{sagan2024familylz78baseduniversalsequential}. This universality result demonstrates their effectiveness in both individual sequence scenarios and stationary stochastic processes, where the expected log loss approaches the entropy rate of the source as the sequence length increases. Furthermore, for specific classes of sequences, LZ78 SPAs strictly outperform finite-state SPAs, underscoring their superiority in various prediction tasks. This optimality forms the basis for choosing LZ78 SPAs in our experiments.
\subsection{Modified LZ78 Tree Algorithm}
In this section, we explain the modified universal filtering algorithm with LZ78 SPA.  

\subsubsection{Input Shifting}
To improve the efficiency of the LZ78 SPA tree given limited training data, we propose a modified training approach that involves input shifting. Instead of expanding the tree strictly based on each parsed phrase, we grow it by advancing the input by one symbol at a time and starting each traversal at the root. Specifically, after processing a sequence, the next training sequence begins with the input shifted by one symbol relative to the previous sequence. This approach ensures that each symbol sequence in the dataset contributes to the tree from multiple starting points.

By implementing this shifting mechanism, each sequence in the dataset has the opportunity to traverse the tree multiple times, starting from different positions. This repetition effectively increases the training frequency of symbol sequences, even under conditions of limited data. The result is a more robust and well-trained LZ78 SPA tree structure that better captures the probabilistic relationships inherent in the dataset.

The underlying assumption of this method is that the noise characteristics in the dataset are stationary, meaning they remain consistent across shifts. Under these conditions, the probability assignments for sequences are preserved across the multiple traversals introduced by shifting, enhancing the model's overall predictive capabilities. Such an approach is also seen in \cite{Begleiter_2004}.
\subsubsection{LZ78 Tree Pruning}
In the context of LZ78-based SPA, pruning infrequently traversed branches in the probability assignment tree effectively reduces model complexity while preserving predictive accuracy. By introducing a threshold parameter $N_{th}$, branches that do not meet a minimum traversal count are pruned, effectively excluding low-occurrence events that contribute less significantly to the overall probability estimation. This approach prevents overfitting to rare events and helps maintain an SPA that remains close to the true probability distribution. Moreover, pruning enhances computational efficiency by focusing on well-traversed branches, particularly in resource-constrained environments or applications demanding rapid sequential probability access. This methodology is crucial for practical implementations, allowing the SPA to achieve a balanced trade-off between prediction fidelity and computational tractability.

\begin{algorithm}[h]
\caption{Universal Algorithm}\label{alg:filtering}
\begin{algorithmic}[1]
\STATE Initialization $\Pi \gets [0]$, $\Pi \in \mathbb{R}^{|\Acal_X| \times |\Acal_Z|}$; \COMMENT{Channel Matrix}
\STATE Output $\hat{X}^{ao} \gets [\;]$\COMMENT{Filtered output}
\STATE Count the occurrences of each historical pair of (X,Z) in the training data.
\STATE Compute: $\Pi^{approx}(x,z) = \hat{P}(Z_t=z|X_t=x) = \frac{\text{Count}(Z=z,X=x)}{\text{Count}(X=x)}$
\STATE Generate LZ78 SPA Tree from training data
\IF{Causal Universal Filter}
    \item \[Q(z) = Q_{Z_t|Z^{t-1}}(z) = \frac{N^{LZ}_{Z_t}\left(z|Z^{t-1}\right) + \gamma}{\sum_{b \in \Acal_Z} N^{LZ}_{Z_t}\left(b | Z^{t-1}\right) + \gamma |\Acal_Z|}\]
\ELSIF{Causal Universal Filter with Delay $d \geq 1$}  
    \item \[Q(z) = Q_{Z_t|Z^{t-d}}(z) = \frac{\sum_{\forall Z_t = z} N^{LZ}_{Z_t}\left(z|Z^{t-d}\right) + \gamma}{\sum_{b \in \Acal_Z} N^{LZ}_{Z_t}\left(b | Z^{t-d}\right) + \gamma |\Acal_Z|} \] 
    \COMMENT{$\sum_{\forall Z_t = z} N^{LZ}_{Z_t}\left(z|Z^{t-d}\right)$
    marginalizes all branches with $Z_t$th order symbol being z}
\ELSIF{Non-causal Universal Filter with Lookahead $l \geq 0$}
    \item \[Q(z) = Q_{Z_t|Z^{t-1},Z^{t+l}_{t+1}}(z) = \frac{ N^{LZ}_{Z_{t+l}}\left(Z_{t+l}|Z^{t-1}, z, Z_{t+1}^{t+l-1}\right) + \gamma}{\sum_{b \in \Acal_Z} N^{LZ}_{Z_{t+l}}\left(Z_{t+l}|Z^{t-1}, b, Z_{t+1}^{t+l-1}\right) + \gamma |\Acal_Z|}\]
\ENDIF
\WHILE{$t \leq n$}
    \STATE With $Q$ and $\Pi$ calculate the term $ \hat{P}^{approx} = F\left(Q, \Pi^{approx}, Z_t\right)$
    \STATE Loop: For each possible value $x \in \Acal_{X}$, compute the weighted loss using 
                \[\mathrm{\Lambda}\left(x, \hat{X}\left(Z^t\right)\right) \times \hat{P}^{approx}(x) \]
    \STATE Output $\gets$ the value of $\hat{X}\left(Z^t\right)$ that minimizes the summed weighted loss. 
\ENDWHILE
\end{algorithmic}
\end{algorithm}

\begin{algorithm}[h]
\caption{Universal Algorithm with Delay using Monte Carlo}\label{alg:filtering_MC}
\begin{algorithmic}[1]
\STATE Initialization $\Pi \gets [0]$, $\Pi \in \mathbb{R}^{|\Acal_X| \times |\Acal_Z|}$; \COMMENT{Channel Matrix}
\STATE Output $\hat{X}^{ao} \gets [\;]$\COMMENT{Filtered output}
\STATE Count the occurrences of each historical pair of (X,Z) in the training data.
\STATE Compute: $\Pi^{approx}(x,z) = \hat{P}(Z_t=z|X_t=x) = \frac{\text{Count}(Z=z,X=x)}{\text{Count}(X=x)}$
\STATE Monte Carlo: For each experiment, start at the root of the LZ78 SPA tree, traverse the tree with sequence $Z^{t-d}$, and generate the next D symbols by sampling from the probability distribution defined at each tree node. 
\STATE Count the occurrence of $z\in \Acal_Z$ in the last symbol of generated sequences. 
    \item \[Q = Q_{Z_t|Z^{t-d}}\left(Z_t = z|Z^{t-d}\right) = \frac{\sum_{\forall Z_t = z} N^{MC}_{Z_t}\left(z|Z^{t-d}\right) + \gamma}{\sum_{b \in \Acal_Z} N^{MC}_{Z_t}\left(b | Z^{t-d}\right) + \gamma |\Acal_Z|} \] 

\WHILE{$t \leq n$}
    \STATE With $Q$ and $\Pi$ calculate the term $ \hat{P}^{approx} = F\left(Q, \Pi^{approx}, Z_t\right)$
    \STATE Loop: For each possible value $x \in \Acal_{X}$, compute the weighted loss using 
                \[\mathrm{\Lambda}\left(x, \hat{X}\left(Z^t\right)\right) \times \hat{P}^{approx}(x) \]
    \STATE Output $\gets$ the value of $\hat{X}\left(Z^t\right)$ that minimizes the summed weighted loss. 
\ENDWHILE
\end{algorithmic}
\end{algorithm}

\subsection{Markov case experiment setup} \label{Markov_setup}
To check our LZ78 tree algorithm's performance, we conducted an experiment on simulated data where the noise-free data is a Markov process and it's corrupted by additive noise. 

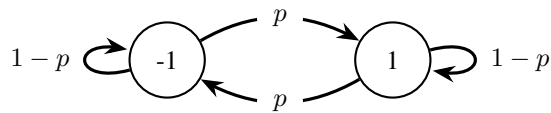
\begin{figure}[h]
    \centering

    \begin{tikzpicture}
    \begin{scope}[every node/.style={circle,thick,draw, minimum size = 1cm}]
        \node (-1) at (0,0) {-1};
        \node (1)  at (3,0) { 1};
    \end{scope}
    \begin{scope}[>={Stealth[black]},
                every node/.style={fill=white,circle},
                every edge/.style={draw=black,very thick}]

        \path [->] (-1) edge[bend left=30] node {$p$} ( 1); 
        \path [->] ( 1) edge[bend left=30] node {$p$} (-1); 
        \path [->] ( 1) edge[loop right] node {$1-p$} ( 1); 
        \path [->] (-1) edge[loop left] node {$1-p$} (-1); 
    \end{scope}

    \end{tikzpicture} 
    
    \caption{Two-state Markov Source}
    \label{fig:markov_source}
\end{figure}

The noise-free data $X_t$ is generated from a first-order symmetric binary Markov source with transition probability $p$. It is corrupted by an iid additive noise $N_t$, distributed as:
\begin{equation}
    N_{t} =
    \begin{cases}
      -1 & \text{w.p. $\frac{1}{2}$}\\
      1 & \text{w.p. $\frac{1}{2}$}
    \end{cases}       
\end{equation}
\begin{equation}
    Z_{t} =  X_{t}+N_{t}
\end{equation}

For this noisy channel, the channel matrix $\Pi$ can be taken as:
\begin{equation}
    \Pi = 
    \begin{bmatrix}
        \frac{1}{2} & \frac{1}{2} & 0 \\
        \frac{1}{3} & \frac{1}{3} & \frac{1}{3} \\
        0 & \frac{1}{2} & \frac{1}{2}
    \end{bmatrix}
\end{equation}
where $X_{t}\in \{ -1,0,1 \}$ and $Z_{t}\in \{-2,0,2 \}.$ Note that $X_t$ can only be either -1 or 1, but to comply with the setting of equal alphabets for both $X_t$ and $Z_t$, we added 0 to $X_t$'s alphabet. Since $\Pr(X_t=0)=0$, the second row of $\Pi$ is in fact a degree of freedom, and any probability vector is a valid choice\footnote{An alternative to this approach would be to leave $\Pi$ as a $2 X 3$ matrix and use its Moore–Penrose generalized inverse, as suggested in \cite{weissman2005universal}}.

We generate $X_1$ by sampling from -1 or 1 with equal probability and the remaining components of $X^n$ according to the  Markov chain described above. We then corrupted by the additive noise to produce  $Z^n$. For filters requiring pretraining, subsequences of length 9,920,000 are extracted from the noise-free and noisy sequences for training. Performance evaluation is conducted using test subsequences of length 80,000 extracted from each sequence. The mean squared error (MSE) between the filtered subsequences and their corresponding noise-free subsequences is computed to quantify performance.

\subsection{Baseline 1: Theoretical Limit}

In this setting, we evaluate the theoretical limit of expected loss by deploying the Bayes optimal schemes, using forward and backward recursions \cite{ephraim2002hidden}.
Let $a_{X_{t-1}, X_t}$ denote the transition probability from $X_{t-1}$ to $X_t$, and $b\left(Z_t|X_t\right)$ be an probability of observing $Z_t$ given the state $X_t$. To better distinguish $P\left(X_t|Z^t\right)$ from $P\left(X_t|Z^{t-1}\right)$, let $\alpha\left(X_t|Z^t\right) = P\left(X_t|Z^t\right)$.\\
Forward recursion is defined as follows.
\[\alpha\left(X_t|Z^t\right) = \frac{P\left(X_t|Z^{t-1}\right) b\left(Z_t|X_t\right)}{\sum_{X_t=1}^M P\left(X_t|Z^{t-1}\right) b\left(Z_t|X_t\right)}, \qquad t=1, 2, \cdots, n\]
\[P\left(X_t|Z^{t-1}\right) = \sum_{X_{t-1}=1}^M a_{X_{t-1}, X_t} \alpha\left(X_{t-1}|Z^{t-1}\right), \qquad t=2, 3, \cdots, n\]
where $P\left(X_1|Z^0\right)$ is an initial distribution of $X_1$ and $M$ is a size of alphabet of $X_t$. 

Using forward recursion, we can get the theoretical limit of causal filter with or without delay. First, for the causal filter without delay, we can recursively calculate $\alpha\left(X_t|Z^t\right)=P\left(X_t|Z^t\right)$ starting from $P\left(X_1|Z^0\right)$ since we know $a_{X_{t-1}, X_t}$ and $b\left(Z_t|X_t\right)$. Next, for the causal filter with delay $d \geq 1$, we can calculate $\alpha\left(X_{t-d}|Z^{t-d}\right)=P\left(X_{t-d}|Z^{t-d}\right)$ similarly. Then we can get $P\left(X_t|Z^{t-d}\right)$ via
\begin{align*}
    P\left(X_{t-d+k} | Z^{t-d}\right) &= \sum_{X_{t-d+k-1}} P\left(X_{t-d+k}, X_{t-d+k-1} | Z^{t-d}\right) \\
    &= \sum_{X_{t-d+k-1}} P\left(X_{t-d+k} | X_{t-d+k-1} Z^{t-d}\right) P\left(X_{t-d+k-1} | Z^{t-d}\right) \\
    &= \sum_{X_{t-d+k-1}} a_{X_{t-d+k-1}, X_{t-d+k}} P\left(X_{t-d+k-1} | Z^{t-d}\right),  \qquad \qquad k=1, 2, \cdots, d
\end{align*}
The backward recursion can be dervied as 
\[P\left(X_t|Z^n\right) = \alpha\left(X_t|Z^t\right) \sum_{X_{t+1}=1}^M \frac{a_{X_t, X_{t+1}} P\left(X_{t+1}|Z^n\right)}{P\left(X_{t+1}|Z^t\right)}, \qquad t=n-1, n-2, \cdots, 1\]
where $P\left(X_n|Z^n\right) = \alpha\left(X_n|Z^n\right)$.
We can calculate $\alpha\left(X_t|Z^t\right), \alpha\left(X_{t+1}|Z^{t+1}\right), \cdots, \alpha\left(X_{t+l}|Z^{t+l}\right)$ and \\
$P\left(X_{t+1}|Z^t\right), P\left(X_{t+2}|Z^{t+1}\right), \cdots, P\left(X_{t+l}|Z^{t+l-1}\right)$ by forward recursion, then we can get $P\left(X_t|Z^{t+l}\right)$ by the backward recursion.

\subsection{Baseline 2: Wiener Filter}
The Wiener filter is a linear FIR filter designed to minimize the mean squared error between a desired signal and its estimate. It is widely used in signal processing for noise reduction due to its simplicity and optimality in linear Gaussian environments. In our study, we benchmarked our universal filter against the Wiener filter, ensuring comparable computational resources for both. We fixed the filter length (also referred to as window size in some parts of the literature) to correspond to the phrase length considered for the LZ78-based universal filter. Additionally, we evaluated the Wiener filter in scenarios with a delay and in a non-causal setup with lookahead. 
\subsection{Results}
    \begin{figure}
    \centering
    \includegraphics[width=0.5\textwidth]{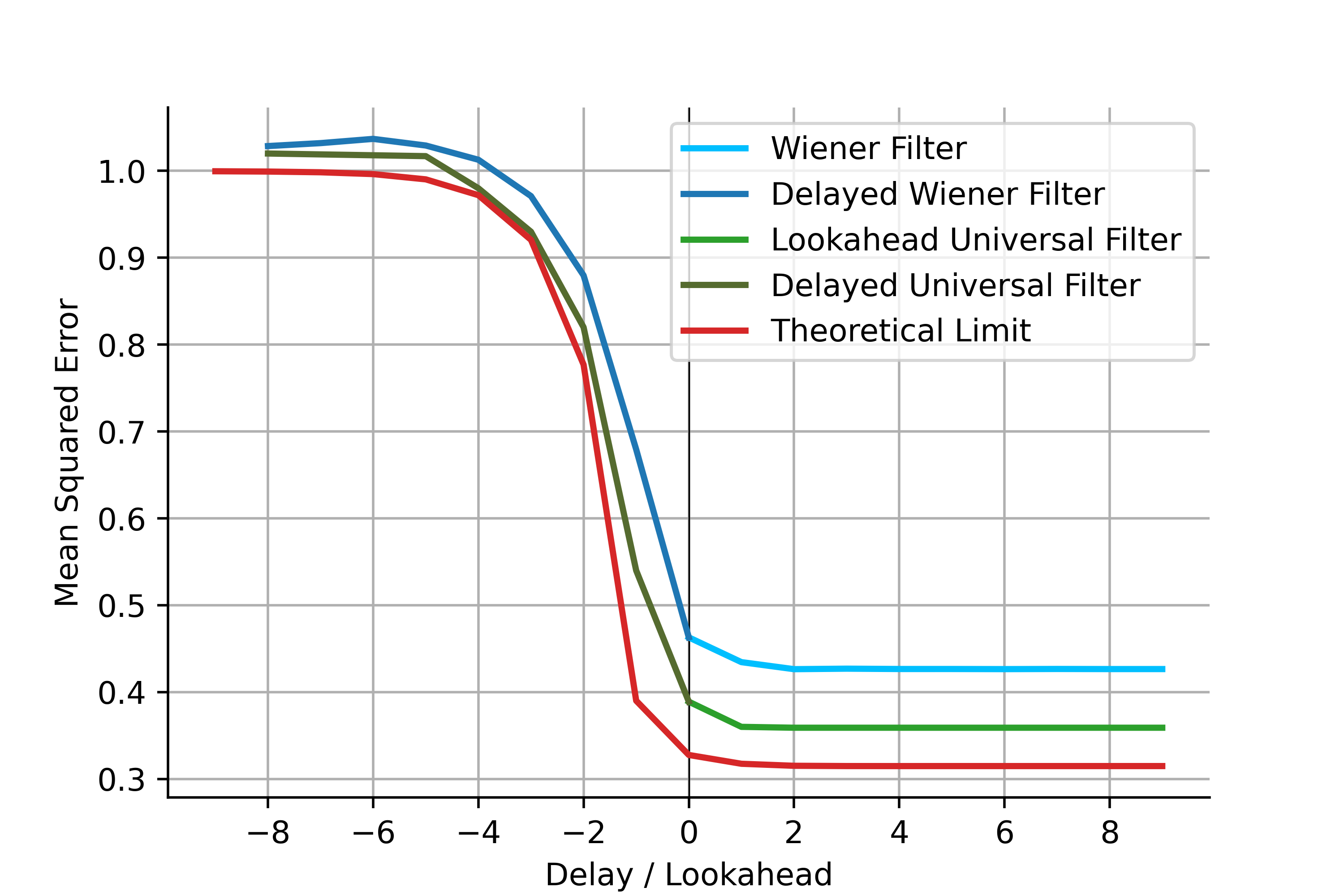}
    \caption{Comparing the MSE loss of universal filter to a linear Wiener filter and the theoretical limit}
    \label{fig:losses_benchmark}

    \end{figure}
The plot in Figure {\ref{fig:losses_benchmark}} compares the Mean Squared Error (MSE) performance of the various filters across different delay/lookahead values, benchmarked against the theoretical MSE limit for this controlled Markov process. The x-axis represents the delay/lookahead values, with negative values indicating delay and positive values indicating lookahead. The y-axis displays the resulting MSE. Using our proposed LZ78-based  universal filter demonstrates a substantial reduction in MSE, getting much closer to the theoretical limit. 

Next, we consider the performance and computation trade-off of the universal filter in the case of delay, using the Monte Carlo method discussed in lieu of precise computation of $Q_{Z_t|Z^{t-d}}$,  experimenting across different configurations, incorporating 100, 1000, and 10,000 Monte-Carlo trials. The results are summarized in Figure {\ref{fig:losses_benchmark_MC}}. As may be expected, at smaller Monte Carlo sample sizes,  
the performance of the filter slightly lags behind that based precisely on $Q_{Z_t|Z^{t-d}}$.
 However, as the sample size increases, the performance becomes indistinguishable from that of the latter.

\begin{figure}
    \centering
    \includegraphics[width=0.5\textwidth]{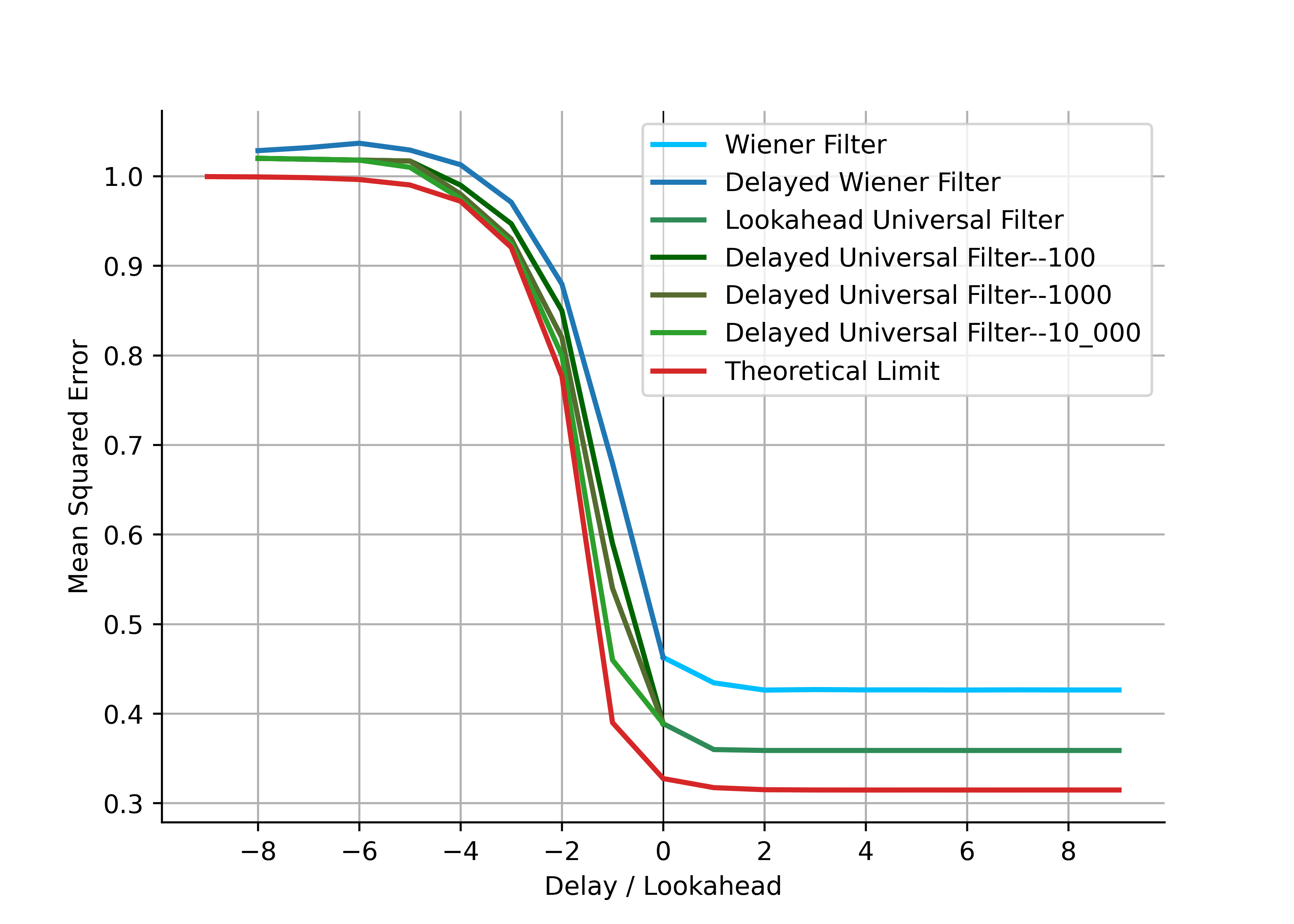}
    \caption{MSE loss of universal filter, increasing the number of Monte Carlo steps}
    \label{fig:losses_benchmark_MC}

\end{figure}

\subsection{Discussion}
One of the central considerations for real-time applications in low-compute environments is the computational complexity of the employed model. Our proposed approach, based on the LZ78 SPA, demonstrates significant computational advantages over neural network-based methods. This efficiency makes it particularly well-suited for environments where resources are limited and latency is critical. 

The efficiency of the LZ78-based predictor stems from its ability to compute the conditional probability distribution $Q_{Z_t|Z^{t-1}}$ with $O(1)$ complexity. While this is achieved through the inherent structure of the LZ78 tree, it is not restricted to this specific implementation.  The filtered output relies on simple arithmetic operations (multiplications and divisions) and a summation over the finite set $\Acal_X$, the alphabet of possible symbols. As a result, the overall computational cost scales linearly with the size of the alphabet, $|\Acal_X|$, ensuring this approach remains efficient for real-time tasks.

In contrast, neural network-based predictors for the same denoising filter task involve significantly higher computation overhead. For filtering problems, neural network methods require at least $Q\left(n^2\right)$ complexity\footnote{For the simplest form of a neural network, single-layer perceptron, in filtering problem, we have n inputs and n output neurons, the forward pass is of $O\left(n^2\right)$ complexity. }.
Additionally, neural networks typically require substantial memory resources to store weights, activations, and intermediate computations. These resource demands often necessitate specialized hardware, such as GPUs, to meet real-time constraints. This reliance on computationally expensive resources makes neural network-based methods less practical for low-compute environments, particularly when simplicity and efficiency are critical.

\subsection{Lower Bound for Markov Case}

    \begin{figure}
    \centering
    \includegraphics[width=0.5\textwidth]{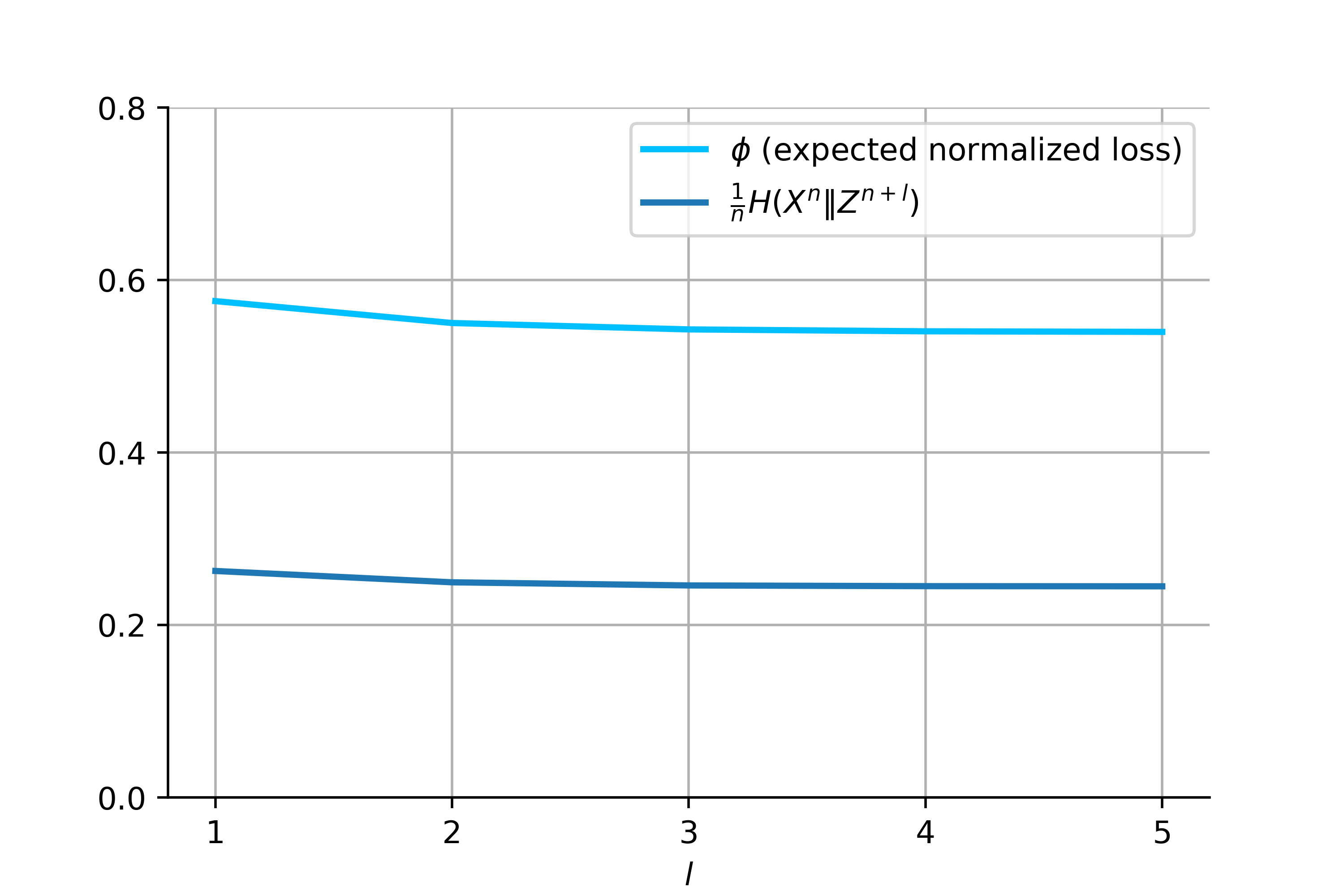}
    \caption{Numerical results based on Theorem~\ref{thm:mutual-information-lower-bounds} where $n=5$}
    \label{fig:lowerbound_lookahead}

    \end{figure}

    \begin{figure}
    \centering
    \includegraphics[width=0.5\textwidth]{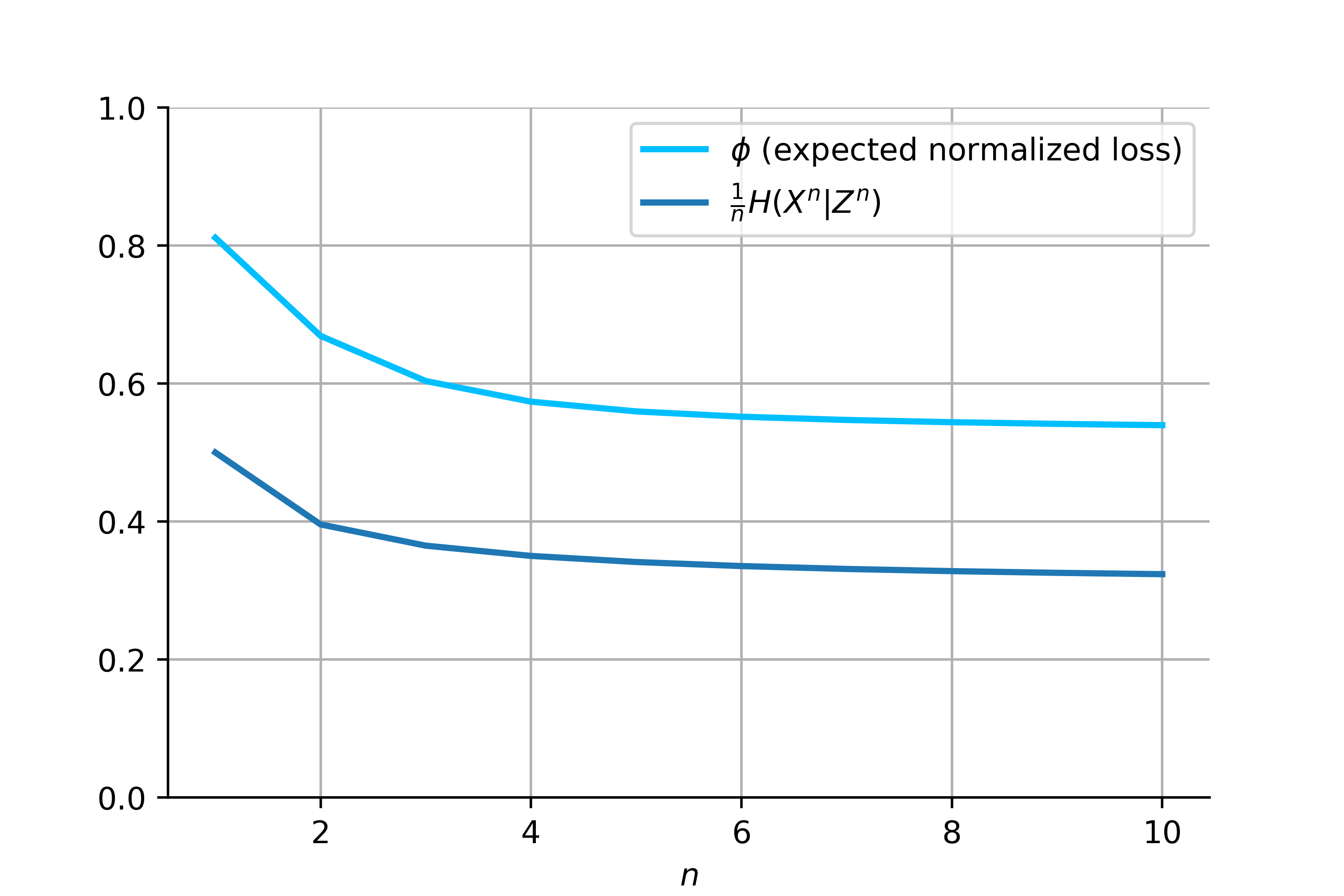}
    \caption{Numerical results based on Corollary~\ref{cor:mutual-information-lower-bounds}}
    \label{fig:lowerbound_noncausal}

    \end{figure}

In this section, we show how the lower bounds of Theorem~\ref{thm:mutual-information-lower-bounds} and Corollary~\ref{cor:mutual-information-lower-bounds} manifest in the setting (noise-free source and channel) of Section \ref{Markov_setup}, under the Hamming loss function. In this case, $\phi(D)$ can be represented as:
\begin{align*}
    \phi(D) &= \max \left\{ H(U) : U \text{ random variable over } \mathcal{A}_X \text{ s.t. } \mathbb{E}[\rho(U)] \leq D \right\} \\
    &= \max \left\{ H(U) : U \text{ random variable over } \mathcal{A}_X \text{ s.t. } P(U=1) \leq D \right\} \\
    &= \begin{cases} 
    -D \log D - (1-D) \log (1-D) & \text{if } D \leq \frac{1}{2}, \\
    1 & \text{otherwise.}
    \end{cases}
\end{align*}

Furthermore, in our setting, it is possible to derive explicit expressions for the expected normalized loss of optimal filters and the information-theoretic lower bounds. We calculated these values and plotted them in Figure \ref{fig:lowerbound_lookahead} and Figure \ref{fig:lowerbound_noncausal}. Figure \ref{fig:lowerbound_lookahead} illustrates the numerically calculated values of $\phi \left( \mathbb{E}\left[ \frac{1}{n} \sum_{t=1}^n \mathrm{\Lambda}\left(X_t,  \hat{X}^{opt}\left(Z^{t+l}\right)\right)\right] \right)$ and $\tfrac{1}{n} H(X^n\lVert Z^{n+l})$ for a fixed sequence length of $n=5$ and varying lookahead values $l$ from 1 to 5. Figure \ref{fig:lowerbound_noncausal} presents those of 
$\phi \left( \mathbb{E}\left[ \frac{1}{n} \sum_{t=1}^n \mathrm{\Lambda}\left(X_t,  \hat{X}^{opt}\left(Z^{n}\right)\right)\right] \right)$ and $\tfrac{1}{n} H(X^n | Z^{n})$ for varying sequence lengths $n$ from 1 to 10. 

\section{Conclusion} \label{sec: conclusions} 
We proposed an approach to the construction of universal filtering schemes that can be used under arbitrary delay or lookahead. Given any universal SPA, we showed that the resulting schemes are universally optimal, with performance converging to that of the Bayes optimum as the length of the sequence grows. Furthermore, we implemented and experimented on simulated data with a family of such schemes when the driving universal SPA is induced by LZ78. The experimental results are consistent with the theoretical ones and bode well for applying such filters on real data.

Our focus here was on the stochastic setting, deferring the semi-stochastic setting, where the noise-free data comprises individual sequences, to future work.  
Additionally, future work will explore broader settings where the noise is neither discrete
\cite{sivaramakrishnan2008universal, sivaramakrishnan2009context, dembo2005universal} 
nor memoryless \cite{zhang2005discrete}, such as in neural recording data. 
Finally, future work will also systematically compare our LZ78-based schemes with neural network-based methods, considering the filtering performance alongside the computational and memory requirements. 


{\appendices
\section{Proofs of filtering with Sequential Probability Assignments }\label{app:SPA}

\textbf{Proof of Lemma~\ref{conditionalcausal}.}
For all $t \leq n$,
    \[P_{X_t|Z^t=z^t} = F\left(P_{Z_t|z^{t-1}}, \Pi, z_t\right),\]
and
\[P_{X_t|Z^n=z^n} = F\left(P_{Z_t|z^{t-1}, z_{t+1}^n}, \Pi, z_t\right)\]
\begin{proof}
    By Bayes' theorem,
    \begin{align*}
        \Pbb\left(X_t=x|Z^t=z^t\right) &= \frac{\Pbb\left(X_t=x, Z^t=z^t\right)}{\Pbb\left(Z^t=z^t\right)} = \frac{\Pbb\left(Z_t=z_t|X_t=x,Z^{t-1}=z^{t-1}\right) \Pbb\left(X_t=x|Z^{t-1}=z^{t-1}\right) \Pbb\left(Z^{t-1}=z^{t-1}\right)}{\Pbb\left(Z_t=z_t|Z^{t-1}=z^{t-1}\right) \Pbb\left(Z^{t-1}=z^{t-1}\right)} \\
        &= \frac{\Pbb\left(Z_t=z_t|X_t=x,Z^{t-1}=z^{t-1}\right) \Pbb\left(X_t=x|Z^{t-1}=z^{t-1}\right)}{\Pbb\left(Z_t=z_t|Z^{t-1}=z^{t-1}\right)}.
    \end{align*}
    As the system mapping $X$ to $Z$ is a discrete memoryless channel, 
    \[\Pbb\left(Z_t=z_t|X_t=x,Z^{t-1}=z^{t-1}\right) = \Pbb\left(Z_t=z_t|X_t=x\right) = \Pi\left(x, z_t\right),\]
    \ie, as we're already conditioning on $X_t$, $Z^{t-1}$ provides no extra information about $Z_t$.

    So, $\Pbb\left(X_t=x|Z^t=z^t\right)$ simplifies to
    \[\Pbb\left(X_t=x|Z^t=z^t\right) = \frac{\Pi\left(x, z_t\right)\Pbb\left(X_t=x|Z^{t-1}=z^{t-1}\right)}{\Pbb\left(Z_t=z_t|Z^{t-1}=z^{t-1}\right)}.\]
    In matrix-vector form,
    \[\Pbb_{X_t|Z^t}\left(\cdot|Z^t=z^t\right) = \frac{\Pi\left(\cdot, z_t\right) \odot \Pbb_{X_t|Z^{t-1}}\left(\cdot|Z^{t-1}=z^{t-1}\right)}{\Pbb\left(Z_t=z_t|Z^{t-1}=z^{t-1}\right)}.\]

    We can evaluate $\Pbb_{X_t|Z^{t-1}}\left(\cdot|Z^{t-1}=z^{t-1}\right)$ using the same process as $\Pbb_{X_t|Z_t}\left(\cdot|Z_t=z_t\right)$.
    By the law of total probability,
    \[\Pbb\left(Z_t=z|Z^{t-1}=z^{t-1}\right) = \sum_{x \in \Acal} \Pbb\left(Z_t=z|X_t=x\right) \Pbb\left(X_t=x|Z^{t-1}=z^{t-1}\right),\]
    so, solving for $\Pbb_{X_t|Z^{t-1}}\left(\cdot|Z^{t-1}=z^{t-1}\right)$ in matrix-vector form,
    \[\Pbb_{X_t|Z^{t-1}}\left(\cdot|Z^{t-1}=z^{t-1}\right) = \Pi^{-\top} P_{Z_t|z^{t-1}}.\]
    Putting it all together,
    \[\Pbb_{X_t|Z^t}\left(\cdot|Z^t=z^t\right) = \frac{\Pi\left(\cdot, z_t\right) \odot \left(\Pi^{-\top} P_{Z_t|z^{t-1}}\right)}{\Pbb\left(Z_t=z_t|Z^{t-1}=z^{t-1}\right)} = F\left(P_{Z_t|z^{t-1}}, \Pi, z_t\right).\]
    Similarly, we can show the result for $P_{X_t|Z^n=z^n}$ by plugging in $P_{Z_t|Z^{t-1}=z^{t-1}, Z_{t+1}^n=z_{t+1}^n}$ instead of $P_{Z_t|Z^{t-1}=z^{t-1}}$.
\end{proof}

\textbf{Proof of Lemma~\ref{lem:optimal_estimator}.} 
Let $W \in \Wcal$ be a random variable. Then the estimator of $X$ based on $W$ that minimizes the expected loss can be represented as follows.
\begin{equation}
    \arg\min_{\hat{X}(\cdot):\Wcal \rightarrow \Acal_{\hat{X}}} \Ebb \left[ \mathrm{\Lambda} \left(X, \hat{X}(W) \right) \right] = \hat{X}_B \left(P_{X|W}\right)
\end{equation}

\begin{proof}
\begin{align*}
    \arg\min_{\hat{X}(\cdot):\Wcal \rightarrow \Acal_{\hat{X}}} \Ebb \left[ \mathrm{\Lambda}\left(X, \hat{X}(W)\right) \right] &= \arg\min_{\hat{X}(\cdot):\Wcal \rightarrow \Acal_{\hat{X}}} \Ebb_W \left[ \Ebb \left[ \mathrm{\Lambda} \left(X, \hat{X}(W)\right) \big\rvert W \right] \right] \\
    &= \Ebb_W \left[ \arg \min_{\hat{X}(W) \in \Acal_{\hat{X}}} \Ebb \left[ \mathrm{\Lambda} \left(X, \hat{X}(W)\right) \big\rvert W \right] \right]
\end{align*}
The minimum of the inner quantity is achieved by $\hat{X}_B\left(P_{X|W}\right)$.
\end{proof}

\begin{lemma} \label{lem:Bayes_bound}
For any vector $\mathbf{v}_1, \mathbf{v}_2 \in \mathbb{R}^{|\Acal|}$,
\[\lambda_{\hat{X}_B(\mathbf{v}_2)}^T \mathbf{v}_1 - \lambda_{\hat{X}_B(\mathbf{v}_1)}^T \mathbf{v}_1 \leq \mathrm{\Lambda}_\text{max} \left\Vert \mathbf{v}_1-\mathbf{v}_2\right\Vert_1\]
\begin{proof}
By the definition of $\hat{X}_B(\cdot)$,
\[\lambda_{\hat{X}_B(\mathbf{v}_2)}^T \mathbf{v}_2 - \lambda_{\hat{X}_B(\mathbf{v}_1)}^T \mathbf{v}_2 = \left(\lambda_{\hat{X}_B(\mathbf{v}_2)} - \lambda_{\hat{X}_B(\mathbf{v}_1)}\right)^T \mathbf{v}_2 \leq 0\]
Subtracting this nonpositive term from LHS, we can get the following inequality.
\begin{align*}
\text{LHS}&=\left(\lambda_{\hat{X}_B(\mathbf{v}_2)} - \lambda_{\hat{X}_B(\mathbf{v}_1)}\right)^T \mathbf{v}_1 \\
&\leq \left(\lambda_{\hat{X}_B(\mathbf{v}_2)} - \lambda_{\hat{X}_B(\mathbf{v}_1)}\right)^T \left( \mathbf{v}_1 - \mathbf{v}_2 \right) \\
&\leq \mathrm{\Lambda}_\text{max} \left\Vert \mathbf{v}_1 - \mathbf{v}_2\right\Vert_1,
\end{align*}
where the last inequality holds since $0\leq \lambda_i(j)\leq\mathrm{\Lambda}_\text{max}$ and $|\lambda_i(j) - \lambda_k(j)| \leq \mathrm{\Lambda}_\text{max}$ for all $i, j, k\in\Acal$.
\end{proof}
\end{lemma}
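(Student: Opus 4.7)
The plan is to exploit the defining minimality of the Bayes response: by definition $\hat{X}_B(\mathbf{v}_2)$ minimizes $\lambda_{\hat{x}}^T \mathbf{v}_2$ over $\hat{x}$, so the quantity $\left(\lambda_{\hat{X}_B(\mathbf{v}_2)} - \lambda_{\hat{X}_B(\mathbf{v}_1)}\right)^T \mathbf{v}_2$ is nonpositive. This is the only structural fact in play; everything else reduces to a ``subtract zero and apply H\"older'' maneuver. I expect the proof to fit in three short steps with no appreciable obstacle.

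First, I would rewrite the LHS as a single inner product, namely $\left(\lambda_{\hat{X}_B(\mathbf{v}_2)} - \lambda_{\hat{X}_B(\mathbf{v}_1)}\right)^T \mathbf{v}_1$. Next, I would subtract the nonpositive quantity $\left(\lambda_{\hat{X}_B(\mathbf{v}_2)} - \lambda_{\hat{X}_B(\mathbf{v}_1)}\right)^T \mathbf{v}_2$ from this expression; since we are subtracting something $\leq 0$, this can only make the expression larger, giving the upper bound $\left(\lambda_{\hat{X}_B(\mathbf{v}_2)} - \lambda_{\hat{X}_B(\mathbf{v}_1)}\right)^T (\mathbf{v}_1 - \mathbf{v}_2)$. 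In other words, the minimality of $\hat{X}_B(\mathbf{v}_2)$ for $\mathbf{v}_2$ lets us swap $\mathbf{v}_1$ for $\mathbf{v}_1 - \mathbf{v}_2$ inside the inner product.

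Finally, I would bound this inner product via H\"older's inequality, $|\mathbf{a}^T\mathbf{b}| \leq \|\mathbf{a}\|_\infty \|\mathbf{b}\|_1$, with $\mathbf{a} = \lambda_{\hat{X}_B(\mathbf{v}_2)} - \lambda_{\hat{X}_B(\mathbf{v}_1)}$ and $\mathbf{b} = \mathbf{v}_1 - \mathbf{v}_2$. This yields the desired RHS once we verify $\|\mathbf{a}\|_\infty \leq \mathrm{\Lambda}_\text{max}$. The only subtlety, and the one point to be careful about, is getting the sharp constant $\mathrm{\Lambda}_\text{max}$ rather than a loose $2\mathrm{\Lambda}_\text{max}$: a naive triangle-inequality bound on $|\lambda_i(j) - \lambda_k(j)|$ gives $2\mathrm{\Lambda}_\text{max}$, but because both entries lie in the same nonnegative interval $[0, \mathrm{\Lambda}_\text{max}]$, their difference is in magnitude at most $\mathrm{\Lambda}_\text{max}$. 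This observation, coupled with the two preceding steps, closes the proof.
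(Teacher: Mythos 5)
Your proposal is correct and follows essentially the same route as the paper: exploit the nonpositivity of $\left(\lambda_{\hat{X}_B(\mathbf{v}_2)} - \lambda_{\hat{X}_B(\mathbf{v}_1)}\right)^T \mathbf{v}_2$ to pass from $\mathbf{v}_1$ to $\mathbf{v}_1 - \mathbf{v}_2$, then bound the inner product by $\mathrm{\Lambda}_\text{max} \left\Vert \mathbf{v}_1-\mathbf{v}_2\right\Vert_1$ using the fact that all entries of the loss columns lie in $[0, \mathrm{\Lambda}_\text{max}]$. Your explicit note that the entrywise difference is bounded by $\mathrm{\Lambda}_\text{max}$ rather than $2\mathrm{\Lambda}_\text{max}$ is exactly the observation the paper relies on in its final step.
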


\begin{lemma} \label{lem:loss_bound}
\[\Ebb\left[\mathrm{\Lambda}\left(X, \hat{X}_B^{Q_Z}(Z)\right) - \mathrm{\Lambda}\left(X, \hat{X}_B^{P_Z}(Z)\right)\right] \leq C_1(\Pi) \mathrm{\Lambda}_\text{max} \left\Vert P_Z-Q_Z\right\Vert_1\]
where $Q_Z\in\mathbb{R}^{|\Acal|}$ is any PMF, $P_Z\in\mathbb{R}^{|\Acal|}$ is a true PMF of $Z$, and $C_1(\Pi)=\left|\Pi^{-T}\right|_\text{max} \left| \Acal_X \right|$ is a constant that depends only on $\Pi$ with $\left|\Pi^{-T}\right|_\text{max}$ denoting a maximum component of $\left|\Pi^{-T}\right|$.
\begin{proof}
    Note that, if $a>0$ is a constant,
    \[\hat{X}_B(a\mathbf{v}) = \arg \min_{\hat{x}} \lambda_{\hat{x}}^T(a\mathbf{v}) = \arg \min_{\hat{x}} \lambda_{\hat{x}}^T \mathbf{v} = \hat{X}_B(\mathbf{v}).\]
    \begin{align*}
        &\Ebb\left[\mathrm{\Lambda}\left(X, \hat{X}_B^{Q_z}(Z)\right) - \mathrm{\Lambda}\left(X, \hat{X}_B^{P_Z}(Z)\right) \bigg| Z=z \right]\\
        &\qquad =\Ebb\left[\mathrm{\Lambda}\left(X, \hat{X}_B\left(\frac{\Pi_z \odot \left(\Pi^{-T} Q_Z\right)}{Q_Z(z)}\right)\right) - \mathrm{\Lambda}\left(X, \hat{X}_B\left(\frac{\Pi_z \odot \left(\Pi^{-T} P_Z\right)}{P_Z(z)}\right)\right)\Bigg|Z=z\right]\\
        &\qquad =\Ebb\left[\mathrm{\Lambda}\left(X, \hat{X}_B\left(\Pi_z \odot \left(\Pi^{-T} Q_Z\right)\right)\right) - \mathrm{\Lambda}\left(X, \hat{X}_B\left(\Pi_z \odot \left(\Pi^{-T} P_Z\right)\right)\right)\Bigg|Z=z\right]\\
        &\qquad =\lambda_{\hat{X}_B \left(\Pi_z \odot \Pi^{-T} Q_Z\right)}^T P_{X|Z=z} - \lambda_{\hat{X}_B \left(\Pi_z \odot \Pi^{-T} P_Z\right)}^T P_{X|Z=z}
    \end{align*}
    For a constant $a\geq0$, we can get the following property by multiplying $a$ to the both sides of Lemma~\ref{lem:Bayes_bound}.
    \[\lambda_{\hat{X}_B(\mathbf{v}_2)}^T a\mathbf{v}_1 - \lambda_{\hat{X}_B(\mathbf{v}_1)}^T a\mathbf{v}_1 \leq \mathrm{\Lambda}_\text{max} a\left\Vert \mathbf{v}_1-\mathbf{v}_2\right\Vert_1\]
    By substituting $\mathbf{v}_1=\Pi_z \odot \Pi^{-T} P_Z, \mathbf{v}_2=\Pi_z \odot \Pi^{-T} Q_Z$, and $a=\frac{1}{P_Z(z)}$,
    \begin{align*}
        &\lambda_{\hat{X}_B \left(\Pi_z \odot \Pi^{-T} Q_Z\right)}^T P_{X|Z=z} - \lambda_{\hat{X}_B \left(\Pi_z \odot \Pi^{-T} P_Z\right)}^T P_{X|Z=z}\\
        &\leq \mathrm{\Lambda}_\text{max} \frac{1}{P_Z(z)} \left\Vert \Pi_z \odot \Pi^{-T} Q_Z - \Pi_z \odot \Pi^{-T} P_Z \right\Vert_1 \\
        &= \mathrm{\Lambda}_\text{max} \frac{1}{P_Z(z)} \sum_i \left|\Pi(i, z) \left( \Pi^{-T} \left( Q_Z - P_Z \right) \right)_i\right|\\
        &=\mathrm{\Lambda}_\text{max} \frac{1}{P_Z(z)} \sum_i \left|\Pi(i, z)\right| \left|\sum_j \Pi^{-T}(i,j) (Q_Z-P_Z)_j\right| \\
        &\leq \mathrm{\Lambda}_\text{max} \frac{1}{P_Z(z)} \sum_i \left|\Pi(i, z)\right| \left( \left|\Pi^{-T}\right|_\text{max} \left\Vert Q_Z-P_Z \right\Vert_1\right) \\
        &= \mathrm{\Lambda}_\text{max} \frac{1}{P_Z(z)} \left\Vert \Pi_z\right\Vert_1 \left|\Pi^{-T}\right|_\text{max} \left\Vert Q_Z-P_Z\right\Vert_1,
    \end{align*}
    where $\left|\Pi^{-T}\right|_\text{max}$ denotes a maximum component of $\left|\Pi^{-T}\right|$. By taking the expectation with respect to $Z$,
    \begin{align*}
        &\Ebb\left[\mathrm{\Lambda}\left(X, \hat{X}_B^{Q_Z}(Z)\right) - \mathrm{\Lambda}\left(X, \hat{X}_B^{P_Z}(Z)\right)\right] \\
        &=\Ebb\left[ \Ebb\left[\mathrm{\Lambda}\left(X, \hat{X}_B^{Q_z}(z)\right) - \mathrm{\Lambda}\left(X, \hat{X}_B^{P_Z}(z)\right) \bigg| Z=z \right] \right]\\
        &\leq \Ebb \left[ \mathrm{\Lambda}_\text{max} \frac{1}{P_Z(z)} \left\Vert \Pi_z\right\Vert_1 \left|\Pi^{-T}\right|_\text{max} \left\Vert Q_Z-P_Z\right\Vert_1 \right] \\
        &= \mathrm{\Lambda}_\text{max} \left|\Pi^{-T}\right|_\text{max} \left\Vert Q_Z - P_Z\right\Vert_1 \sum_z P_Z(z)  \frac{1}{P_Z(z)} \left\Vert \Pi_z\right\Vert_1  \\
        &= \mathrm{\Lambda}_\text{max} \left|\Pi^{-T}\right|_\text{max} \left| \Acal_X \right| \left\Vert Q_Z - P_Z\right\Vert_1 \qquad \left(\because \sum_z \left\Vert \Pi_z\right\Vert_1 = \sum_i \sum_j \Pi(i,j) = |\Acal_X| \right)
    \end{align*}
\end{proof}
\end{lemma}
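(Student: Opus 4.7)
The plan is to reduce the comparison of two Bayes risks to a controlled $\ell_1$ perturbation between distributions of $X$ given $Z=z$, and then transfer that perturbation back to $\|P_Z - Q_Z\|_1$ through the linear map $\Pi^{-T}$.

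First, I would condition on $Z = z$. By the law of total expectation and the definition of the Bayes response,
\[\Ebb\!\left[\mathrm{\Lambda}\!\left(X, \hat{X}_B^{Q_Z}(Z)\right) \,\big|\, Z=z\right] = \lambda_{\hat{X}_B^{Q_Z}(z)}^T P_{X|Z=z},\]
and similarly for $\hat{X}_B^{P_Z}$. Using the formula $P_{X|Z=z} = F(P_Z, \Pi, z) = \frac{\Pi(\cdot,z)\,\odot\,(\Pi^{-T} P_Z)}{P_Z(z)}$ from \eqref{conditional_probability}, the two estimators are the Bayes responses to $\frac{\Pi(\cdot,z)\odot(\Pi^{-T}P_Z)}{P_Z(z)}$ and $\frac{\Pi(\cdot,z)\odot(\Pi^{-T}Q_Z)}{Q_Z(z)}$ respectively. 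The key simplification is that the Bayes response is invariant under positive rescaling of its argument, so I can drop the denominators inside $\hat{X}_B(\cdot)$ and keep track of them only outside.

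Second, I would apply Lemma~\ref{lem:Bayes_bound} with $\mathbf{v}_1 = \Pi(\cdot, z)\odot(\Pi^{-T} P_Z)$ and $\mathbf{v}_2 = \Pi(\cdot, z)\odot(\Pi^{-T} Q_Z)$, scaled by $1/P_Z(z)$ (a nonnegative multiplier that passes through both sides of the bound). This yields the conditional bound
\[\Ebb\!\left[\mathrm{\Lambda}\!\left(X, \hat{X}_B^{Q_Z}(Z)\right) - \mathrm{\Lambda}\!\left(X, \hat{X}_B^{P_Z}(Z)\right) \,\big|\, Z=z\right] \leq \frac{\mathrm{\Lambda}_\text{max}}{P_Z(z)}\,\bigl\|\Pi(\cdot,z)\odot \Pi^{-T}(Q_Z - P_Z)\bigr\|_1.\]
I would then expand the $\ell_1$ norm componentwise, pull $|\Pi(x,z)|$ out of the absolute value, and use the trivial bound $|\sum_j \Pi^{-T}(x,j)(Q_Z-P_Z)_j|\leq |\Pi^{-T}|_\text{max}\,\|Q_Z-P_Z\|_1$ to obtain $\|\Pi(\cdot,z)\|_1\,|\Pi^{-T}|_\text{max}\,\|Q_Z-P_Z\|_1 / P_Z(z)$.

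Finally, I would take expectation over $Z\sim P_Z$. The factor $P_Z(z)$ from the expectation cancels the $1/P_Z(z)$ in the bound, so the outer sum becomes $\sum_z \|\Pi(\cdot,z)\|_1 = \sum_x\sum_z \Pi(x,z) = |\Acal_X|$, since $\Pi$ is a stochastic matrix. This gives exactly $C_1(\Pi)\mathrm{\Lambda}_\text{max}\|P_Z - Q_Z\|_1$. The main obstacle I anticipate is the apparent blow-up from the $1/P_Z(z)$ normalization in $F$: the trick that makes it harmless is the averaging over $Z$ under $P_Z$, together with the scale-invariance of $\hat{X}_B(\cdot)$ which lets the same cancellation survive when the assumed distribution $Q_Z$ differs from the true $P_Z$.
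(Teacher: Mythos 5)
Your proposal is correct and follows essentially the same route as the paper's proof: conditioning on $Z=z$, exploiting the scale-invariance of $\hat{X}_B(\cdot)$ to discard the normalizing denominators, applying Lemma~\ref{lem:Bayes_bound} to the vectors $\Pi(\cdot,z)\odot(\Pi^{-T}P_Z)$ and $\Pi(\cdot,z)\odot(\Pi^{-T}Q_Z)$, and letting the expectation over $Z\sim P_Z$ cancel the $1/P_Z(z)$ factor so that $\sum_z\lVert\Pi(\cdot,z)\rVert_1=|\Acal_X|$ yields the constant $C_1(\Pi)$. No gaps.
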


\textbf{Proof of Theorem~\ref{thm:normalizedloss}.}
For a sequence $Z^{n}$ and corresponding estimators based on true SPA $P$ and a mismatched SPA $Q$, the expected difference in normalized loss is bounded by: 
    \[\Ebb\left[ \frac{1}{n} \sum_{t=1}^{n}\mathrm{\Lambda}\left(X_t, \hat{X}_B^{Q_{Z_t|Z^{t-1}}}\left(Z_t\right)\right) - \frac{1}{n} \sum_{t=1}^{n} \mathrm{\Lambda}\left(X_t, \hat{X}_t^\text{opt}\left(Z^t\right)\right)\right] \leq \sqrt{2} C_1(\Pi) \mathrm{\Lambda}_\text{max} \sqrt{\frac{1}{n}D \left(P_{Z^n}\Vert Q_{Z^n}\right)}, \]
where $C_1(\Pi)=\left|\Pi^{-T}\right|_\text{max} \left| \Acal_X \right|$ is a constant that depends only on $\Pi$ with $\left|\Pi^{-T}\right|_\text{max}$ denoting a maximum component of $\left|\Pi^{-T}\right|$. And $D(P_{Z^n}\Vert Q_{Z^n})$ is the relative entropy between the true and mismatched distributions. 
    
\begin{proof}
    \begin{align*}
        \text{LHS}&= \frac{1}{n} \sum_{t=1}^{n} \Ebb \left[ \Ebb \left[ \mathrm{\Lambda}\left(X_t, \hat{X}_B^{Q_{Z_t|Z^{t-1}}}(Z_t)\right) - \mathrm{\Lambda}\left(X_t, \hat{X}_B^{P_{Z_t|Z^{t-1}}}(Z_t)\right) \Big{|} Z^{t-1} \right] \right] \\
        &\leq \frac{1}{n} \sum_{t=1}^{n} \Ebb \left[ C_1(\Pi) \mathrm{\Lambda}_\text{max} \left\Vert P_{Z_t|Z^{t-1}} - Q_{Z_t|Z^{t-1}} \right\Vert_1\right] \qquad (\because \text{Lemma}~\ref{lem:loss_bound})
    \end{align*}
    By Pinsker's inequality, for any $M$-dimensional probability vectors $P$ and $Q$,
    \[\left\Vert P-Q\right\Vert_1 \leq \sqrt{2 D(P \Vert Q)}\]
    \begin{align*}
        &\frac{1}{n} \sum_{t=1}^{n} \Ebb \left[ C_1(\Pi) \mathrm{\Lambda}_\text{max} \left\Vert P_{Z_t|Z^{t-1}} - Q_{Z_t|Z^{t-1}} \right\Vert_1\right] \\
        &\leq C_1(\Pi) \mathrm{\Lambda}_\text{max} \frac{1}{n} \sum_{t=1}^{n} \Ebb \left[ \sqrt{2D\left(P_{Z_t|Z^{t-1}} \Vert Q_{Z_t|Z^{t-1}}\right)}\right] \qquad (\because \text{Pinsker's inequality}) \\
        &\leq C_1(\Pi) \mathrm{\Lambda}_\text{max} \frac{1}{n} \sum_{t=1}^{n} \sqrt{2 \Ebb \left[ D\left(P_{Z_t|Z^{t-1}} \Vert Q_{Z_t|Z^{t-1}}\right) \right]} \qquad (\because \text{Jensen's inequality}) \\
        &\leq \sqrt{2} C_1(\Pi) \mathrm{\Lambda}_\text{max} \sqrt{\frac{1}{n} \sum_{t=1}^{n} \Ebb \left[ D\left(P_{Z_t|Z^{t-1}} \Vert Q_{Z_t|Z^{t-1}}\right) \right]} \qquad (\because \text{Jensen's inequality})
    \end{align*}
    Let 
    \[D\left(P_{X|Z} \Vert Q_{X|Z} | P_Z \right) \triangleq \Ebb_{Z\sim P_Z} \left[ D\left(P_{X|Z}\Vert Q_{X|Z}\right)\right] = \sum_{z\sim\Acal_Z}D\left(P_{X|Z=z}\Vert Q_{X|Z=z}\right) P_Z(z).\]
    Then we can represent the chain rule for relative entropy as follows.
    \[D\left(P_{X,Y} \Vert Q_{X,Y}\right) = D\left(P_X \Vert Q_X\right) + D\left(P_{Y|X} \Vert Q_{Y|X} | P_X\right)\]
    By applying the chain rule, we can get
    \begin{align*}
        &\sqrt{2} C_1(\Pi) \mathrm{\Lambda}_\text{max} \sqrt{\frac{1}{n} \sum_{t=1}^{n} \Ebb \left[ D\left(P_{Z_t|Z^{t-1}} \Vert Q_{Z_t|Z^{t-1}}\right) \right]} \\
        &= \sqrt{2} C_1(\Pi) \mathrm{\Lambda}_\text{max} \sqrt{\frac{1}{n} \sum_{t=1}^{n} D\left(P_{Z_t|Z^{t-1}} \Vert Q_{Z_t|Z^{t-1}} | P_{Z^{t-1}}\right) } \\
        &= \sqrt{2} C_1(\Pi) \mathrm{\Lambda}_\text{max} \sqrt{\frac{1}{n} D\left(P_{Z^n} \Vert Q_{Z^n}\right) }
    \end{align*}
\end{proof}

\textbf{Proof of Lemma~\ref{lem:conditionaldelay}.}
For $d \geq 1$,
    \[P_{X_t|Z^{t-d}}=\Pi^{-T} P_{Z_t|Z^{t-d}}\]
\begin{proof}
    Proving the given equality is equivalent to showing $\left( \Pi^T P_{X_t|Z^{t-d}} \right) (a)= P_{Z_t|Z^{t-d}}(a)$ for all $a \in \Acal_Z$.
    \begin{align*}
        P_{Z_t|Z^{t-d}}(a) &= P\left( Z_t=a | Z^{t-d} \right) \\
        &= \sum_{a'\in \Acal_X} P\left( Z_t=a, X_t=a' | Z^{t-d} \right) \\
        &= \sum_{a'\in \Acal_X} P\left( Z_t=a | X_t=a', Z^{t-d} \right) P\left( X_t=a' | Z^{t-d} \right)\\
        &= \sum_{a'\in \Acal_X} P\left( Z_t=a | X_t=a' \right) P\left( X_t=a' | Z^{t-d} \right) \qquad \left( \because \{ Z^{t-d}, X_t, Z_t \} \text{ is a Markov chain} \right) \\
        &= \sum_{a'\in \Acal_X} \Pi(a', a) P\left( X_t=a' | Z^{t-d} \right) \\
        &= \left( \Pi^T P_{X_t|Z^{t-d}} \right) (a)
    \end{align*}
\end{proof}

\begin{lemma} \label{lem:loss_bound_2}
\[\Ebb \left[ \mathrm{\Lambda} \left(X, \hat{X}_B \left(\Pi^{-T} Q_Z\right)\right) \right] - \Ebb \left[ \mathrm{\Lambda} \left(X, \hat{X}_B \left(\Pi^{-T} P_Z\right)\right) \right] \leq \mathrm{\Lambda}_\text{max} C_1(\Pi) \left\Vert Q_Z - P_Z \right\Vert_1,\]
where $C_1(\Pi)=\left|\Pi^{-T}\right|_\text{max} \left| \Acal_X \right|$ is a constant that depends only on $\Pi$ with $\left|\Pi^{-T}\right|_\text{max}$ denoting a maximum component of $\left|\Pi^{-T}\right|$.
\begin{proof}
    \begin{align*}
        &\Ebb \left[ \mathrm{\Lambda} \left(X, \hat{X}_B \left(\Pi^{-T} Q_Z\right)\right) \right] - \Ebb \left[ \mathrm{\Lambda} \left(X, \hat{X}_B \left(\Pi^{-T} P_Z\right)\right) \right] \\
        &=\lambda_{\hat{X}_B\left(\Pi^{-T}Q_Z\right)}^T P_X - \lambda_{\hat{X}_B\left(\Pi^{-T}P_Z\right)}^T P_X.
    \end{align*}
    Note that $P_X = \Pi^{-T} P_Z$. By Lemma~\ref{lem:Bayes_bound},
    \begin{align*}
        &\lambda_{\hat{X}_B\left(\Pi^{-T}Q_Z\right)}^T P_X - \lambda_{\hat{X}_B\left(\Pi^{-T}P_Z\right)}^T P_X \\
        &\leq \mathrm{\Lambda}_\text{max} \left\Vert \Pi^{-T} P_Z - \Pi^{-T} Q_Z \right\Vert_1 \\
        &= \mathrm{\Lambda}_\text{max} \sum_i \left| \left( \Pi^{-T} (P_Z-Q_Z) \right)_i \right| \\
        &= \mathrm{\Lambda}_\text{max} \sum_i \left|\sum_j \Pi^{-T}(i,j) (P_Z-Q_Z)_j\right| \\
        &\leq \mathrm{\Lambda}_\text{max} \sum_i \left| \Pi^{-T} \right|_\text{max} \left\Vert P_Z - Q_Z \right\Vert_1 \\
        &= \mathrm{\Lambda}_\text{max} |\Acal_X| \left| \Pi^{-T} \right|_\text{max} \left\Vert P_Z - Q_Z \right\Vert_1 \\
        &= \mathrm{\Lambda}_\text{max} C_1(\Pi) \left\Vert Q_Z - P_Z \right\Vert_1
    \end{align*}
\end{proof}
\end{lemma}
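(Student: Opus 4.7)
The plan is to rewrite both expected losses as inner products of loss columns with $P_X$, invoke Lemma~\ref{lem:Bayes_bound} with vectors obtained from the inverted channel, and finally pass from an $\ell_1$-norm on the $X$-side to one on the $Z$-side.

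First, since the expectation is over $X \sim P_X$, I would write $\Ebb[\mathrm{\Lambda}(X, \hat{x})] = \lambda_{\hat{x}}^T P_X$, so that the left-hand side of the lemma becomes $\lambda_{\hat{X}_B(\Pi^{-T} Q_Z)}^T P_X - \lambda_{\hat{X}_B(\Pi^{-T} P_Z)}^T P_X$. Next I would use the identity $P_X = \Pi^{-T} P_Z$ derived in Section~\ref{problem_setting} from $P_Z = \Pi^T P_X$ and invertibility of $\Pi$. After this substitution, the expression has exactly the form required by Lemma~\ref{lem:Bayes_bound} with $\mathbf{v}_1 = \Pi^{-T} P_Z$ and $\mathbf{v}_2 = \Pi^{-T} Q_Z$, yielding the upper bound $\mathrm{\Lambda}_\text{max} \| \Pi^{-T}(P_Z - Q_Z) \|_1$.

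The remaining step is to pass from $\| \Pi^{-T}(P_Z - Q_Z) \|_1$ to $\| P_Z - Q_Z \|_1$ at the cost of $C_1(\Pi)$. I would bound each component of $\Pi^{-T}(P_Z - Q_Z)$ by $|\Pi^{-T}|_\text{max} \| P_Z - Q_Z \|_1$, using the elementary inequality $|\sum_j A_{ij} v_j| \leq \max_{i,j}|A_{ij}| \cdot \| v \|_1$, and then sum the $|\Acal_X|$ resulting terms. This produces exactly the factor $|\Pi^{-T}|_\text{max} |\Acal_X| = C_1(\Pi)$ and completes the proof.

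The only slightly subtle point is that $\Pi^{-T} Q_Z$ need not be a probability vector, so one must rely on the fact that Lemma~\ref{lem:Bayes_bound} is stated for \emph{arbitrary} vectors in $\mathbb{R}^{|\Acal|}$ rather than just probability vectors. Aside from this, the argument reduces to elementary manipulations; the structure is parallel to the proof of Lemma~\ref{lem:loss_bound}, but simpler because the Hadamard product by $\Pi(\cdot, z)$ and the division by $P_Z(z)$ that appeared there are absent here --- the delay setting uses only the ``flat'' estimator $\Pi^{-T} P_{Z_t|Z^{t-d}}$ rather than the finer $F(\cdot, \Pi, Z_t)$ form, so no outer conditioning on $Z=z$ or tower-rule averaging is needed.
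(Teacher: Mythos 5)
Your proposal is correct and follows essentially the same route as the paper's own proof: rewrite the expected losses as $\lambda_{\hat{X}_B(\cdot)}^T P_X$, substitute $P_X = \Pi^{-T} P_Z$, invoke Lemma~\ref{lem:Bayes_bound} with $\mathbf{v}_1 = \Pi^{-T}P_Z$ and $\mathbf{v}_2 = \Pi^{-T}Q_Z$, and then bound $\lVert \Pi^{-T}(P_Z - Q_Z)\rVert_1$ componentwise to extract the factor $C_1(\Pi)$. Your remark that Lemma~\ref{lem:Bayes_bound} must hold for arbitrary (not necessarily probability) vectors is exactly the right thing to flag, and your comparison with the proof of Lemma~\ref{lem:loss_bound} is accurate.
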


\textbf{Proof of Theorem~\ref{thm:normalizedloss_delay}.} For $d \geq 1$,
    \[\Ebb\left[ \frac{1}{n} \sum_{t=1}^{n}\mathrm{\Lambda}\left(X_t, \hat{X}_B\left( \Pi^{-T}Q_{Z_t|Z^{t-d}} \right)\right) - \frac{1}{n} \sum_{t=1}^{n} \mathrm{\Lambda}\left(X_t, \hat{X}_t^\text{opt}\left(Z^{t-d}\right)\right)\right] \leq \sqrt{2} \mathrm{\Lambda}_\text{max} C_1(\Pi) \sqrt{\frac{d}{n}D_{KL}\left(P_{Z^n}\Vert Q_{Z^n}\right)}, \]
where $C_1(\Pi)=\left|\Pi^{-T}\right|_\text{max} \left| \Acal_X \right|$ is a constant that depends only on $\Pi$ with $\left|\Pi^{-T}\right|_\text{max}$ denoting a maximum component of $\left|\Pi^{-T}\right|$. 

\begin{proof}
    By using the result from Lemma~\ref{lem:loss_bound_2}, we can show that
    \begin{align*}
        \text{LHS}&=\Ebb\left[ \frac{1}{n} \sum_{t=1}^{n}\mathrm{\Lambda}\left(X_t, \hat{X}_B\left( \Pi^{-T}Q_{Z_t|Z^{t-d}} \right)\right) - \frac{1}{n} \sum_{t=1}^{n} \mathrm{\Lambda}\left(X_t, \hat{X}_B\left( \Pi^{-T}P_{Z_t|Z^{t-d}} \right)\right)\right] \\
        &= \frac{1}{n} \sum_{t=1}^{n} \Ebb \left[ \Ebb \left[ \mathrm{\Lambda}\left(X_t, \hat{X}_B\left( \Pi^{-T}Q_{Z_t|Z^{t-d}} \right)\right) - \mathrm{\Lambda}\left(X_t, \hat{X}_B\left( \Pi^{-T}P_{Z_t|Z^{t-d}} \right)\right) \Big{|} Z^{t-d} \right] \right] \\
        &\leq \frac{1}{n} \sum_{t=1}^{n} \Ebb \left[ \mathrm{\Lambda}_\text{max} C_1(\Pi) \left\Vert Q_{Z_t|Z^{t-d}} - P_{Z_t|Z^{t-d}} \right\Vert_1 \right]
    \end{align*}
    Applying Pinsker's inequality and Jensen's inequality as in Theorem~\ref{thm:normalizedloss}, we can get the following inequality.
    \begin{align*}
        &\frac{1}{n} \sum_{t=1}^{n} \Ebb \left[ \mathrm{\Lambda}_\text{max} C_1(\Pi) \left\Vert Q_{Z_t|Z^{t-d}} - P_{Z_t|Z^{t-d}} \right\Vert_1 \right] \\
        &\leq \sqrt{2} \mathrm{\Lambda}_\text{max} C_1(\Pi) \sqrt{\frac{1}{n} \sum_{t=1}^n D \left( P_{Z_t|Z^{t-d}} \Vert Q_{Z_t|Z^{t-d}} | P_{Z^{t-d}} \right)}
    \end{align*}
    Note that
    \begin{align*}
        &\sum_{t=1}^n D \left( P_{Z_t|Z^{t-d}} \Vert Q_{Z_t|Z^{t-d}} | P_{Z^{t-d}} \right) \\
        &\leq \sum_{t=1}^n D \left( P_{Z_{t-d+1}^t|Z^{t-d}} \Big{\Vert} Q_{Z_{t-d+1}^t|Z^{t-d}} \Big{|} P_{Z^{t-d}} \right) \qquad (\because \text{chain rule/data processing for relative entropy})\\
        &=\sum_{t=1}^n \sum_{k=0}^{d-1} D \left( P_{Z_{t-d+1+k}|Z_{t-d+1}^{t-d+k}, Z^{t-d}} \Big{\Vert} Q_{Z_{t-d+1+k}|Z_{t-d+1}^{t-d+k}, Z^{t-d}} \Big{|} P_{Z_{t-d+1}^{t-d+k}, Z^{t-d}} \right) \\
        &= \sum_{t=1}^n \sum_{k=0}^{d-1} D \left( P_{Z_{t-d+1+k}|Z^{t-d+k}} \Big{\Vert} Q_{Z_{t-d+1+k}|Z^{t-d+k}} \Big{|} P_{Z^{t-d+k}} \right).
    \end{align*}
    Let $\alpha_t = D\left(P_{Z_t|Z^{t-1}} \Vert Q_{Z_t|Z^{t-1}} | P_{Z^{t-1}}\right)$. Then
    \begin{align*}
        &\sum_{t=1}^n \sum_{k=0}^{d-1} D \left( P_{Z_{t-d+1+k}|Z^{t-d+k}} \Big{\Vert} Q_{Z_{t-d+1+k}|Z^{t-d+k}} \Big{|} P_{Z^{t-d+k}} \right) \\
        &= \sum_{t=1}^n \sum_{k=0}^{d-1} \alpha_{t-d+1+k} \\
        &\leq d \sum_{t=1}^{n} \alpha_t \\
        &= d D \left( P_{Z^n} \Vert Q_{Z^n} \right).
    \end{align*}
    Thus,
    \begin{align*}
        &\sqrt{2} \mathrm{\Lambda}_\text{max} C_1(\Pi) \sqrt{\frac{1}{n} \sum_{t=1}^n D \left( P_{Z_t|Z^{t-d}} \Vert Q_{Z_t|Z^{t-d}} | P_{Z^{t-d}} \right)} \\
        &\leq \sqrt{2} \mathrm{\Lambda}_\text{max} C_1(\Pi) \sqrt{\frac{d}{n} D \left( P_{Z^n} \Vert Q_{Z^n} \right) } 
    \end{align*}
\end{proof}

\textbf{Proof of Theorem~\ref{thm:normalizedloss_lookahead}.} 
\begin{align*} 
&\Ebb\left[ \frac{1}{n} \sum_{t=1}^{n}\mathrm{\Lambda}\left(X_t, \hat{X}_B^{Q_{Z_t|Z^{t-1},Z^{t+l}_{t+1}}}\left(Z_t\right)\right) - \frac{1}{n} \sum_{t=1}^{n} \mathrm{\Lambda}\left(X_t, \hat{X}_t^\text{opt} \left(Z^{t+l}\right)\right)\right] \\
&\leq \sqrt{2} C_1(\Pi) \mathrm{\Lambda}_\text{max} \sqrt{\frac{l+1}{n}D\left(P_{Z^{n+l}}\Vert Q_{Z^{n+l}}\right)},
\end{align*}
where $C_1(\Pi)=\left|\Pi^{-T}\right|_\text{max} \left| \Acal_X \right|$ is a constant that depends only on $\Pi$ with $\left|\Pi^{-T}\right|_\text{max}$ denoting a maximum component of $\left|\Pi^{-T}\right|$.

\begin{proof}
    \begin{align*}
        &\Ebb\left[ \frac{1}{n} \sum_{t=1}^{n}\mathrm{\Lambda}\left(X_t, \hat{X}_B^{Q_{Z_t|Z^{t-1}, Z_{t+1}^{t+l}}}\left(Z_t\right)\right) - \frac{1}{n} \sum_{t=1}^{n} \mathrm{\Lambda}\left(X_t, \hat{X}_t^\text{opt}\left(Z^{t+l}\right)\right)\right] \\
        &\leq \sqrt{2} C_1(\Pi) \mathrm{\Lambda}_\text{max} \sqrt{\frac{1}{n}\sum_{t=1}^{n} \Ebb \left[ D\left(P_{Z_t|Z^{t-1},Z_{t+1}^{t+l}} \Big{\Vert} Q_{Z_t|Z^{t-1},Z_{t+1}^{t+l}}\right) \right]} \\
        &=\sqrt{2} C_1(\Pi) \mathrm{\Lambda}_\text{max} \sqrt{\frac{1}{n}\sum_{t=1}^{n} D\left(P_{Z_t|Z^{t-1},Z_{t+1}^{t+l}} \Big{\Vert} Q_{Z_t|Z^{t-1},Z_{t+1}^{t+l}} \Big{|} P_{Z^{t-1}, Z_{t+1}^{t+l}}\right) }
    \end{align*}
    Thus it suffices to show
    \[ \sum_{t=1}^{n} D\left(P_{Z_t|Z^{t-1},Z_{t+1}^{t+l}} \Big{\Vert} Q_{Z_t|Z^{t-1},Z_{t+1}^{t+l}} \Big{|} P_{Z^{t-1}, Z_{t+1}^{t+l}}\right) \leq D\left(P_{Z^{n}} \big{\Vert} Q_{Z^{n}} \right) \]
    From the chain rule for relative entropy,
    \[D\left(P_{V|U} \Vert Q_{V|U} | P_U\right) \leq D\left(P_{U,V} \Vert Q_{U,V}\right).\]
    By conditioning on $Z^{t-1}$ and setting $V=Z_t, U=Z_{t+1}^{t+l}$,
    \begin{align*}
        &\sum_{t=1}^{n} D\left(P_{Z_t|Z^{t-1},Z_{t+1}^{t+l}} \Big{\Vert} Q_{Z_t|Z^{t-1},Z_{t+1}^{t+l}} \Big{|} P_{Z^{t-1}, Z_{t+1}^{t+l}}\right) \\
        &\leq \sum_{t=1}^{n} D\left(P_{Z_t^{t+l}|Z^{t-1}} \Big{\Vert} Q_{Z_t^{t+l}|Z^{t-1}} \Big{|} P_{Z^{t-1}}\right) \\
        &= \sum_{t=1}^{n} \sum_{k=0}^{l} D\left(P_{Z_{t+k}|Z^{t+k-1}} \Big{\Vert} Q_{Z_{t+k}|Z^{t+k-1}} \Big{|} P_{Z^{t+k-1}}\right) \qquad (\because \text{Chain rule})
    \end{align*}
    Let $\alpha_t = D\left(P_{Z_t|Z^{t-1}} \Vert Q_{Z_t|Z^{t-1}} | P_{Z^{t-1}}\right)$. Then
    \[ \sum_{t=1}^n \alpha_t = D\left( P_{Z^n} \Vert Q_{Z^n} \right). \]
    Using this property,
    \begin{align*}
        \sum_{t=1}^{n} D\left(P_{Z_t|Z^{t-1}} \Big{\Vert} Q_{Z_t|Z^{t-1}} \Big{|} P_{Z^{t-1}}\right) =
    \end{align*}
    \begin{align*}
        &\sum_{t=1}^{n} \sum_{k=0}^{l} D\left(P_{Z_{t+k}|Z^{t+k-1}} \Big{\Vert} Q_{Z_{t+k}|Z^{t+k-1}} \Big{|} P_{Z^{t+k-1}}\right) \\
        &=\sum_{t=1}^{n} \sum_{k=0}^{l} \alpha_{t+k} \\
        &\leq (l+1) \sum_{t=1}^{n+l} \alpha_t \\
        &=(l+1) D\left( P_{Z^{n+l}} \Vert Q_{Z^{n+l}} \right)
    \end{align*}
\end{proof}

\textbf{Proof of \ref{thm:mutual-information-lower-bounds}}.
   Suppose that $\mathcal{A}_X = \mathcal{A}_{\hat{X}} = \{0, 1, \dots, m-1\}$.
    Also, suppose the loss function is subtractive, i.e., $\mathrm{\Lambda}(x, x') = \rho(x - x')$ for some function $\rho$ with $\rho(0) = 0$, where subtraction is performed modulo $m$.

    Then, for any fixed integer $l$,
    \[\mathbb{E}\left[ \frac{1}{n} \sum_{t=1}^n \mathrm{\Lambda}\left(X_t,  \hat{X}^{opt}\left(Z^{t+l}\right)\right)\right] \geq \phi^{-1}\left( \tfrac{1}{n} H(X^n\lVert Z^{n+l}) \right),\]
     where $\phi(\cdot)$ is defined as
    \[\phi(D) \triangleq \max \left\{ H(U) : U \text{ random variable over } \mathcal{A}_X \text{ s.t. } \mathbb{E}[\rho(U)] \leq D \right\},\]
    and $H(X^n\lVert Z^{n+l})$ is the entropy of $X^n$ causally conditioned on $Z^{n+l}$,
    \[H(X^n\lVert Z^{n+l}) = \sum_{t=1}^n H(X_t|X^{t-1}, Z^{t+l}).\]
\begin{proof}
   $\phi(D)$ is concave and strictly increasing for $0 \leq D \leq \tfrac{1}{m}\sum_{i=0}^{m-1} \phi(i)$.
    As such, for any denoiser $\hat{X}_t(Z^{t+l})$ such that the average expected loss is $\leq \tfrac{1}{m}\sum_{i=0}^{m-1} \phi(i)$,\footnote{$\tfrac{1}{m}\sum_{i=0}^{m-1} \phi(i)$ is the expected loss of a trivial denoiser that chooses $\hat{X}$ uniformly over $\mathcal{A}_X$, so, even in a worst-case scenario, this condition is met for the optimal denoiser.}
    \begin{align*}
        &\phi\left(\mathbb{E} \left[\frac{1}{n}\sum_{t=1}^n \mathrm{\Lambda}\left(X_t, \hat{X}_t\left(Z^{t+l}\right)\right) \right]\right) \\\ 
        &\hspace{5em}=\phi\left(\frac{1}{n}\sum_{t=1}^n \mathbb{E}\left[\rho\left(X_t - \hat{X}_t\left(Z^{t+l}\right)\right)\right] \right) \\
        &\hspace{5em}\geq \frac{1}{n} \sum_{t=1}^n \phi\left(\mathbb{E} \left[\rho\left(X_t - \hat{X}_t\left(Z^{t+l}\right)\right)\right]\right)\qquad(\because\text{concavity of $\phi(\cdot)$; Jensen's inequality}) \\
        &\hspace{5em}= \frac{1}{n} \sum_{t=1}^n  \phi\left(\sum_{z^{t+l} \in \mathcal{A}_X^{t+l}} \Pr\left(Z^{t+l} = z^{t+l}\right) \mathbb{E}\left[\rho\left(X_t - \hat{X}_t\left(z^{t+l}\right)\right)\,\bigg\rvert Z^{t+l} = z^{t+l}\right] \right) \\
        &\hspace{5em}\geq \frac{1}{n} \sum_{t=1}^n  \sum_{z^{t+l} \in \mathcal{A}_X^{t+l}} \Pr\left(Z^{t+l} = z^{t+l}\right) \phi\left(\mathbb{E}\left[\rho\left(X_t - \hat{X}_t\left(z^{t+l}\right)\right)\,\bigg\rvert Z^{t+l} = z^{t+l}\right] \right).\qquad(\because\text{Jensen's inequality)}
    \end{align*}
    By the definition of $\phi(D)$, 
    \[ \phi\left(\mathbb{E}\left[\rho\left(X_t - \hat{X}_t\left(z^{t+l}\right)\right)\,\bigg\rvert Z^{t+l} = z^{t+l}\right] \right) \geq H\left(X_t - \hat{X}\left(z^{t+l}\right)\big|Z^{t+l}=z^{t+l}\right) = H\left(X_t\big|Z^{t+l}=z^{t+l}\right).\]
    As a result,
    \begin{align*}
         \phi\left(\mathbb{E} \left[\frac{1}{n}\sum_{t=1}^n \mathrm{\Lambda}\left(X_t, \hat{X}_t\left(Z^{t+l}\right)\right) \right]\right)
        &\geq \frac{1}{n}\sum_{t=1}^n \sum_{z^{t+l} \in \mathcal{A}_X^{t+l}} \Pr\left(Z^{t+l} = z^{t+l}\right) H\left(X_t\big|Z^{t+l}=z^{t+l}\right) \\
        &=  \frac{1}{n}\sum_{t=1}^n H\left(X_t\big|Z^{t+l}\right) \\
        &\geq \frac{1}{n}\sum_{t=1}^n H\left(X_t\big|Z^{t+l}, X^{t-1}\right)\qquad(\because \text{conditioning reduces entropy}) \\
        &= \tfrac{1}{n} H(X^n \lVert Z^{n+l}).
    \end{align*}
    By the strict monotonicity of $\phi(\cdot)$, $\phi^{-1}$ is well-defined over $[0, \log(M)]$ and, if $\phi(D) \geq \alpha$, then $D \geq \phi^{-1}(\alpha)$.
    $\therefore$,
    \[\mathbb{E} \left[\frac{1}{n}\sum_{t=1}^n \mathrm{\Lambda}\left(X_t, \hat{X}^{opt}_t\left(Z^{t+l}\right)\right) \right] \geq \phi^{-1}\left(\tfrac{1}{n} H(X^n||Z^{n+l})\right).\]
\end{proof}

\begin{lemma} \label{lem:loss_bound_arbitrary}
\[\left|\Ebb \left[ \mathrm{\Lambda} \left(X, \hat{X}_B \left(\Pi^{-T} Q_Z^1\right)\right) \right] - \Ebb \left[ \mathrm{\Lambda} \left(X, \hat{X}_B \left(\Pi^{-T} Q_Z^2\right)\right) \right]\right| \leq \mathrm{\Lambda}_\text{max} C_{1}(\Pi) \left( \left\Vert Q_Z^1 - P_Z \right\Vert_1 + \left\Vert Q_Z^2 - P_Z \right\Vert_1 \right) \]
where $Q_Z^1$ and $Q_Z^2$ are arbitrary PMFs, $P_Z$ is a PMF of $Z$, and $C_1(\Pi)=\left|\Pi^{-T}\right|_\text{max} \left| \Acal_X \right|$ is a constant that depends only on $\Pi$ with $\left|\Pi^{-T}\right|_\text{max}$ denoting a maximum component of $\left|\Pi^{-T}\right|$.
\begin{proof}
    \begin{align*}
        &\left|\Ebb \left[ \mathrm{\Lambda} \left(X, \hat{X}_B \left(\Pi^{-T} Q_Z^1\right)\right) \right] - \Ebb \left[ \mathrm{\Lambda} \left(X, \hat{X}_B \left(\Pi^{-T} Q_Z^2\right)\right) \right]\right| \\
        =& \left| \Ebb \left[ \mathrm{\Lambda} \left(X, \hat{X}_B \left(\Pi^{-T} Q_Z^1\right)\right) \right] - \Ebb \left[ \mathrm{\Lambda} \left(X, \hat{X}_B \left(\Pi^{-T} P_Z\right)\right) \right] +\Ebb \left[ \mathrm{\Lambda} \left(X, \hat{X}_B \left(\Pi^{-T} P_Z\right)\right) \right] - \Ebb \left[ \mathrm{\Lambda} \left(X, \hat{X}_B \left(\Pi^{-T} Q_Z^2\right)\right) \right] \right| \\
        \leq& \left| \Ebb \left[ \mathrm{\Lambda} \left(X, \hat{X}_B \left(\Pi^{-T} Q_Z^1\right)\right) \right] - \Ebb \left[ \mathrm{\Lambda} \left(X, \hat{X}_B \left(\Pi^{-T} P_Z\right)\right) \right] \right| + \left| \Ebb \left[ \mathrm{\Lambda} \left(X, \hat{X}_B \left(\Pi^{-T} P_Z\right)\right) \right] - \Ebb \left[ \mathrm{\Lambda} \left(X, \hat{X}_B \left(\Pi^{-T} Q_Z^2\right)\right) \right] \right| \\
        =& \left( \Ebb \left[ \mathrm{\Lambda} \left(X, \hat{X}_B \left(\Pi^{-T} Q_Z^1\right)\right) \right] - \Ebb \left[ \mathrm{\Lambda} \left(X, \hat{X}_B \left(\Pi^{-T} P_Z\right)\right) \right] \right) \\
        &+ \left( \Ebb \left[ \mathrm{\Lambda} \left(X, \hat{X}_B \left(\Pi^{-T} Q_Z^2\right)\right) \right] - \Ebb \left[ \mathrm{\Lambda} \left(X, \hat{X}_B \left(\Pi^{-T} P_Z\right)\right) \right] \right) \qquad (\because \Pi^{-T} P_Z=P_X, \text{definition of } \hat{X}_B) \\
        \leq& \mathrm{\Lambda}_\text{max} |\Acal_X| \left| \Pi^{-T} \right|_\text{max} \left( \left\Vert P_Z - Q_Z^1 \right\Vert_1 + \left\Vert P_Z - Q_Z^2 \right\Vert_1\right)
    \end{align*}
\end{proof}
\end{lemma}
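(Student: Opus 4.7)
\textbf{Proof Proposal for Lemma~\ref{lem:loss_bound_arbitrary}.} The plan is to bound the left-hand side via a triangle inequality that inserts the true marginal $P_Z$ as an intermediary between $Q_Z^1$ and $Q_Z^2$, and then invoke Lemma~\ref{lem:loss_bound_2} on each of the resulting two pieces. Concretely, I would first add and subtract the quantity $\Ebb[\mathrm{\Lambda}(X, \hat{X}_B(\Pi^{-T} P_Z))]$ inside the absolute value on the LHS, then apply the triangle inequality to split the single absolute value into a sum of two absolute values, one comparing $Q_Z^1$ to $P_Z$ and the other comparing $Q_Z^2$ to $P_Z$.

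Next, I would remove the absolute values by noting that $\Pi^{-T} P_Z = P_X$, so $\hat{X}_B(\Pi^{-T} P_Z) = \hat{X}_B(P_X)$ is the unconditional Bayes estimator of $X$ under its true marginal, which by definition achieves the minimum expected loss $\lambda_{\hat{x}}^\top P_X$ over $\hat{x}$. Hence for every $i \in \{1,2\}$,
\[
\Ebb\!\left[\mathrm{\Lambda}\!\left(X, \hat{X}_B\!\left(\Pi^{-T} Q_Z^i\right)\right)\right] - \Ebb\!\left[\mathrm{\Lambda}\!\left(X, \hat{X}_B\!\left(\Pi^{-T} P_Z\right)\right)\right] \geq 0,
\]
so the two absolute values can each be dropped. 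With the absolute values gone, Lemma~\ref{lem:loss_bound_2} applies directly to each term, bounding them by $\mathrm{\Lambda}_\text{max} C_1(\Pi) \|Q_Z^i - P_Z\|_1$ respectively. Summing the two bounds yields the claimed inequality.

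There is no substantive obstacle here; the only subtle point is justifying the removal of the absolute values, which requires explicitly invoking the optimality of the Bayes response under $P_X = \Pi^{-T} P_Z$. Once that observation is in place, the entire argument is two lines: triangle inequality followed by two applications of Lemma~\ref{lem:loss_bound_2}.
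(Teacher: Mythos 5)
Your proposal is correct and follows essentially the same route as the paper's proof: insert $\Ebb[\mathrm{\Lambda}(X,\hat{X}_B(\Pi^{-T}P_Z))]$ via add-and-subtract, apply the triangle inequality, drop the absolute values by the Bayes optimality of $\hat{X}_B(P_X)$ with $P_X=\Pi^{-T}P_Z$, and bound each resulting term with Lemma~\ref{lem:loss_bound_2}. No gaps.
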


\begin{lemma} \label{lem:empirical_distr_bound}
Let $Q$ be a PMF over $\Acal=\{ 1, 2, ..., A \}$ and $S_i, i\in \{1, 2, ..., n\}$ be iid random variables having probability distribution $Q$. Let $\delta_{S_i} \in \mathbb{R}^A$ be a one-hot vector which has element 1 on $S_i$-th index. Then, 
\[\Ebb\left[ \left\Vert Q - \frac{1}{n} \sum_{i=1}^n \delta_{S_i} \right\Vert_1 \right] \leq 2^{|\Acal|} \sqrt{\frac{\pi}{2}} \frac{1}{\sqrt{n}}\] 
\begin{proof}
    By Theorem 2.1 in \cite{verdu2003inequalities},
    \[\Pr \left( \left\Vert Q - \frac{1}{n} \sum_{i=1}^n \delta_{S_i} \right\Vert_1 \geq \epsilon \right) \leq 2^{|\Acal|} e^{-n\frac{\epsilon^2}{2}}, \forall \epsilon > 0\]
    Note that the exponent in the bound is originally $-n\varphi(\pi_Q)\epsilon^2/4$. For simplicity, we used the fact that $\varphi(\pi_Q)$ is lower bounded by 2. To derive a tighter bound, we can use the original form containing $\varphi(\pi_Q)$. \\
    By using the tail integral formula for expectation,
    \begin{align*}
        \Ebb \left[ \left\Vert Q - \frac{1}{n} \sum_{i=1}^n \delta_{S_i} \right\Vert_1 \right] &= \int_0^\infty \Pr \left( \left\Vert Q - \frac{1}{n} \sum_{i=1}^n \delta_{S_i} \right\Vert_1 \geq \epsilon \right) d\epsilon\\
        &\leq 2^{|\Acal|} \int_0^\infty e^{-n\frac{\epsilon^2}{2}} d\epsilon \\
        &=2^{|\Acal|} \sqrt{\frac{\pi}{2}} \frac{1}{\sqrt{n}}
    \end{align*}
\end{proof}
\end{lemma}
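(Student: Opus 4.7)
The plan is to combine a concentration inequality for the $\ell_1$ distance between the empirical and true PMFs with the standard tail-integral representation of expectation. The right-hand side of the bound has the unmistakable shape of $(\text{union bound prefactor}) \times \int_0^\infty e^{-n\epsilon^2/2} d\epsilon$, which immediately suggests this strategy and tells me which concentration bound to aim for.

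First, I would invoke the Bretagnolle--Huber--Carol inequality (which is essentially Theorem~2.1 in the Verd\'u inequalities reference already cited in this paper), namely
\[\Pr\left( \left\| Q - \tfrac{1}{n}\sum_{i=1}^n \delta_{S_i} \right\|_1 \geq \epsilon \right) \leq 2^{|\Acal|} e^{-n \epsilon^2/2}, \qquad \forall \epsilon > 0.\]
If I had to reprove this from scratch, I would use the variational identity $\|P-Q\|_1 = 2 \max_{B \subseteq \Acal} |P(B)-Q(B)|$, apply Hoeffding's inequality to the Bernoulli average $\frac{1}{n}\sum_i \indic{S_i \in B}$ for each of the $2^{|\Acal|}$ subsets $B$, and take a union bound; the factor $2$ from the variational identity gets absorbed into the exponent to give $e^{-n\epsilon^2/2}$.

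Next, I would use the tail-integral formula $\Ebb[Y] = \int_0^\infty \Pr(Y \geq \epsilon)\, d\epsilon$ for the non-negative random variable $Y = \|Q - \frac{1}{n}\sum_i \delta_{S_i}\|_1$, giving
\[\Ebb\left[ \left\|Q - \tfrac{1}{n}\sum_{i=1}^n \delta_{S_i}\right\|_1 \right] \leq \int_0^\infty 2^{|\Acal|} e^{-n\epsilon^2/2}\, d\epsilon = 2^{|\Acal|} \sqrt{\frac{\pi}{2n}},\]
where the Gaussian integral $\int_0^\infty e^{-n\epsilon^2/2} d\epsilon = \sqrt{\pi/(2n)}$ produces exactly the claimed dependence on $n$.

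There is no real obstacle here once the concentration inequality is in hand; the only subtlety is being willing to accept the rather loose $2^{|\Acal|}$ prefactor, which is fine for the paper's purposes since Theorem~\ref{thm:montecarlo} only needs the $1/\sqrt{M}$ rate in the Monte~Carlo sample count (the alphabet size is fixed and enters only through a constant $C_3(\Pi)$). Sharper bounds of order $\sqrt{|\Acal|/n}$ are known, but would not improve the qualitative form of Theorem~\ref{thm:montecarlo}.
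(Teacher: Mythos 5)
Your proposal is correct and follows essentially the same route as the paper: both invoke the $\ell_1$ concentration bound $\Pr\left(\left\Vert Q - \frac{1}{n}\sum_{i=1}^n \delta_{S_i}\right\Vert_1 \geq \epsilon\right) \leq 2^{|\Acal|}e^{-n\epsilon^2/2}$ (Theorem 2.1 of the Verd\'u reference), then apply the tail-integral formula and evaluate the Gaussian integral to obtain $2^{|\Acal|}\sqrt{\pi/(2n)}$. Your additional sketch of how to rederive the concentration inequality via the variational identity, Hoeffding, and a union bound is a fine supplement but not needed.
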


\textbf{Proof of Theorem~\ref{thm:montecarlo}.} 
    Let $\hat{Q}_{Z_t|Z^{t-d}}^M$ be an empirical distribution calculated by sampling the Monte Carlo realization $M$ times. 
\begin{align*}
    &\Ebb\left[ \frac{1}{n} \sum_{t=1}^{n}\mathrm{\Lambda}\left(X_t, \hat{X}_B\left(\Pi^{-T}\hat{Q}_{Z_t|Z^{t-d}}^M \right)\right) - \frac{1}{n} \sum_{t=1}^{n} \mathrm{\Lambda}\left(X_t, \hat{X}_t^{\text{opt}}\left(Z^{t-d}\right)\right)\right] \\
    &\leq \mathrm{\Lambda}_\text{max} \left( C_{2}(\Pi) \sqrt{\frac{d}{n}D\left(P_{Z^n}\Vert Q_{Z^n}\right)} + C_{3}(\Pi) \frac{1}{\sqrt{M}} \right),
\end{align*}
where $C_2(\Pi)$ and $C_3(\Pi)$ are constants that depend only on $\Pi$, defined as follows.
\[C_2(\Pi) = 3 \sqrt{2} \left|\Pi^{-T}\right|_\text{max} \left| \Acal_X \right| \]
\[C_3(\Pi) = \left|\Pi^{-T}\right|_\text{max} \left| \Acal_X \right| 2^{|\Acal_Z|} \sqrt{\frac{\pi}{2}} \]

\begin{proof}
    \begin{align*}
        LHS =&\,\Ebb\left[ \frac{1}{n} \sum_{t=1}^{n}\mathrm{\Lambda}\left(X_t, \hat{X}_B\left(\Pi^{-T}\hat{Q}_{Z_t|Z^{t-d}}^M \right)\right) - \frac{1}{n} \sum_{t=1}^{n} \mathrm{\Lambda}\left(X_t, \hat{X}_B\left(\Pi^{-T}P_{Z_t|Z^{t-d}} \right)\right)\right] \\
        =&\, \Ebb \left[ \frac{1}{n} \sum_{t=1}^{n}\mathrm{\Lambda}\left(X_t, \hat{X}_B\left(\Pi^{-T}\hat{Q}_{Z_t|Z^{t-d}}^M \right)\right) - \frac{1}{n} \sum_{t=1}^{n}\mathrm{\Lambda}\left(X_t, \hat{X}_B\left(\Pi^{-T}Q_{Z_t|Z^{t-d}} \right)\right) \right] \\
        &+ \Ebb \left[ \frac{1}{n} \sum_{t=1}^{n}\mathrm{\Lambda}\left(X_t, \hat{X}_B\left(\Pi^{-T}Q_{Z_t|Z^{t-d}} \right)\right) - \frac{1}{n} \sum_{t=1}^{n}\mathrm{\Lambda}\left(X_t, \hat{X}_B\left(\Pi^{-T}P_{Z_t|Z^{t-d}} \right)\right) \right] \\
    \end{align*}
    We can get the bound of the second term by Theorem~\ref{thm:normalizedloss_delay} as follows.
    \[\Ebb \left[ \frac{1}{n} \sum_{t=1}^{n}\mathrm{\Lambda}\left(X_t, \hat{X}_B\left(\Pi^{-T}Q_{Z_t|Z^{t-d}} \right)\right) - \frac{1}{n} \sum_{t=1}^{n}\mathrm{\Lambda}\left(X_t, \hat{X}_B\left(\Pi^{-T}P_{Z_t|Z^{t-d}} \right)\right) \right] \leq \sqrt{2} \mathrm{\Lambda}_\text{max} C_1(\Pi) \sqrt{\frac{d}{n}D\left(P_{Z^n}\Vert Q_{Z^n}\right)}\]
    For the first term,
    \begin{align*}
        &\Ebb \left[ \frac{1}{n} \sum_{t=1}^{n}\mathrm{\Lambda}\left(X_t, \hat{X}_B\left(\Pi^{-T}\hat{Q}_{Z_t|Z^{t-d}}^M \right)\right) - \frac{1}{n} \sum_{t=1}^{n}\mathrm{\Lambda}\left(X_t, \hat{X}_B\left(\Pi^{-T}Q_{Z_t|Z^{t-d}} \right)\right) \right] \\
        &=\frac{1}{n} \sum_{t=1}^{n} \Ebb \left[ \Ebb \left[ \Ebb \left[ \mathrm{\Lambda}\left(X_t, \hat{X}_B\left(\Pi^{-T}\hat{Q}_{Z_t|Z^{t-d}}^M \right)\right) - \mathrm{\Lambda}\left(X_t, \hat{X}_B\left(\Pi^{-T}Q_{Z_t|Z^{t-d}} \right)\right) \Big{|} Z^{t-d}, M \text{realizations} \right] \Big{|} Z^{t-d} \right] \right] \\
        &\leq \frac{1}{n} \sum_{t=1}^{n} \Ebb \left[ \Ebb \left[ \mathrm{\Lambda}_\text{max} C_1(\Pi) \left( \left\Vert \hat{Q}_{Z_t|Z^{t-d}}^M - P_{Z_t|Z^{t-d}} \right\Vert_1 + \left\Vert Q_{Z_t|Z^{t-d}} - P_{Z_t|Z^{t-d}} \right\Vert_1 \right) \Big{|} Z^{t-d} \right] \right] \qquad (\because \text{Lemma~\ref{lem:loss_bound_arbitrary}} )\\
        &\leq \frac{1}{n} \mathrm{\Lambda}_\text{max} C_1(\Pi) \sum_{t=1}^{n} \Ebb \left[ \Ebb \left[ \left\Vert \hat{Q}_{Z_t|Z^{t-d}}^M - Q_{Z_t|Z^{t-d}} \right\Vert_1 + 2 \left\Vert Q_{Z_t|Z^{t-d}} - P_{Z_t|Z^{t-d}} \right\Vert_1 \Big{|} Z^{t-d} \right] \right] \qquad (\because \text{triangle inequality}) \\
        &\leq \frac{1}{n} \mathrm{\Lambda}_\text{max} C_1(\Pi) \sum_{t=1}^{n} \Ebb \left[ \Ebb \left[ \left\Vert \hat{Q}_{Z_t|Z^{t-d}}^M - Q_{Z_t|Z^{t-d}} \right\Vert_1 \Big{|} Z^{t-d} \right] \right] + 2 \sqrt{2} \mathrm{\Lambda}_\text{max} C_1(\Pi) \sqrt{\frac{d}{n}D\left(P_{Z^n}\Vert Q_{Z^n}\right)} \qquad (\because \text{Theorem~\ref{thm:normalizedloss_delay}})\\
        &\leq \frac{1}{n} \mathrm{\Lambda}_\text{max} C_1(\Pi) \sum_{t=1}^{n} \Ebb \left[ 2^{|\Acal_Z|} \sqrt{\frac{\pi}{2}} \frac{1}{\sqrt{M}} \right] + 2 \sqrt{2} \mathrm{\Lambda}_\text{max} C_1(\Pi) \sqrt{\frac{d}{n}D\left(P_{Z^n}\Vert Q_{Z^n}\right)} \qquad (\because \text{Lemma~\ref{lem:empirical_distr_bound}})\\
        &= \mathrm{\Lambda}_\text{max} C_1(\Pi) 2^{|\Acal_Z|} \sqrt{\frac{\pi}{2}} \frac{1}{\sqrt{M}} + 2 \sqrt{2} \mathrm{\Lambda}_\text{max} C_1(\Pi) \sqrt{\frac{d}{n}D\left(P_{Z^n}\Vert Q_{Z^n}\right)}
    \end{align*}
    Thus,
    \begin{align*}
    &\Ebb\left[ \frac{1}{n} \sum_{t=1}^{n}\mathrm{\Lambda}\left(X_t, \hat{X}_B\left(\Pi^{-T}\hat{Q}_{Z_t|Z^{t-d}}^M \right)\right) - \frac{1}{n} \sum_{t=1}^{n} \mathrm{\Lambda}\left(X_t, \hat{X}_B\left(\Pi^{-T}P_{Z_t|Z^{t-d}} \right)\right)\right] \\
    &\leq \mathrm{\Lambda}_\text{max} C_1(\Pi) 2^{|\Acal_Z|} \sqrt{\frac{\pi}{2}} \frac{1}{\sqrt{M}} + 3 \sqrt{2} \mathrm{\Lambda}_\text{max} C_1(\Pi) \sqrt{\frac{d}{n}D\left(P_{Z^n}\Vert Q_{Z^n}\right)}
    \end{align*}
\end{proof}

\bibliographystyle{IEEEtran}
\bibliography{main.bib}

\begin{thebibliography}{10}
\providecommand{\url}[1]{#1}
\csname url@samestyle\endcsname
\providecommand{\newblock}{\relax}
\providecommand{\bibinfo}[2]{#2}
\providecommand{\BIBentrySTDinterwordspacing}{\spaceskip=0pt\relax}
\providecommand{\BIBentryALTinterwordstretchfactor}{4}
\providecommand{\BIBentryALTinterwordspacing}{\spaceskip=\fontdimen2\font plus
\BIBentryALTinterwordstretchfactor\fontdimen3\font minus
  \fontdimen4\font\relax}
\providecommand{\BIBforeignlanguage}[2]{{%
\expandafter\ifx\csname l@#1\endcsname\relax
\typeout{** WARNING: IEEEtran.bst: No hyphenation pattern has been}%
\typeout{** loaded for the language `#1'. Using the pattern for}%
\typeout{** the default language instead.}%
\else
\language=\csname l@#1\endcsname
\fi
#2}}
\providecommand{\BIBdecl}{\relax}
\BIBdecl

\bibitem{feder1992universal}
M.~Feder, N.~Merhav, and M.~Gutman, ``Universal prediction of individual
  sequences,'' \emph{IEEE Transactions on Information Theory}, vol.~38, no.~4,
  pp. 1258--1270, 1992.

\bibitem{merhav1998universal}
N.~Merhav and M.~Feder, ``Universal prediction,'' \emph{IEEE Transactions on
  Information Theory}, vol.~44, no.~6, pp. 2124--2147, 1998.

\bibitem{weinberger2002delayed}
M.~Weinberger and E.~Ordentlich, ``On delayed prediction of individual
  sequences,'' \emph{IEEE Transactions on Information Theory}, vol.~48, no.~7,
  pp. 1959--1976, 2002.

\bibitem{weissman2003competitive}
T.~Weissman and N.~Merhav, ``On competitive prediction and its relation to
  rate-distortion theory,'' \emph{IEEE Transactions on Information Theory},
  vol.~49, no.~12, pp. 3185--3194, 2003.

\bibitem{cesabianchi1999prediction}
\BIBentryALTinterwordspacing
N.~Cesa-Bianchi and G.~Lugosi, ``{On prediction of individual sequences},''
  \emph{The Annals of Statistics}, vol.~27, no.~6, pp. 1865 -- 1895, 1999.
  [Online]. Available: \url{https://doi.org/10.1214/aos/1017939242}
\BIBentrySTDinterwordspacing

\bibitem{weissman2001universal}
T.~Weissman and N.~Merhav, ``Universal prediction of individual binary
  sequences in the presence of noise,'' \emph{IEEE Transactions on Information
  Theory}, vol.~47, no.~6, pp. 2151--2173, 2001.

\bibitem{weissman2004universal}
\BIBentryALTinterwordspacing
------, ``{Universal prediction of random binary sequences in a noisy
  environment},'' \emph{The Annals of Applied Probability}, vol.~14, no.~1, pp.
  54 -- 89, 2004. [Online]. Available:
  \url{https://doi.org/10.1214/aoap/1075828047}
\BIBentrySTDinterwordspacing

\bibitem{weissman2005universal}
T.~Weissman, E.~Ordentlich, G.~Seroussi, S.~Verdu, and M.~Weinberger,
  ``Universal discrete denoising: known channel,'' \emph{IEEE Transactions on
  Information Theory}, vol.~51, no.~1, pp. 5--28, 2005.

\bibitem{1715526}
J.~Yu and S.~Verdu, ``Schemes for bidirectional modeling of discrete stationary
  sources,'' \emph{IEEE Transactions on Information Theory}, vol.~52, no.~11,
  pp. 4789--4807, 2006.

\bibitem{1405281}
E.~Ordentlich, M.~Weinberger, and T.~Weissman, ``Efficient pruning of
  bi-directional context trees with applications to universal denoising and
  compression,'' in \emph{Information Theory Workshop}, 2004, pp. 94--98.

\bibitem{weissman2007Filtering}
T.~Weissman, E.~Ordentlich, M.~J. Weinberger, A.~Somekh-Baruch, and N.~Merhav,
  ``Universal filtering via prediction,'' \emph{IEEE Transactions on
  Information Theory}, vol.~53, no.~4, pp. 1253--1264, 2007.

\bibitem{moon2009universal}
T.~Moon and T.~Weissman, ``Universal {FIR} {MMSE} filtering,'' \emph{IEEE
  Transactions on Signal Processing}, vol.~57, no.~3, pp. 1068--1083, 2009.

\bibitem{4439859}
------, ``Universal filtering via hidden {M}arkov modeling,'' \emph{IEEE
  Transactions on Information Theory}, vol.~54, no.~2, pp. 692--708, 2008.

\bibitem{6555871}
J.~Jiao, H.~H. Permuter, L.~Zhao, Y.-H. Kim, and T.~Weissman, ``Universal
  estimation of directed information,'' \emph{IEEE Transactions on Information
  Theory}, vol.~59, no.~10, pp. 6220--6242, 2013.

\bibitem{4655473}
S.~Verdu and T.~Weissman, ``The information lost in erasures,'' \emph{IEEE
  Transactions on Information Theory}, vol.~54, no.~11, pp. 5030--5058, 2008.

\bibitem{rissanen1984universal}
J.~Rissanen, ``Universal coding, information, prediction, and estimation,''
  \emph{IEEE Transactions on Information Theory}, vol.~30, no.~4, pp. 629--636,
  1984.

\bibitem{merhav1995strong}
N.~Merhav and M.~Feder, ``A strong version of the redundancy-capacity theorem
  of universal coding,'' \emph{IEEE Transactions on Information Theory},
  vol.~41, no.~3, pp. 714--722, 1995.

\bibitem{382012}
F.~Willems, Y.~Shtarkov, and T.~Tjalkens, ``The context-tree weighting method:
  basic properties,'' \emph{IEEE Transactions on Information Theory}, vol.~41,
  no.~3, pp. 653--664, 1995.

\bibitem{1096090}
J.~Cleary and I.~Witten, ``Data compression using adaptive coding and partial
  string matching,'' \emph{IEEE Transactions on Communications}, vol.~32,
  no.~4, pp. 396--402, 1984.

\bibitem{7339702}
T.~L.~T. Nguyen, F.~Septier, G.~W. Peters, and Y.~Delignon, ``Efficient
  sequential {Monte}-{Carlo} samplers for {Bayesian} inference,'' \emph{IEEE
  Transactions on Signal Processing}, vol.~64, no.~5, pp. 1305--1319, 2016.

\bibitem{1550195}
A.~Doucet and X.~Wang, ``Monte {Carlo} methods for signal processing: a review
  in the statistical signal processing context,'' \emph{IEEE Signal Processing
  Magazine}, vol.~22, no.~6, pp. 152--170, 2005.

\bibitem{originalLZ78paper}
J.~Ziv and A.~Lempel, ``Compression of individual sequences via variable-rate
  coding,'' \emph{IEEE Transactions on Information Theory}, vol.~24, no.~5, pp.
  530--536, 1978.

\bibitem{sagan2024familylz78baseduniversalsequential}
\BIBentryALTinterwordspacing
N.~Sagan and T.~Weissman, ``A family of {LZ78-based} universal sequential
  probability assignments,'' 2024. [Online]. Available:
  \url{https://arxiv.org/abs/2410.06589}
\BIBentrySTDinterwordspacing

\bibitem{Begleiter_2004}
\BIBentryALTinterwordspacing
R.~Begleiter, R.~El-Yaniv, and G.~Yona, ``On prediction using variable order
  {Markov} models,'' \emph{Journal of Artificial Intelligence Research},
  vol.~22, p. 385–421, Dec. 2004. [Online]. Available:
  \url{http://dx.doi.org/10.1613/jair.1491}
\BIBentrySTDinterwordspacing

\bibitem{ephraim2002hidden}
Y.~Ephraim and N.~Merhav, ``Hidden {Markov} processes,'' \emph{Information
  Theory, IEEE Transactions on}, vol.~48, pp. 1518 -- 1569, 07 2002.

\bibitem{sivaramakrishnan2008universal}
K.~Sivaramakrishnan and T.~Weissman, ``Universal denoising of discrete-time
  continuous-amplitude signals,'' \emph{IEEE Transactions on Information
  Theory}, vol.~54, 08 2008.

\bibitem{sivaramakrishnan2009context}
------, ``A context quantization approach to universal denoising,'' \emph{IEEE
  Transactions on Signal Processing}, vol.~57, no.~6, pp. 2110--2129, 2009.

\bibitem{dembo2005universal}
A.~Dembo and T.~Weissman, ``Universal denoising for the finite-input
  general-output channel,'' \emph{IEEE Transactions on Information Theory},
  vol.~51, no.~4, pp. 1507--1517, 2005.

\bibitem{zhang2005discrete}
R.~Zhang and T.~Weissman, ``{Discrete Denoising for Channels with Memory},''
  \emph{Communications in Information \& Systems}, vol.~5, no.~2, pp. 257 --
  288, 2005.

\bibitem{verdu2003inequalities}
T.~Weissman, E.~Ordentlich, G.~Seroussi, S.~Verdu, and M.~Weinberger,
  ``Inequalities for the l1 deviation of the empirical distribution,'' 08 2003.

\end{thebibliography}

\end{document}